\newtheorem{theorem}{Theorem}
\newtheorem{claim}{Claim}
\newtheorem{lemma}[theorem]{Lemma}
\newtheorem{proposition}[theorem]{Proposition}
\newtheorem{definition}[theorem]{Definition}
\newtheorem{corollary}[theorem]{Corollary}
\newtheorem{problem}[theorem]{Problem}
\newcommand{\shirley}[1]{\textcolor{magenta}{Shirley: #1}}
\newcommand{\fc}[3]{\ensuremath{f_{#1}^{(#2,#3)}}}
\newcommand{\fs}[2]{\ensuremath{f_{#1}^{#2}}}
\DeclareMathOperator{\Ad}{Ad}
\newcommand{\po}{\ensuremath{\overrightarrow{\chi}}\xspace}
\newcommand{\pname}[1]{\textsc{#1}}
\newtcolorbox{ProblemBox}[2][]{
	colframe=black!20!white,
	coltitle=black,
	arc=1.5mm,
	boxrule=.15mm,
	colbacktitle=white,
	colback=white,
	enhanced,
	adjusted title=flush left,
	attach boxed title to top left={yshift=-3mm,xshift=3mm},
	boxed title style={colframe=white,righttitle=-3mm,lefttitle=-3mm},
	title=#2,#1
}
\newcommand{\problemDef}[4]{
	\begin{ProblemBox}[label={#4},nameref={#1}]{\pname{#1}}
		\begin{tabularx}{\textwidth}{l l} 
			\textnormal{Input:} & \quad \textnormal{#2} \\ 
			\textnormal{Question:} & \quad \textnormal{#3}
		\end{tabularx}
	\end{ProblemBox}
}
\newcommand{\paraProblemDef}[5]{
	\begin{ProblemBox}[label={#5},nameref={#1}]{\pname{#1}}
		\begin{tabularx}{\textwidth}{l l} 
			\textnormal{Input:} & \quad \textnormal{#2} \\ 
			\textnormal{Parameter:} & \quad \textnormal{#3}. \\ 
			\textnormal{Question:} & \quad \textnormal{#4}
		\end{tabularx}
	\end{ProblemBox}
}
\newclass{\Hard}{hard\ }
\newclass{\Hness}{hardness\ }
\newclass{\Complete}{complete\ }
\newclass{\Cness}{completeness\ }
\newcommand{\NPc}{\NP\text{-}\Complete}
\newenvironment{claimproof}[1]{\par\noindent\underline{Proof:}\space#1}{\hfill $\blacksquare$}
\DeclareMathOperator{\pot}{pot}
\begin{document}

\title{On the proper orientation number of chordal graphs\thanks{This work was supported by CNPq projects 437841/2018-9, 304478/2018-0, 311679/2018-8 and 421660/2016-3, CAPES/STIC-AmSud 88881.197438/2018-01 and FAPEMIG APQ-02592-16.}}
\author[1]{Julio Araujo\thanks{Corresponding author.}}
\author[1]{Alexandre Cezar}
\author[2]{Carlos V.G.C. Lima}
\author[3]{Vinicius F. dos Santos}
\author[1]{Ana Silva}
\affil[1]{ParGO, Departamento de Matem\'atica, Universidade Federal do Cear\'a, Brazil\par \texttt{\{julio, anasilva\}@mat.ufc.br} and \texttt{alexcezar@alu.ufc.br}}
\affil[2]{Centro de Ci\^encias e Tecnologia, Universidade Federal do Cariri, Brazil\par \texttt{vinicius.lima@ufca.edu.br}}
\affil[3]{Departamento de Ciência da Computação, Universidade Federal de Minas Gerais, Brazil\par\texttt{viniciussantos@dcc.ufmg.br}}
\maketitle
\begin{abstract}
An orientation $D$ of a graph $G=(V,E)$ is a digraph obtained from~$G$ by replacing each edge by exactly one of the two possible arcs with the same end vertices.
For each~$v \in V(G)$, the indegree of~$v$ in~$D$, denoted by~$d^-_D(v)$, is the number of arcs with head~$v$ in~$D$.
An orientation~$D$ of~$G$ is proper if~$d^-_D(u)\neq d^-_D(v)$, for all~$uv\in E(G)$.
An orientation with maximum indegree at most~$k$ is called a $k$-orientation. 
The proper orientation number of~$G$, denoted by~$\po(G)$, is the minimum integer~$k$ such that~$G$ admits a proper $k$-orientation.
We prove that determining whether~$\po(G) \leq k$ is \NPc for chordal graphs of bounded diameter, but can be solved in linear-time in the subclass of quasi-threshold graphs. When parameterizing by $k$, we argue that this problem is \FPT~for chordal graphs and argue that no polynomial kernel exists, unless \NP~$\subseteq\coNP/\poly$. We present a better kernel to the subclass of split graphs and a linear kernel to the class of cobipartite graphs.

Concerning bounds, we prove tight upper bounds for subclasses of block graphs. We also present new families of trees having proper orientation number at most 2 and at most 3. Actually, we prove a general bound stating that any graph $G$ having no adjacent vertices of degree at least $c+1$ have proper orientation number at most $c$. This implies new classes of (outer)planar graphs with bounded proper orientation number. We also prove that maximal outerplanar graphs $G$ whose weak-dual is a path satisfy $\po(G)\leq 13$. Finally, we present simple  bounds to the classes of chordal claw-free graphs and cographs.
\end{abstract}

\definecolor{dark-red}{rgb}{0.7,0.15,0.15}
\definecolor{light-gray}{gray}{0.3}
\hypersetup{citecolor = dark-red, linkcolor = light-gray, breaklinks = true}


\section{Introduction}
\label{sec:intro}

For basic notions and notations on Graph Theory and (Parameterized) Computational Complexity, the reader is referred to~\cite{BoMu08, garey_johnson, Cyganetal2015, FLSZ2019}.
All graphs in this work are considered to be finite and simple. 

An {\it orientation}~$D$ of a graph $G=(V,E)$ is a digraph obtained from~$G$ by replacing each edge by exactly one of the two possible arcs with the same endvertices.
For each $v \in V(G)$, the \emph{indegree} of~$v$ in~$D$, denoted  by~$d^-_D(v)$, is the number of arcs with head~$v$ in~$D$.
We use the notation $d^-(v)$ when the orientation $D$ is clear from the context.
An orientation $D$ of $G$ is \emph{proper} if $d^-(u)\neq d^-(v)$, for all $uv\in E(G)$.
An orientation with maximum indegree at most~$k$ is called a \emph{$k$-orientation}. 
The \emph{proper orientation number} of a graph $G$, denoted by~$\po(G)$, is the minimum integer~$k$ such that~$G$ admits a proper $k$-orientation. We say that a proper orientation $D$ of a graph $G$ is \emph{optimal} if~$\po(G) = \max_{v\in V(G)} d_D^-(v)$.

This graph parameter was introduced by Ahadi~and~Dehghan~\cite{Ahadi2013}. 
They observed that it is well defined for any graph~$G$, since one can always inductively obtain a proper $\Delta(G)$-orientation by removing a vertex of maximum degree.
Note that every proper $k$-orientation $D$ of a graph~$G$ defines a proper $(k+1)$-coloring $c:V(G)\to [k+1]$ such that $c(v) = d^-_D(v)+1$, for each $v\in V(D)$.
Hence, we have:
\begin{equation}
    \omega(G) - 1 \ \leq \ \chi(G)-1 \ \leq \ \po(G) \ \leq \ \Delta(G).
\end{equation}

With respect to Computational Complexity, given a graph $G$ and a positive integer $k$, deciding whether ${\po(G)\leq k}$ is \NP-complete, even if $G$ is the line graph of a regular graph~\cite{Ahadi2013}; or if $G$ is a planar subcubic graph~\cite{ACD+15}; or if $G$ is planar bipartite with maximum degree~5~\cite{ACD+15}. Observe that the above equation together with the latter two results imply that the problem is para-\NP-complete even on planar graphs when parameterized by $k$. Our first contribution in \cref{sec:chordal} is to prove that this problem is still \NP-complete even if restricted to chordal graphs. Our reduction also implies that no polynomial kernel to the problem restricted to chordal graphs can exist, unless~{\NP~$\subseteq$ \coNP/\poly}.

In~\cite{ALSSS19}, the authors generalize the notion of proper orientations for edge-weighted graphs and present some complexity results for trees and graphs of bounded treewidth.
In particular, they show an algorithm that decides whether $\po(G)\leq k$ when~$G$ has treewidth at most $t$ whose complexity is $\mathcal{O}(2^{t^2}\cdot k^{3t}\cdot t \cdot n)$.
By using  this previous result, in \cref{sec:parameterized} we argue that this problem admits a fixed parameter tractable (\FPT) algorithm parameterized by $k$ when $G$ is chordal. This is a contrast with the general para-\NP-completeness of the problem. This also implies on the existence of kernel for the problem, even though, as previously said, such a kernel cannot be of polynomial size unless~{\NP~$\subseteq$ \coNP/\poly}.  We then present better kernels for split graphs and cobipartite graphs, although the general computational complexity for these classes remain open. 
In \cref{sec:quasi_threshold}, we also show that if~$G$ is a quasi-threshold graph~$G$, then~$\po(G)=\omega(G)-1$ and thus one can compute~$\po(G)$ in linear time, since $G$ is chordal and $\omega(G)$ can be computed in linear time by a Lex-BFS~\cite{rose1976algorithmic}. 

Concerning bounds, in \cref{sec:split}, we first prove that if $G$ is a split graph, then $\po(G)\leq 2\omega(G)-2$ and that this bound is tight, thus answering Problem 9(a) presented in~\cite{ACD+15}. We would like to emphasize that Problem 9(b)~\cite{ACD+15} has a trivial answer as, for any cobipartite graph $G$, we have that $\po(G)\leq \Delta(G)\leq 2\omega(G)-1$. 
Also in~\cite{ACD+15} the authors prove that $\po(T)\leq 4$ for a tree $T$ and that this bound is tight. A simpler proof of this fact was provided by~\cite{Knoxetal2017}. This result has been generalized to the weighted case in~\cite{ALSSS19}. In~\cite{AHLS16}, the authors prove that the more general class of cacti graphs $G$ satisfy $\po(G)\leq 7$ and prove that this bound is tight. Their tight example is actually more general. We say that a block graph $G$ is \emph{$k$-uniform} if every clique of $G$ has size $k$. In~\cite{AHLS16}, the authors prove that, for every integer $k \geq 2$, there exists a $k$-uniform block graph $G$ such that~$\po(G)\geq 3k-2$. Note that such example is tight for trees ($k=2$) and cacti ($k=3$). 
In \cref{sec:block}, we prove that their bound is tight, i.e., we prove that, for every $k\ge 2$, if $G$ is a $k$-uniform block graph, then $\po(G)\leq 3k-2$. Note that, in particular, this implies the bound of~\cite{ACD+15} for trees.  We also show that if $G$ is a $k$-uniform block graph in which each maximal clique has at most two cut-vertices and $k\ge 3$, then $\po(G)\leq \omega(G)+1$; moreover, this bound is tight. 
Note that for $k=2$, this subclass of block graphs corresponds to the ones of trees and there exist trees with proper orientation number equal to~4~\cite{ACD+15}.

Now, observe that the previously mentioned algorithm on graphs of bounded treewidth~\cite{ALSSS19}, combined with the fact that cacti have treewidth at most~2, imply that there is a polynomial-time algorithm to compute $\po(G)$, when $G$ is a cactus graph. However, there is so far no characterization even of trees $T$ satisfying $\po(T) = i$, for $i\in\{2,3,4\}$. Some partial results are presented in~\cite{ACD+15}. Still in \cref{sec:block}, we provide a general bound stating that if $G$ has no adjacent vertices of degree at least~$c+1$, then $\po(G)\leq c$. This result implies that if $T$ is a tree with no adjacent vertices of degree at least $i+1$, then $\po(T)\leq i$, for $i\in\{2,3\}$.

A major question regarding the proper orientation number is whether there exists a positive integer~$c$ such that $\po(G)\leq c$, for any (outer)planar graph $G$. This seems a challenging problem, and has been previously asked in~\cite{ACD+15}. 
Some upper bounds on subclasses of planar graphs have been provided in~\cite{AHLS16, Knoxetal2017}. Recently, Ai et al.~\cite{Aietal2020} presented upper bounds for subclasses of triangle-free outerplanar graphs. The previous result also implies that (outer)planar graphs having no adjacent vertices of large degree have bounded proper orientation number. In \cref{sec:maximal_outerplanar}, we also prove that a maximal outerplane graph~$G$ whose weak-dual graph is a path satisfies~$\po(G)\leq 13$. Observe that, since outerplanar graphs have treewidth at most~2, the proper orientation number of this subclass of outerplanar graphs can also be computed in polynomial time by the algorithm presented in~\cite{ALSSS19}.

In~\cite{AHLS16}, the authors observe that claw-free planar graphs have bounded maximum degree, and thus bounded proper orientation number. In \cref{sec:claw_free}, we observe that  claw-free chordal graphs $G$ also has bounded maximum degree, and as a consequence we get that $\po(G)\leq \Delta(G)\leq 3\omega(G)$.

It is well known that the class of quasi-threshold graphs is the intersection of cographs and interval graphs. To the more general class of cographs, we present a simple tight upper bound in \cref{sec:cographs}. Note that this class is not a subclass of chordal graphs. Finally, in \cref{sec:further} we present several related open problems for further research.

For completeness, we emphasize that other closely related weighted parameters have also been defined and studied in the literature~\cite{Ahadi2017, DH2020}.

In the remainder of the text, we use the following notation. Given disjoint vertex sets $A$ and $B$ of a graph~$G$, we denote by~$[A, B]$ the set of edges between $A$ and~$B$ in~$G$.
The neighborhood of a vertex~$v$ in a subgraph~$H$ of~$G$ is denoted by~$N_H(v)$, that is, the set of neighbors of~$v$ that are in~$V(H)$.
The degree of~$v$ in~$H$, is denoted by~$d_H(v)$. 
We omit the subscript when it is clear from the context.

\section{Computational Complexity}
\label{sec:CompCompl}

\subsection{NP-completeness of \pname{Proper Orientation} on chordal graphs with bounded diameter}
\label{sec:chordal}

A \emph{chordal graph} is a graph in which every cycle of length at least~4 has a \emph{chord}, that is, an edge joining two non-consecutive vertices of the cycle.
Let us consider the computational complexity of the following decision problem.

\problemDef{Proper Orientation}{A graph $G$ and a positive integer $k$.}{Is $\po(G) \leq k$?}{pro:PO}

We prove that \nameref{pro:PO} is \NP-complete for chordal graphs of bounded diameter. 
This result is a reduction from the well known \pname{Vertex Cover} problem on cubic graphs, which is \NP-complete~\cite{garey_johnson}.
A \emph{vertex cover} of a graph $G$ is a vertex set~$D \in V(G)$ such that every edge of $G$ has an endvertex in~$D$. 
The \emph{vertex cover number} $\tau(G)$ is the size of a minimum vertex cover of~$G$.


\problemDef{Vertex Cover}{A graph $G$ and a positive integer $k$.}{Is $\tau(G) \leq k$?}{pro:VC}

We first present a set of gadgets that we use in our reduction. 
We start by providing a structure that, given a vertex $v$ and some $i\in[k]$, prevents $v$ from having indegree $i$ in every proper $k$-orientation $D$ of the input graph~$G$.
For this, we build a gadget such that $v$ is adjacent to a vertex $u$, and we ensure that $u$ has indegree $i$ and that the edge~$uv$ must be oriented towards $u$ in any proper $k$-orientation of $G$. 
In order to build such a structure, we define below another gadget $S(k)$ containing vertices $v_0,\dotsc, v_k$, and such that $v_j$ must have indegree $j$ in every proper $k$-orientation of $G$, for each $j \in \{0\} \cup [k]$. 
\cref{fig:Sik} depicts $S(4)$.

\begin{figure}[b!]
	\centering
	\begin{subfigure}[b]{0.35\textwidth}
	    \centering
    	\begin{tikzpicture}
    	
    	    \tikzset{node/.style={draw, fill=white, circle, thick, fill = white, auto=left, inner sep=2pt}}
            \tikzset{rectangle/.append style={draw,rounded corners,dashed, minimum height=0.4cm}}
            
    	    \node[node] (v0) at (0,0) [fill=gray, label=left:$v_0$] {};
    	    \node[node] (n01) at (1,0) {};
    	    \node[node] (n02) at (2,0) {};
		    \node[node] (n03) at (3,0) {};
	    	\node[node] (n04) at (4,0) {};
    		\node[rectangle] (r0) at (2.05,0) [minimum width=4.5cm, label=right:$K^{0}$] {};
    	
		    \node[rectangle] (r1) at (1.55,-1) [minimum width=3.5cm, label=right:$K^{1}$] {};
		
		    \node[node] (n10)  at (0,-1) [fill=gray, label=left:$v_1$] {};
	    	\node[node] (n11)  at (1,-1) {};
    		\node[node] (n12)  at (2,-1) {};
		    \node[node] (n11)  at (3,-1) {};
		
		    \node[rectangle] (r2) at (1.05,-2) [minimum width=2.5cm, label=right:$K^{2}$] {};
		
		    \node[node] (n20)  at (0,-2) [fill=gray, label=left:$v_2$] {};
	    	\node[node] (n21)  at (1,-2) {};
    		\node[node] (n22)  at (2,-2) {};
    		
    		\node[rectangle] (r3) at (0.55,-3) [minimum width=1.5cm, label=right:$K^{3}$] {};
		
		    \node[node] (n30)  at (0,-3) [fill=gray, label=left:$v_3$] {};
		    \node[node] (n30)  at (1,-3) {};
		    
		    \node[node] (n40)  at (0,-4) [fill=gray, label=left:$v_4$] {};
		
		    \node[rectangle] (r3) at (0,-2) [minimum width=4.5cm, rotate=90, label=left:$K$] {};
		    
	    \end{tikzpicture}
	    \caption{Graph $S(4)$.}
	    \label{fig:Sik}
	\end{subfigure} \qquad
	~
	\begin{subfigure}[b]{0.35\textwidth}
	    \centering
		\begin{tikzpicture}[level/.style={sibling distance=20mm/#1}]
		
		    \tikzset{node/.style={draw, fill=white, circle, thick, fill = white, auto=left,
            inner sep=2pt}}
            \tikzset{>={Latex[width=2mm,length=2mm]}}

            \node[node] (u) at (0,0.3) [label=above:$u$] {};
            \node[node] (u') at (2,0.3) [label=above:$u'$] {};
	
            \node[node] (v0) at (1,-1.2) [fill=gray, label=below:$v_0$] {};
            \node[node] (v1) at (1.5,-1.2) [fill=gray, label=below:$v_1$] {};
	        \node[node] (vi1) at (2.5,-1.2) [fill=gray, label=below:$v_{i-1}$] {};
	        \node[node] (vi) at (3,-1.2) [fill=gray, label=below:$v_{i}$] {};
	        \node[node] (vk) at (4,-1.2) [fill=gray, label=below:$v_{k}$] {};
	    
	        \path[->] (u) edge node {} (u')
                (v1) edge node {} (u')
                (vi1) edge node {} (u');
                
            \path (v1) -- (vi1) node (x) [midway] {$\dotsc$};
            \path (vi) -- (vk) node (y) [midway] {$\dotsc$};
            \node (l) at (2.4,-2.7) [label=$S(k)$] {};
	    
	        \draw (2.4,-1.5) ellipse (54pt and 28pt) {};
	        
        \end{tikzpicture}
		\caption{Operation to forbid color $i$ on $u$.}
		\label{fig:Fik}
	\end{subfigure}
	
	\caption{In \cref{fig:Sik}, each dashed rectangle represents a clique.
	For simplicity, we omitted the edges from $v_j$ to the cliques~$K^\ell$, for every $j\in\{0\}\cup [2]$ and every $j+1\leq \ell\leq 3$.
	In \cref{fig:Fik} $u'$ must have indegree $i$, since it is adjacent to~$v_1, v_2, \dotsc, v_{i-1}$ of~$S(k)$, and $uu'$ must be oriented towards $u'$.}
	\label{fig:}
\end{figure}

\begin{definition}
\label{def:Sk}
Given a non-negative integer $k$, let $S(k)$ be constructed as follows. First add to $S(k)$ a clique~$K$ on~$k+1$ vertices $\{v_0,\ldots, v_k\}$.
Then, for each $j\in\{0\}\cup [k-1]$, add to $S(k)$ a clique $K^j$ on~$(k+1-j)$ vertices such that the only common vertex of $K^j$ with other vertices in $S(k)$ is $v_j$.
Finally, add edges from~$v_j$ to all vertices of $K^\ell$, for every $j\in\{0\}\cup [k-2]$ and every $\ell\in\{j+1,\cdots, k-1\}$. 
\end{definition}

\begin{proposition}\label{prop:Skchordal}
Given a non-negative integer $k$, let $S(k)$ the graph constructed as in Definition~\ref{def:Sk}. Then $S(k)$ is a chordal graph.
\end{proposition}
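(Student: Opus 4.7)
The plan is to exhibit a perfect elimination ordering of $S(k)$, since a graph is chordal if and only if such an ordering exists. For convenience, set $W^j := K^j \setminus \{v_j\}$ for each $j \in \{0,\ldots,k-1\}$, so that $K, W^0, W^1, \ldots, W^{k-1}$ partition $V(S(k))$, with $|W^j| = k-j$.

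First I would catalog the neighborhood of each vertex. Any $w \in W^j$ is adjacent exactly to: the other vertices of the clique $K^j$, namely $\{v_j\} \cup (W^j \setminus \{w\})$; and, by the extra edges of the definition, every $v_i$ with $i < j$ (since such $i$ and $j$ satisfy $i \le j-1 \le k-2$ and $j \in \{i+1,\ldots,k-1\}$, the required index conditions hold automatically). In particular, $w$ is non-adjacent to $W^{j'}$ for $j' \neq j$ and to $v_i$ for $i > j$.

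Next, I would take the elimination ordering that lists the vertices of $W^0, W^1, \ldots, W^{k-1}$ first (in any internal order), followed by $v_0, v_1, \ldots, v_k$. When a vertex $w \in W^j$ is eliminated, its later neighbors lie in $\{v_0, \ldots, v_j\} \cup (W^j \setminus \{w\})$, and this set is a clique: $\{v_0,\ldots,v_j\} \subseteq K$ and $\{v_j\} \cup (W^j \setminus \{w\}) \subseteq K^j$ are cliques, and each $v_i$ with $i<j$ is adjacent to every vertex of $W^j$ by the extra edges. For any $v_j$, its later neighbors form a subset of $\{v_{j+1}, \ldots, v_k\} \subseteq K$, again a clique. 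Hence the ordering is a PEO and $S(k)$ is chordal. The only thing to be careful about is the index-bound verification for the extra edges between $v_i$ and $W^j$, but as noted the required bounds hold with slack, so no real obstacle arises.
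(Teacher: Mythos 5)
Your proof is correct and rests on the same key observation as the paper's: every vertex $w \in K^j\setminus\{v_j\}$ has neighborhood $(K^j\setminus\{w\})\cup\{v_0,\ldots,v_{j-1}\}$, which is a clique by the extra edges of the construction. The only difference is organizational --- the paper removes one such vertex from each clique (together with $v_k$) and inducts on $k$, whereas you eliminate all of $\bigcup_j \bigl(K^j\setminus\{v_j\}\bigr)$ at once and finish with the clique $K$, which is marginally more direct and avoids the induction entirely.
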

\begin{proof}
By induction on $k$. Clearly $S(0)$ is chordal since it consists of a single vertex, $v_0$. Now, observe that if we remove $v_k$ from $K$, and exactly one vertex $x_j$ of $K^j\setminus \{v_j\}$ for each $j\in\{0\}\cup[k-1]$, then we obtain graph $S(k-1)$. By induction hypothesis, let $\pi_{k-1}$ be a perfect elimination ordering of $S(k-1)$. We claim that each vertex in $X = \{x_0,\cdots,x_{k-1},v_k\}$ is a simplicial vertex in $S(k)$. Indeed, because $N(v_k) = K$ and $N(x_0) = K^0$, we get that $v_k$ and $x_0$ are simplicial. Now, consider $j\in \{1,\cdots, k-1\}$, and note that $N(x_j) = K^j\cup \{v_0,\cdots,v_{j-1}\}$. By construction, we know that $v_{j'}x$ is an edge of $S(k)$ for every $x\in K^j$ and every $0\le j'<j$. It follows that $N(x_j)$ is indeed a clique. 
We then get that the ordering obtained from $\pi_{k-1}$ by adding $X$ at the beginning is a perfect elimination ordering of $S(k)$.
\end{proof}


\begin{proposition}\label{prop:Sk}
For a positive integer $k$, let $G$ be a graph such that $S(k)$ is an induced subgraph of $G$ and let $K$ be the clique $\{v_0, \dotsc, v_k\}$, as presented in \cref{def:Sk}.
Then, in any proper $k$-orientation $D$ of $G$, we have that $d^-_D(v_j) = j$, for each $v_j \in K$.
Moreover, for every edge $uv \in E(G)$ such that $v\in S(k)$ and $u \in V(G) \setminus V(S(k))$, we have that $uv$ must be oriented towards $u$.
\end{proposition}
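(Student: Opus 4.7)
The plan is to exploit the $(k+1)$-uniform clique structure of $S(k)$ through a short double-counting argument combined with a simple induction. First I would identify the maximal cliques of $S(k)$ as $K$ itself together with the sets $C_j := K^j \cup \{v_0, \ldots, v_{j-1}\}$ for $j \in \{0, 1, \ldots, k-1\}$ (with $C_0 = K^0$); a direct inspection of \cref{def:Sk} shows that each has size exactly $k+1$ and that no vertex outside it is adjacent to all of its members. Since $D$ is a proper $k$-orientation, indegrees within any clique are pairwise distinct and lie in $\{0, 1, \ldots, k\}$, so within each of these cliques of size $k+1$ they form exactly the set $\{0, 1, \ldots, k\}$ and therefore sum to $k(k+1)/2$.

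The heart of the argument is a clique-by-clique local-to-global identity. For each maximal clique $C$ and each $v \in C$, let $d^-_C(v)$ denote the number of arcs of $D$ with head $v$ and tail inside $C$. On one hand $\sum_{v \in C} d^-_C(v) = |E(C)| = \binom{k+1}{2}$; on the other, by the previous paragraph $\sum_{v \in C} d^-_D(v) = \binom{k+1}{2}$ as well. Since $d^-_C(v) \leq d^-_D(v)$ trivially, equality of sums forces $d^-_C(v) = d^-_D(v)$ for every $v \in C$, i.e., every in-neighbor of $v$ in $D$ already lies inside $C$. Applying this with $C = K$, every in-neighbor of $v_i$ lies in $K$; applying it with $C = C_i$ for $i \leq k-1$, every in-neighbor of $v_i$ also lies in $C_i$. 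Intersecting, every in-neighbor of $v_i$ lies in $K \cap C_i = \{v_0, \ldots, v_{i-1}\}$, whence $d^-_D(v_i) \leq i$ for each $i \leq k-1$.

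Since $(d^-_D(v_0), \ldots, d^-_D(v_k))$ is a permutation of $(0, 1, \ldots, k)$ satisfying $d^-_D(v_i) \leq i$ for every $i \leq k-1$, a short induction on $i$ forces $d^-_D(v_j) = j$ for every $j \in \{0, \ldots, k\}$, establishing the first part. For the \emph{moreover} statement, the same clique-counting identity applied with $C = C_j$ yields $d^-_{C_j}(w) = d^-_D(w)$ for each $w \in K^j \setminus \{v_j\}$: all of $w$'s in-neighbors already lie in $C_j \subseteq V(S(k))$, so no arc from outside $S(k)$ enters $w$. Combined with the analogous conclusion for every $v_i$, every edge $uv \in E(G)$ with $v \in V(S(k))$ and $u \notin V(S(k))$ must be oriented $v \to u$. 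The only slightly delicate point is the initial identification of the maximal cliques and their size, but this follows routinely from \cref{def:Sk}; the rest of the argument is clean double-counting followed by a one-line induction.
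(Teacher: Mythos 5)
Your proof is correct and follows essentially the same route as the paper: the paper's proof also rests on the observation that a $(k{+}1)$-clique in a proper $k$-orientation must realize all indegrees $0,\dots,k$ and hence absorb no arcs from outside, and then applies this to the same nested cliques $K$ and $K^j\cup\{v_0,\dots,v_{j-1}\}$. The only difference is presentational: you make the double-counting identity explicit (the paper merely asserts the consequence) and finish with a permutation argument on $(d^-(v_0),\dots,d^-(v_k))$ rather than determining the indegrees one at a time by induction.
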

\begin{proof}
Consider a proper $k$-orientation $D$ of $G$. One should first observe that, if $K'$ is any clique on~$k+1$ vertices in $G$, then we get that $D$ restricted to $K'$ must be transitive and we must have exactly one vertex in $K'$ with indegree $j$, for every $j\in\{0,\dotsc, k\}$. 
This implies that any edge $uv$ between a vertex $v\in K'$ and a vertex $u\notin K'$ must be oriented towards $u$. 
Now, observe that~$v_0$ is the intersection of two cliques $K$ and $K^0$ of cardinality $k+1$ in $S(k)$ (see \cref{fig:Sik}). 
We then get that in any proper $k$-orientation $D$ of $G$ we must have $d^-_D(v_0) = 0$.

Since $v_0$ is adjacent to all vertices in $K^1$, which is a clique of size $k$ by \cref{def:Sk}, we get that the vertices in~$K^1$ must have indegree from $1$ to $k$, i.e. there is exactly one vertex in~$K^1$ with indegree $j$, for each $j\in [k]$. 
Moreover, since $v_1\in K$, then if~$u\in K^1\setminus \{v_1\}$, we have that $uv_1$ is oriented towards $u$.
Consequently, $d^-_D(v_1) = 1$.
Note that the orientation of $K^1$ is also transitive and every edge $uv \in E(G)$, such that~$v\in K^1$ and $u \in V(G) \setminus (K^1\cup\{v_0\})$, must be oriented towards $u$.

Applying this argument inductively, we deduce that $d^-_D(v_j) = j$, for each $j\in \{2,\dotsc, k\}$, and any edge~$uv$ with $v\in S(k)$ and $u\notin S(k)$ must be oriented towards $u$.
\end{proof}

\cref{prop:Sk} shows that, in any proper $k$-orientation of a graph $G$ having $S(k)$ as a subgraph and having a vertex $u \in V(G) \setminus V(S(k))$ that is adjacent to $v_j$, for some $j \in \{0\} \cup [k]$, the indegree of $u$ must be distinct from~$j$.
Moreover we also have that the edge $uv_j$ must be oriented towards $u$. 
This allows us to be able to forbid certain indegrees for $u$, but with the disadvantage of increasing the indegree of $u$. 
%
In order to avoid increasing the indegree, we construct the following additional gadget (See \cref{fig:Fik}).

\begin{definition}
\label{def:Fi}
Given a positive integer $k$ and $i \in [k] \setminus\{1\}$, let $F(i,k)$ be the graph obtained from $S(k)$ by adding a vertex $u'$, called the \emph{head} of $F(i,k)$, such that $u'$ is adjacent to the vertices $v_1,\dotsc, v_{i-1}$ in the clique $K$ of $S(k)$.
\end{definition}

Observe that, since $F(i,k)$ is obtained from $S(k)$ by adding a simplicial vertex, it follows by Proposition~\ref{prop:Skchordal} that $F(i,k)$ is also a chordal graph. 
We say that a graph $G$ has a \emph{pendant $F(i,k)$ at $u \in V(G)$} if $G$ has $F(i,k)$ as an induced subgraph and the only edge between $V(F(i,k))$ and $ V(G)\setminus V(F(i,k))$ is~$uu'$, where $u'$ is the head of $F(i,k)$.

\begin{proposition}
\label{prop:Fik}
Let $G$ be a graph having a pendant $F(i,k)$ at some $u \in V(G)$, with $2\leq i \leq k$.
Then, every proper $k$-orientation $D$ of $G$ is such that $d^-_D(u') = i$ and $uu'$ is oriented towards $u'$.
\end{proposition}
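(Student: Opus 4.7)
The plan is to apply Proposition~\ref{prop:Sk} to the copy of $S(k)$ inside $F(i,k) \subseteq G$, and then use the edge between $u'$ and $v_{i-1}$ together with the fact that $D$ is proper to force the orientation of $uu'$.

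First, I would observe that $u'$ lies in $V(G)\setminus V(S(k))$, since by construction $u'$ is adjacent only to $v_1,\dotsc,v_{i-1}$ inside $F(i,k)$ and is not a vertex of $S(k)$. Applying Proposition~\ref{prop:Sk} to $G$ and its induced subgraph $S(k)$, we conclude that $d^-_D(v_j)=j$ for every $v_j$ in the clique $K$, and moreover every edge between $u'$ and a vertex of $S(k)$ must be oriented towards $u'$. In particular, each of the $i-1$ edges $v_1u',\,v_2u',\dotsc,v_{i-1}u'$ is oriented towards $u'$, which already gives $d^-_D(u') \geq i-1$.

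Next, since $F(i,k)$ is a pendant subgraph at $u$, the only neighbor of $u'$ outside $V(S(k))$ is $u$. Hence $u'$ has exactly $i$ neighbors in $G$, namely $v_1,\dotsc,v_{i-1}$ and $u$, and so $d^-_D(u') \in \{i-1,i\}$. Suppose for contradiction that $uu'$ is oriented towards $u$. Then $d^-_D(u')=i-1$. But by Proposition~\ref{prop:Sk} we also have $d^-_D(v_{i-1})=i-1$, and $v_{i-1}u'$ is an edge of $G$, contradicting the fact that $D$ is a proper orientation. Therefore $uu'$ must be oriented towards $u'$, which yields $d^-_D(u')=i$, as desired.

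I do not foresee any significant obstacle: the argument is a direct consequence of Proposition~\ref{prop:Sk} together with a single parity-style contradiction against the neighbor $v_{i-1}$ in $K$, and the hypothesis $i\geq 2$ is exactly what guarantees that $v_{i-1}$ exists as a neighbor of $u'$ to provide this conflict.
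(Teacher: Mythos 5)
Your proof is correct and follows essentially the same route as the paper: invoke Proposition~\ref{prop:Sk} to force the $i-1$ edges from $v_1,\dots,v_{i-1}$ towards $u'$, then use the conflict with $d^-_D(v_{i-1})=i-1$ to force $uu'$ towards $u'$. The paper states this more tersely, but the argument is identical.
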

\begin{proof}
Let $D$ be any proper $k$-orientation of $G$. By \cref{prop:Sk}, we have that the edges $v_ju'$ of $F(i,k)$ (see \cref{fig:Fik}) must be oriented towards $u'$, for every $j \in [i-1]$, and thus $d^-_D(u')\geq i-1$. 
Since $d^-_D(v_{i-1}) = i-1$, we deduce that the edge $uu'$ must be oriented towards $u'$, which implies that $d^-_D(u')=d_G(u')=i$.
\end{proof}

Now observe that \cref{prop:Fik} allows us to forbid indegree $d\in\{2,\ldots,k\}$ to a vertex $u$ having a pendant~$F(d,k)$ attached to it.
Thus, the above construction enables us to associate a list assignment~$L: V(G)~\to 2^{\{0\}\cup [k]}$ of indegrees such that $u$ must have indegree in~$L(u)$ in a proper $k$-orientation of~$G$, for every $u\in V(G)$. 
Observe that, since $k \ge 2$, \cref{prop:Fik} cannot be used to forbid the values 0 and 1 from appearing in the list of a vertex. 

We are now ready to present our reduction. Recall that a \emph{split graph} is a graph whose vertex set can be partitioned into a clique and an independent set. Notice that a split graph is a subclass of chordal graphs.

\begin{figure}[b!]
	\centering
    \begin{tikzpicture}
    	
    	\tikzset{node/.style={draw, fill=white, circle, thick, fill = white, auto=left, inner sep=2pt}}
    	\tikzset{>={Latex[width=2mm,length=2mm]}}
	
        \node[node] (u) at (1,.75) {$v$};
        \node[node] (w1) at (1,2.5) {};
        \node[node] (w2) at (1,-1) {};
        
        \draw (u) -- (w1) {};
        \draw (w2) to [bend right] (w1) {};
        \path (u) -- (w2) node (r') [midway,rotate=90] {$\dots$};
        \draw (u) -- (r') {};
        \draw (r') -- (w2) {};
        
        \draw[dashed] (1,1) ellipse (26pt and 70pt) {};
        \node (K) at (1,-1.5) [label=below:$\mathcal{K}$] {};
	    
        \node[node] (I2) at (3.5,.75) {$u$};
        \node[node] (I1) at (3.5,2.5) {};
        \node[node] (I3) at (3.5,-1) {};
        
        \draw (I2) -- (w1) {};
        \path (I2) -- (I3) node (r'') [midway,rotate=90] {$\dots$};
        
        \draw[dashed] (3.5,1) ellipse (26pt and 70pt) {};
        \node (I) at (3.5,-1.5) [label=below:$\mathcal{I}$] {};
	    
	    \node[node] (u'1) at (-.5,3) {};
	    
        \node[node] (v01) at (-2,4) [fill=gray, label=left:$v_0$] {};
        \node[node] (v11) at (-2,3.5) [fill=gray, label=left:$v_1$] {};
	    \node[node] (vk11) at (-2,2.75) [fill=gray, label=left:$v_{k-1}$] {};
	    \node[node] (vk'1) at (-2,2) [fill=gray, label=left:$v_{k'}$] {};
	    
	    \path[->] (u) edge node {} (u'1)
            (v11) edge node {} (u'1)
            (vk11) edge node {} (u'1);
                
        \path (v11) -- (vk11) node (x1) [midway,rotate=90] {$\dots$};
        \path (vk11) -- (vk'1) node (y1) [midway,rotate=90] {$\dots$};
        \node (Fkk') at (-3,3) [label=left:$F(k{,\,}k')$] {};
	    
	    \draw (-2.2,3) ellipse (26pt and 44pt) {};
	    
        
        \node[node] (u'2) at (-.5,-1.5) {};
        
        \node[node] (v02) at (-2,-.5) [fill=gray, label=left:$v_0$] {};
        \node[node] (v12) at (-2,-1) [fill=gray, label=left:$v_1$] {};
	    \node[node] (vk22) at (-2,-1.75) [fill=gray, label=left:$v_{k}$] {};
	    \node[node] (vk'2) at (-2,-2.5) [fill=gray, label=left:$v_{k'}$] {};
	    
	    \path[->] (u) edge node {} (u'2)
            (v12) edge node {} (u'2)
            (vk22) edge node {} (u'2);
                
        \path (v12) -- (vk22) node (x2) [midway,rotate=90] {$\dots$};
        \path (vk22) -- (vk'2) node (y2) [midway,rotate=90] {$\dots$};
        \node (Fk3k') at (-3,-1.5) [label=left:$F(k+1{,\,}k')$] {};
	    
	    \draw (-2.2,-1.5) ellipse (26pt and 44pt) {};
        
	    \node[node] (u'I) at (5,3) {};
	    
        \node[node] (vI0) at (6.5,4) [fill=gray, label=right:$v_0$] {};
        \node[node] (vI1) at (6.5,3.5) [fill=gray, label=right:$v_1$] {};
	    \node[node] (vIk2) at (6.5,2.75) [fill=gray, label=right:$v_{k-2}$] {};
	    \node[node] (vk'I) at (6.5,2) [fill=gray, label=right:$v_{k'}$] {};
	    
	    \path[->] (I2) edge node {} (u'I)
            (vI1) edge node {} (u'I)
            (vIk2) edge node {} (u'I);
                
        \path (vI1) -- (vIk2) node (x1) [midway,rotate=90] {$\dots$};
        \path (vIk2) -- (vk'I) node (y1) [midway,rotate=90] {$\dots$};
        \node (Fk1k') at (9.8,2) [label=left:$F(k-1{,\,}k')$] {};
	    
	    \draw (6.7,3) ellipse (26pt and 44pt) {};

        \node[node] (s1) at (5.3,0.9) [label=right:$s(Z_1)$] {};
        \node[node] (s2) at (5.3,-1.8) [label=right:$s(Z_{k-1})$] {};
        \node[rotate=90] (Zr) at (5.45,-0.4) {$\dots$};
        
        \path[rotate=15] (s1) {} edge [in=-50,out=-130,loop,min distance=18mm] node[label=92.5:$K_{k'}$] {} (s1);
        \path[rotate=-15] (s1) {} edge [in=50,out=130,loop,min distance=18mm] node[label=-92.5:$K_{k'}$] {} (s1);

        \path[rotate=15] (s2) {} edge [in=-50,out=-130,loop,min distance=18mm] node[label=92.5:$K_{k'}$] {} (s2);
        \path[rotate=-15] (s2) {} edge [in=50,out=130,loop,min distance=18mm] node[label=-92.5:$K_{k'}$] {} (s2);
        
        \path[->] (s1) edge node {} (I2)
            (s2) edge node {} (I2);

	\end{tikzpicture}
	\caption{Graph $G'$ in the reduction of \cref{thm:triviallyPG}.
	The remaining gadgets in the vertices of $\mathcal{K}$ and $\mathcal{I}$, as well as the edges from $\mathcal{K}$ to $\mathcal{I}$, are omitted for simplicity.}
	\label{fig:reduction}
\end{figure}
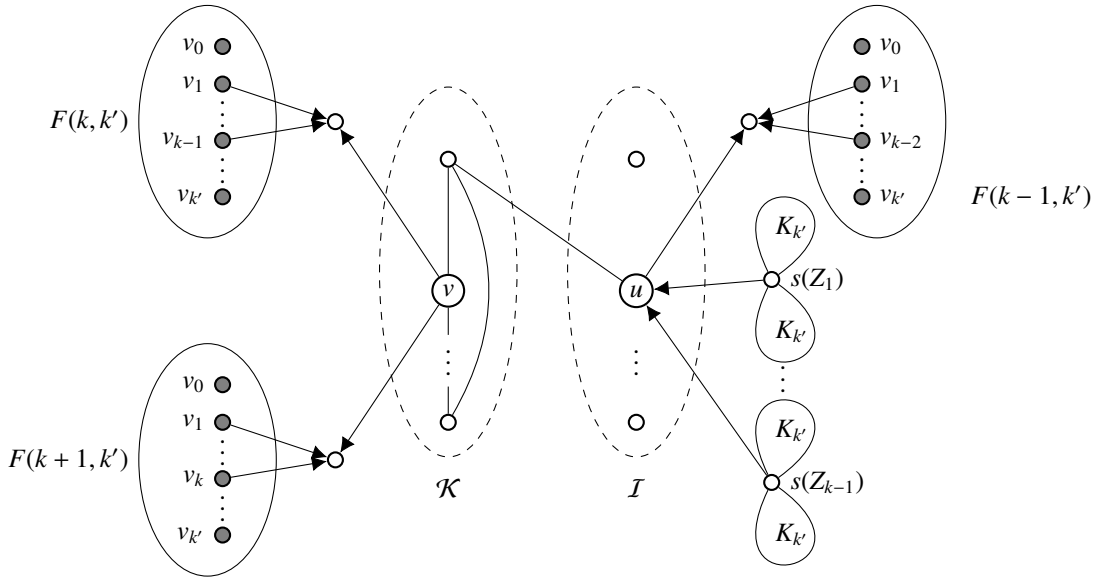

\begin{theorem}\label{thm:NP-chordal}
\textsc{\nameref{pro:PO}} is \NP-complete on chordal graphs of bounded diameter.
\end{theorem}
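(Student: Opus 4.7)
The plan is to prove \NP-hardness by a polynomial-time reduction from \pname{Vertex Cover} on cubic graphs, yielding the construction sketched in \cref{fig:reduction}. Given a cubic graph $G$ on $n$ vertices and an integer $k$, I would first build a split skeleton consisting of a clique $\mathcal{K}=\{x_v:v\in V(G)\}$ and an independent set $\mathcal{I}=\{y_e:e\in E(G)\}$, together with all edges $x_vy_e$ such that $v$ is an endpoint of $e$ in $G$. Choosing a parameter $k'$ polynomial in $n$ and $k$, I would then attach to selected vertices of $\mathcal{K}\cup\mathcal{I}$ pendant copies of $F(i,k')$ for suitable values of $i$, as in \cref{def:Fi}, and, as shown on the right of \cref{fig:reduction}, attach to each vertex $y_e\in\mathcal{I}$ some auxiliary vertices $s(Z_1),\dotsc,s(Z_{k-1})$, each sitting inside two copies of $K_{k'}$. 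By \cref{prop:Fik}, each pendant $F(i,k')$ forbids the indegree $i$ at its anchor without contributing to its degree, while by \cref{prop:Sk} each $s(Z_j)$ is forced to send an arc into $\mathcal{I}$. Taken together, these gadgets equip every vertex $z$ of the skeleton with a list $L(z)\subseteq\{0\}\cup[k']$ of admissible indegrees, and add a fixed "offset" of $k-1$ to the indegree of each $y_e$.

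The lists would be calibrated so that, in any proper $k'$-orientation of $G'$, (i) the indegrees of the vertices of $\mathcal{K}$ split into a "low" block (vertices receiving no arcs from $\mathcal{I}$) and a "high" block (vertices absorbing their three incidence edges), with the latter having size at most $k$, and (ii) each $y_e\in\mathcal{I}$ has bipartite indegree at least~$1$. Setting $D:=\{v: x_v\text{ is in the high block}\}$, condition~(ii) says that for each edge $e=uv$ of $G$ at least one of $u,v$ lies in $D$, which is exactly the vertex cover condition, and condition~(i) bounds $|D|$ by $k$. Correctness then reduces to two standard directions: given a vertex cover $D$ of $G$ of size at most~$k$, I would orient $\mathcal{K}$ as a transitive tournament with the vertices of $D$ at the top, orient each bipartite edge $x_vy_e$ from $y_e$ to $x_v$ iff $v\in D$ (choosing a unique such $v$ per edge $e$), orient each pendant as forced by \cref{prop:Sk} and each $s(Z_j)$ toward its target in $\mathcal{I}$, and check that the resulting indegrees respect the lists and are pairwise distinct on every edge of $G'$; conversely, any proper $k'$-orientation of $G'$ is straitjacketed by the lists and yields a vertex cover of $G$ of size at most~$k$ by reading off $D$.

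Since each $F(i,k')$ is chordal by \cref{prop:Skchordal} and is attached to the rest of the graph through a single vertex, $G'$ admits a perfect elimination ordering obtained by concatenating the local orderings of the pendants with the standard ordering of a split graph, so $G'$ is chordal; the $s(Z_j)$ gadgets, being built out of complete graphs glued at a single vertex, do not break this. The diameter of $G'$ is bounded by a constant, being at most the diameter of the split skeleton (which is $\leq 3$) plus the constant diameter of the pendant graphs $S(k')$. The main obstacle will be the precise engineering of $k'$ and of the lists $L(z)$ so that $\mathcal{K}$ admits only orientations of the prescribed "low/high" shape and no mixed regime that would decouple the encoded set $D$ from the orientation, and so that the values $0$ and $1$, which \cref{prop:Fik} cannot directly forbid, are kept out of undesired positions by combining pendants at $\mathcal{K}$-side vertices with the forced arcs coming from the $s(Z_j)$ gadgets. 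Once these parameters are set, the polynomial size of $G'$, the diameter bound, and \NP-membership are routine.
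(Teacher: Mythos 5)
Your architecture coincides with the paper's: a split skeleton built from \pname{Vertex Cover} on cubic graphs, pendant $F(i,k')$ gadgets to forbid indegrees, and the $Z$-gadgets to inflate the indegree of each edge-vertex. However, the step you set aside as ``the main obstacle'' --- the choice of $k'$ and of which indegrees to forbid where --- is the entire combinatorial content of the reduction, so as written the argument does not close. The calibration that makes it work is: $k'=|V(G)|+2$; attach $F(k,k')$ \emph{and} $F(k+1,k')$ to every $x_v\in\mathcal{K}$; attach $F(k-1,k')$ and $k-1$ copies of $Z$ to every $y_e\in\mathcal{I}$. Then each $y_e$ is forced to have indegree at least $k-1$ (each $s(Z_j)$ is the intersection of two $(k')$-cliques, hence a forced source toward $y_e$), cannot have indegree exactly $k-1$, and therefore must absorb at least one of its two edges to $\mathcal{K}$ --- that is the covering condition. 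On the clique side, each $x_v$ has degree exactly $k'=(|\mathcal{K}|-1)+3$ in the skeleton (the pendant heads never contribute to its indegree, by \cref{prop:Fik}), its indegree avoids $\{k,k+1\}$, and since the $|\mathcal{K}|=k'-2$ indegrees in the clique are pairwise distinct and drawn from $\{0,\dotsc,k'\}\setminus\{k,k+1\}$, a counting argument forces exactly $k$ vertices into the low range $\{0,\dotsc,k-1\}$; these must orient all their skeleton edges outward and are the only possible sources of the arcs the $y_e$ need, so they form the cover.

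Two further remarks. First, with this calibration your worry about the values $0$ and $1$ (which \cref{prop:Fik} cannot forbid) simply does not arise: the only indegrees that ever need to be forbidden are $k-1$, $k$ and $k+1$, all at least $2$ once one assumes $k\ge 2$ without loss of generality. Relatedly, the gadgets do not implement an arbitrary list assignment $L(z)\subseteq\{0\}\cup[k']$ as you suggest --- only subsets of $\{2,\dotsc,k'\}$ can be removed --- and the reduction is deliberately designed so that this restricted expressive power suffices. Second, in the forward direction one cannot pick ``a unique $v\in D$ per edge $e$'' and orient only that edge into $\mathcal{K}$ as you propose; rather, \emph{all} edges from the cover $S$ are oriented away from $S$ (so a $y_e$ covered twice gets indegree $k+1$ and one covered once gets $k$, both admissible), while the non-cover clique vertices absorb all three of their $\mathcal{I}$-edges, landing in $\{k+2,\dotsc,k'\}$. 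Your chordality and diameter arguments are fine and match the paper's.
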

\begin{proof}
\nameref{pro:PO} is already known to be in \NP~\cite{Ahadi2013}.
Let~$(G, k)$ be an instance of \nameref{pro:VC}, such that~$G$ is a cubic graph. We may assume that $k\geq 2$, as otherwise \nameref{pro:VC} can be solved in linear time, and that $G$ is connected.


Let $k' = |V(G)|+2$.
In the sequel, we construct an instance $(G',k')$ of \nameref{pro:PO} from $(G, k)$, such that $G'$ is a chordal graph of bounded diameter and~$G$ has a vertex cover of size at most $k$ if and only if~$G'$ admits a proper~$k'$-orientation~$D$.

Let $H=(\mathcal{K} \cup \mathcal{I}, E_H)$ be the split graph obtained from~$G=(V, E)$, such that $\mathcal{K} = V(G)$ is a clique, $\mathcal{I}=E(G)$ is an independent set and we add an edge between the vertex $uv \in \mathcal{I}$ to only~$u\in \mathcal{K}$ and $v\in \mathcal{K}$.
Therefore, $d_{\mathcal{I}}(v) = 3$ for each $v \in \mathcal{K}$, since $G$ is cubic; and $d_{\mathcal{K}}(uv) = 2$, for each $uv \in \mathcal{I}$.

We continue the construction of $G'$ from $H$, by using the gadgets $F(i,k')$. To each vertex $v \in {\cal K}$, we add a pendant $F(i,k')$ at each vertex, for each $i\in\{k,k+1\}$ (see \cref{fig:reduction}). Moreover, we add a pendant~$F(k-1,k')$ at $u$, for each $u \in \mathcal{I}$. Finally let $Z$ be the graph obtained by two cliques of size~$k'$ sharing a unique vertex~$s(Z)$.
For each~$u \in \mathcal{I}$, we add $k-1$ distinct subgraphs $Z_i$ and connect $u$ to $s(Z_i)$, for each $i \in [k-1]$.
This completes the construction of $G'$.


Since we obtain a split graph $H$ from $G$ and add to it chordal subgraphs $F(i,k')$ and $Z$ attached by a cut-vertex that belongs to no chorless cycles, then one may observe that $G'$ is chordal.
Moreover, since the diameter of any split graph is at most three, we can also deduce that the diameter of $G'$ is at most nine (in \cref{fig:reduction}, take the distance between a vertex in~$K^i \setminus K$ from an~$F(k-1,k')$ attached to~$u \in \mathcal{I}$ to another similar vertex in another~$F(k-1,k')$ attached to~$u' \in \mathcal{I}$ and such that $N_{\mathcal{K}}(u)\cap N_{\mathcal{K}}(u') = \emptyset$).

Let us now prove that if $G$ has a vertex cover $S$ of size at most $k$, then~$G'$ admits a proper~$k'$-orientation~$D$.
Let $S \subseteq V(G)$ be a vertex cover of $G$ of size \emph{exactly} $k$ (we add arbitrary extra vertices, if necessary). Let us consider the same set $S\subseteq {\cal K}$ in $G'$. Since $S$ is a vertex cover of $G$, recall that each vertex of $\mathcal{I}$ is adjacent to at least one vertex of $S$.

We construct a proper $k'$-orientation $D$ of $G'$ as follows.
First, add to $D$ any proper $k'$-orientation of each pendant subgraph $F(i,k')$, for $i\in\{k-1,k+1\}$. Recall that they must verify Proposition~\ref{prop:Fik}. Similarly, add to $D$ any proper $k'$-orientation of each subgraph corresponding to $Z$. Note that $s(Z)$ must be a source in $D$, to each such subgraph. It already implies that $d_D(v)\geq k-1$, to each vertex $v\in {\cal I}$.
It suffices now to orient the edges with both endpoints in $H$.
Now, take an arbitrary transitive orientation $G'[S]$.
For each edge $vu$ such that $v\in S$ and $u\in H\setminus S$, orient $vu$ towards~$u$.
Then, observe that $d^-_D(v) \in \{0\} \cup [k-1]$, for every $v \in S$. Moreover, each vertex $u\in\mathcal{I}$ is the head of at least one arc with tail in $S$, as $S$ is a vertex cover of $G$.
To each vertex $v\in \mathcal{K}\setminus S$ and each vertex $u\in \mathcal{I}$ orient the edge $uv$ towards $v$, in case it exists; and take an arbitrary transitive orientation of the edges in $G'[\mathcal{K}\setminus S]$.
Observe that $D$ is proper as the indegrees of vertices in $\mathcal{K}$ belong to $\{0, \dotsc, k-1, k+2, \dotsc, k'\}$ (recall that $G$ is cubic) and the indegrees of vertices in $\mathcal{I}$ belong to $\{k, k+1\}$ (as $S$ is a vertex cover).

Reciprocally, assume now that $G'$ admits a proper $k'$-orientation $D$.
Thus, observe that the indegree $d^-_D(v)$ of each $v\in\mathcal{I}$ must be at least~$k-1$ in any proper $k'$-orientation $D$ of $G'$, due to the subgraphs $Z_i$.
Moreover, $d^-_D(v)$ cannot be~$k-1$, due to the pendant $F(k-1,k')$ at $v$ and \cref{prop:Fik}.
Thus, in $D$ there must exist a vertex~$u\in \mathcal{K}$ such that the edge $uv$ is oriented towards $v$.

Since the vertices in $\mathcal{K}$ cannot have indegrees in $\{k, k+1\}$, due to the pendant subgraphs $F(i,k')$, for each~$i\in\{k, k+1\}$, it follows that each indegree in~$[k']\setminus \{k, k+1\}$ must be assigned to exactly one vertex of $\mathcal{K}$.
This implies that there exist a set $S$ on $k$ vertices in $\mathcal{K}$ with indegrees in $\{0\} \cup [k-1]$. 
Hence, all edges from vertices of $S$ to the vertices of $V(G') \setminus S$ must be oriented from $S$ towards $V(G')\setminus S$ and that all edges from a vertex $u$ in $\mathcal{I}$ to a vertex~$v$ in $\mathcal{K}\setminus S$ must be oriented towards $v$.
Thus, each vertex $u \in \mathcal{I}$ must be adjacent to a vertex of $S$, which implies that $S$ is a vertex cover of $G$ of cardinality $k$.
\end{proof}

\subsection{Remarks when the value of the solution is a parameter}
\label{sec:parameterized}

Let us comment a few words about the Parameterized Complexity of determining the proper orientation number of a graph when  parameterized by the value of the solution. Let us formally define the problem:

\paraProblemDef{Parameterized Proper Orientation}{A graph $G$ and a positive integer $k$.}{$k$}{Is $\po(G) \leq k$?}{pro:PPO}

It is important to recall that \nameref{pro:PPO} is para-\NP-complete in the class of planar bipartite graphs~\cite{ACD+15}. On the other hand, we can use previous results to deduce that this problem is \FPT~when restricted to chordal graphs.

\begin{proposition}
\label{prop:kernel_chordal}
\pname{Parameterized Proper Orientation} can be solved in time $\mathcal{O}(2^{k^2}\cdot k^{3k+1}\cdot n)$, for any chordal graph~$G$ on $n$ vertices.
\end{proposition}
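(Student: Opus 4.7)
The plan is to exploit the fact that for a chordal graph the treewidth coincides with $\omega(G)-1$, together with the inequality $\omega(G)-1\leq \po(G)$ stated in the introduction. This lets us reduce the problem to an application of the treewidth-based algorithm of~\cite{ALSSS19}, whose running time on graphs of treewidth at most $t$ is $\mathcal{O}(2^{t^2}\cdot k^{3t}\cdot t\cdot n)$.

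First I would compute $\omega(G)$ in linear time by running a Lex-BFS on the chordal graph $G$ (as noted earlier in the paper, citing~\cite{rose1976algorithmic}). If $\omega(G)>k+1$, then from the chain $\omega(G)-1\leq \po(G)$ we can immediately reject the instance, since no proper $k$-orientation can exist. Otherwise $\omega(G)\leq k+1$, and since the treewidth of a chordal graph equals $\omega(G)-1$, we obtain $\mathrm{tw}(G)\leq k$; in linear time a tree decomposition of this width is obtained by the standard clique-tree construction for chordal graphs.

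At that point I would simply invoke the algorithm of~\cite{ALSSS19} with parameter $t=k$, which decides whether $\po(G)\leq k$ in time $\mathcal{O}(2^{k^{2}}\cdot k^{3k}\cdot k\cdot n)=\mathcal{O}(2^{k^{2}}\cdot k^{3k+1}\cdot n)$, matching the claimed bound. Adding the $\mathcal{O}(n+m)$ preprocessing for the Lex-BFS and for extracting the tree decomposition is absorbed into this estimate, since whenever we do not reject outright we have $m=\mathcal{O}(kn)$.

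There is essentially no hard step here: the proof is a short reduction observation. The only subtlety worth mentioning explicitly is the justification that checking $\omega(G)\leq k+1$ is safe, which follows from the inequality chain $\omega(G)-1\leq \chi(G)-1\leq \po(G)$ recalled in~\cref{sec:intro}; and that a width-$(\omega(G)-1)$ tree decomposition of a chordal graph is computable in linear time from the clique tree, so the invocation of the~\cite{ALSSS19} algorithm has a decomposition readily available as input.
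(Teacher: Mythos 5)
Your proof is correct and follows essentially the same route as the paper: reject when $\omega(G)\geq k+2$ using $\omega(G)-1\leq\po(G)$, and otherwise exploit that a chordal graph with $\omega(G)\leq k+1$ has treewidth at most $k$ to invoke the algorithm of~\cite{ALSSS19} with $t=k$. The extra details you supply (Lex-BFS for $\omega(G)$ and the clique-tree decomposition) are standard and only make the argument more explicit.
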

\begin{proof}
 The algorithm presented in~\cite{ALSSS19} decides whether $\po(G)\leq k$, when~$G$ has treewidth at most $t$, in time $\mathcal{O}(2^{t^2}\cdot k^{3t}\cdot t \cdot n)$.
 Thus, given a chordal graph $G$, one may compute $\omega(G)$ in polynomial time, as they are perfect. In case $\omega(G)\geq k+2$, then the answer to \pname{Parameterized Proper Orientation} is trivially ``NO''. Otherwise, note that a chordal graph with $\omega(G)\leq k+1$ is a graph with treewidth at most $k$. Then, one can use the algorithm in~\cite{ALSSS19} to solve \pname{Parameterized Proper Orientation} in time $\mathcal{O}(2^{k^2}\cdot k^{3k+1}\cdot n)$.
\end{proof}

It is well known that if there exists an $f(k)\cdot\poly(n)$ algorithm to solve a problem $\Pi$ parameterized by $k$, then~$\Pi$ admits a kernel of size $f(k)$~\cite{Cyganetal2015}. Thus, \pname{Parameterized Proper Orientation} has a kernel of size $\mathcal{O}(2^{k^2}\cdot k^{3k})$ in the class of chordal graphs.
In the sequel, we argue that a kernel of size polynomial on $k$ is unlikely to exist. For this, we need the following definition and result. 

\begin{definition}[AND-cross-composition]
Let $L\subseteq \Sigma^*$ be a language and $Q \subseteq \Sigma^*\times \mathbb{N}$ be a parameterized language. We say that $L$ AND-cross-composes into $Q$ if there exists a polynomial equivalence relation ${\cal R}$
and an algorithm~${\cal A}$, called a cross-composition, satisfying the following conditions. The algorithm ${\cal A}$ takes as input a sequence of strings $x_1, x_2,\ldots , x_t \in \Sigma^*$
that are equivalent with respect to ${\cal R}$, runs in time
polynomial in $\sum_{i=1}^t |x_i|$, and outputs one instance $(y, k)\in \Sigma^*\times \mathbb{N}$ such
that:

\begin{enumerate}
    \item $k\leq p(\max^t_{i=1} |x_i| + \log t)$ for some polynomial function $p(\cdot)$, and
\item $(y, k) \in Q$ if and only if $x_i\in L$, for every $i\in\{1,\ldots,t\}$.
\end{enumerate}
\end{definition}

\begin{theorem}[\cite{FLSZ2019,Drucker2012}]\label{thm:andcross}
Assume that an \NP-hard language $L$ AND-cross-composes into a parameterized language~$Q$. Then $Q$ does not admit a polynomial compression, unless \NP $\subseteq$ \coNP/ \poly.
\end{theorem}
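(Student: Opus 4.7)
The plan is to derive a contradiction by assuming that $Q$ admits a polynomial compression into some language $L'$, realized by a polynomial-time algorithm that outputs instances of size at most $q(k)$ for some polynomial $q$. Composing with the cross-composition $\mathcal{A}$, we obtain a polynomial-time procedure that, for any $t$ instances $x_1, \ldots, x_t$ of $L$ that are $\mathcal{R}$-equivalent and of common size $s$ (padding within an equivalence class if necessary), outputs a single string $z$ of length at most $q(p(s + \log t))$ whose membership in $L'$ is the conjunction $\bigwedge_i [x_i \in L]$. The goal is to exploit this compressed AND-predicate to place $L$ into $\coNP/\poly$, contradicting its $\NP$-hardness.

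For the analogous OR-composition statement the argument is by now classical (Fortnow--Santhanam): a compressed disjunction of size $\mathrm{poly}(s + \log t)$ can be refuted by advice that pinpoints a single satisfying index, which places $L$ in $\coNP/\poly$ via an $\mathsf{AM}$-style protocol. The AND-version is genuinely harder because a single NO-instance forces the whole output to behave as NO, so the ``witness'' information is smeared across coordinates rather than concentrated at one. My approach would be to set up a probabilistic experiment in which a target instance $x$ is hidden in a uniformly random slot among $t-1$ random fillers drawn from the equivalence class, and then analyze the distribution of the compressed output $z$ conditioned on $x \in L$ versus $x \notin L$. If this distribution shifts measurably between the two cases, an advice string describing the fillers and their AND-value lets one distinguish, non-adaptively, membership of $x$ in $L$; if the distributions collapse, then a counting/entropy argument pits $q(p(s + \log t))$ bits of output against roughly $t \log |\Sigma^s|$ bits of input and forces a huge collision, again yielding a short certificate of non-membership via advice.

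The main obstacle is making this dichotomy quantitatively tight, which is exactly the content of Drucker's distillation lemma. Naive hybrid-swap arguments only give average-case bounds, whereas the lower bound application requires a worst-case refutation that works for every $x$ in the hard set. Resolving this needs a delicate mixing step in which $z$ is coupled across the ``$x$ is real'' and ``$x$ is a random filler'' experiments; Drucker's resolution casts the separator as an $\mathsf{AM}$ protocol for the complement of $L$ and then derandomizes it into $\coNP/\poly$ via the Goldreich--Sipser / Fortnow--Santhanam machinery. Verifying that the parameter dependencies among $s$, $t$, the composition bound $k \le p(s + \log t)$, and the compression bound $q(k)$ conspire to keep the advice size polynomial in $s$ is the step I would expect to dominate the technical effort; the remainder, namely the padding argument that equalizes sizes within an $\mathcal{R}$-class and the standard reduction from polynomial compression to polynomial kernelization, is routine.
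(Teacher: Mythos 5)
This theorem is not proved in the paper at all: it is imported verbatim, with citations, from Drucker (2012) and the parameterized-complexity textbook, and is used purely as a black box to rule out a polynomial kernel for chordal graphs. So there is no in-paper argument to compare yours against; the only meaningful question is whether your sketch would stand on its own as a proof, and it would not. Your outline is a faithful high-level description of Drucker's argument --- the reduction of polynomial compression plus cross-composition to a compressed AND-predicate, the observation that the Fortnow--Santhanam ``pinpoint one index'' trick fails for AND because a single NO-instance collapses the output, and the hiding-\(x\)-among-random-fillers experiment --- but the decisive step is exactly where you write ``this dichotomy \ldots is exactly the content of Drucker's distillation lemma.'' That dichotomy (either the output distribution shifts detectably between the \(x\in L\) and \(x\notin L\) worlds, yielding an \(\mathsf{AM}\)-style distinguisher derandomizable into \coNP/\poly{}, or an information-theoretic collision argument applies) is the entire technical content of the theorem; invoking it by name is a citation, not a proof. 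In particular, the claim that ``if the distributions collapse, a counting/entropy argument \ldots forces a huge collision, again yielding a short certificate of non-membership'' is precisely the part that naive hybrid arguments cannot deliver in the worst case, as you yourself note, and no mechanism is supplied to close that gap.

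To be clear, this does not put you at a disadvantage relative to the paper, which also treats the statement as external; and your account is accurate enough that a reader could locate the real proof. But as a blind proof attempt it has a genuine hole: the central distributional-stability lemma is asserted rather than established, and everything else in the sketch (padding within an \({\cal R}\)-class, composing the cross-composition with the compression, the parameter bookkeeping \(k\le p(\max_i|x_i|+\log t)\) against \(q(k)\)) is routine scaffolding around that missing core.
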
 

Now, given two graphs $G$ and $H$, the \emph{disjoint union}, or simply \emph{union}, of $G$ and $H$ is the graph $G \cup H$ such that~$V(G \cup H) = V(G) \cup V(H)$ and $E(G \cup H) = E(G) \cup E(H)$.
Although the problem is not monotone for subgraphs even in trees~\cite{AHLS16}, one can observe that:
\begin{proposition}
\label{prop:disj_union}
If $G = G_1\cup G_2$, then $\po(G) = \max\{\po(G_1),\po(G_2)\}$.
\end{proposition}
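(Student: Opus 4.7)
The plan is to exploit the fact that a disjoint union has no edges between the two components, so both the orientation and the properness condition decouple. I would first set up the bijective correspondence between orientations of $G$ and pairs of orientations of $G_1$ and $G_2$: since $E(G)=E(G_1)\cup E(G_2)$ with the two edge sets disjoint, any orientation $D$ of $G$ restricts to orientations $D_1$ of $G_1$ and $D_2$ of $G_2$, and conversely any such pair glues back to a unique orientation of $G$. Crucially, for every $v\in V(G_i)$ we have $d^-_D(v)=d^-_{D_i}(v)$, because all edges incident to $v$ lie inside $G_i$.

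Next I would observe that the properness condition is local to each component. Every edge $uv\in E(G)$ lies in exactly one of $E(G_1)$, $E(G_2)$, so $D$ is proper if and only if both $D_1$ and $D_2$ are proper. Combined with the identity $\max_{v\in V(G)} d^-_D(v)=\max\{\max_{v\in V(G_1)} d^-_{D_1}(v),\max_{v\in V(G_2)} d^-_{D_2}(v)\}$, this gives the two desired inequalities.

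For the upper bound, take optimal proper orientations $D_1$ and $D_2$ of $G_1$ and $G_2$ with maximum indegrees $\po(G_1)$ and $\po(G_2)$ respectively, and let $D=D_1\cup D_2$; by the observations above, $D$ is a proper orientation of $G$ of maximum indegree $\max\{\po(G_1),\po(G_2)\}$, hence $\po(G)\leq \max\{\po(G_1),\po(G_2)\}$. For the lower bound, take an optimal proper orientation $D$ of $G$ and restrict to $D_1,D_2$: each $D_i$ is proper with maximum indegree at most $\po(G)$, so $\po(G_i)\leq \po(G)$ for $i\in\{1,2\}$, and therefore $\max\{\po(G_1),\po(G_2)\}\leq \po(G)$.

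There is no serious obstacle here; the only thing to be careful about is making the decoupling of both the indegrees and the properness condition explicit, which is immediate from the absence of edges between $V(G_1)$ and $V(G_2)$.
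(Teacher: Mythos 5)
Your argument is correct and is exactly the standard decoupling the authors have in mind: the paper states this as an observation without proof, and your explicit bijection between orientations of $G$ and pairs of orientations of $G_1,G_2$, together with the locality of indegrees and of the properness condition, fills in precisely the intended reasoning. Nothing is missing.
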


By just taking the disjoint union of $t$ instances of \pname{Proper Orientation} in chordal graphs with the same $k$ (this is the equivalence relation ${\cal R}$), we have an AND-cross-composition to an instance of \pname{Parameterized Proper Orientation} in chordal graphs as the disjoint union of chordal graphs is chordal and \cref{prop:disj_union} holds. By \cref{thm:andcross}, it follows that:

\begin{corollary}
\pname{Parameterized Proper Orientation} restricted to chordal graphs does not admit polynomial kernel, unless \NP $\subseteq$ \coNP/ \poly.
\end{corollary}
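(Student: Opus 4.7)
The plan is to instantiate the AND-cross-composition framework (\cref{thm:andcross}) with \pname{Proper Orientation} on chordal graphs as the source \NP-hard language $L$ (which is \NP-hard by \cref{thm:NP-chordal}) and \pname{Parameterized Proper Orientation} on chordal graphs as the target parameterized language $Q$. All of the ingredients are already in place in the excerpt; the task is just to verify the two formal conditions of the definition.

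First I would fix the polynomial equivalence relation $\mathcal{R}$: declare two instances $(G_i,k_i)$ and $(G_j,k_j)$ of \pname{Proper Orientation} to be $\mathcal{R}$-equivalent iff $k_i=k_j$ and $|V(G_i)|=|V(G_j)|$ (and all malformed instances are lumped into one extra class). This is clearly a polynomial equivalence relation, and it induces only polynomially many classes on inputs of bounded length, as required.

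Next I would describe the cross-composition algorithm $\mathcal{A}$: given $\mathcal{R}$-equivalent inputs $(G_1,k), (G_2,k), \ldots, (G_t,k)$, output $(G,k)$ where $G:=G_1\cup G_2\cup\cdots\cup G_t$ is the disjoint union. This can be constructed in time polynomial in $\sum_i |x_i|$. Two verifications remain. \emph{Chordality is preserved}: a disjoint union of chordal graphs is chordal, so $G$ is a valid instance for the target problem. \emph{Correctness}: by \cref{prop:disj_union}, $\po(G)=\max_i \po(G_i)$, so $\po(G)\le k$ iff $\po(G_i)\le k$ for every $i\in[t]$, which is exactly the AND-semantics needed.

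Finally, the parameter bound is immediate: the output parameter is the common value $k$ taken from the input instances, so $k\le\max_i |x_i|$, which is trivially bounded by $p(\max_i|x_i|+\log t)$ for $p(x)=x$. This verifies both conditions of the AND-cross-composition definition, so \cref{thm:andcross} yields that $Q$ admits no polynomial compression (and in particular no polynomial kernel) unless $\NP\subseteq\coNP/\poly$. There is no serious obstacle here: the only subtlety is to make sure \cref{prop:disj_union} really gives an "if and only if" characterization in the AND direction, which it does since the maximum of a finite set is at most $k$ iff each element is at most $k$.
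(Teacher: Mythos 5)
Your proposal is correct and is essentially the paper's own argument: the paper likewise takes the disjoint union of $t$ chordal instances sharing the same $k$ as the cross-composition, notes that chordality is preserved and that \cref{prop:disj_union} gives the AND-semantics, and concludes via \cref{thm:andcross}. Your write-up merely makes explicit the equivalence relation and the parameter bound, which the paper leaves implicit.
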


In~\cite{ACD+15} the authors show that determining whether $\po(G) = \omega(G)-1$ can be done in polynomial time for a split graph~$G$, but even the problem of deciding whether~$\po(G)\leq\omega(G)$ remains open~\cite{ACD+15}. They also argue about the computational complexity of determining $\po(G)$ when $G$ is cobipartite.
In case determining whether $\po(G)\leq k$ turns out \NP-complete for split graphs and cobipartite graphs, then a natural question is to wonder about its parameterized version. Note that split graphs are chordal and thus Proposition~\ref{prop:kernel_chordal} can be applied. Below, we give a better kernel for split graphs, and provide a linear kernel for cobipartite graphs.

\begin{proposition}
\pname{Parameterized Proper Orientation} admits a $\mathcal{O}(2^kk^2)$-kernel for split graphs.
\end{proposition}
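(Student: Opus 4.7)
The plan is to compress an arbitrary split graph $G$ via two structural observations together with a reduction rule on the vertices of the independent part. First, since $\omega(G) - 1 \leq \po(G)$, any YES-instance must satisfy $\omega(G) \leq k+1$. As split graphs are perfect, a maximum clique can be computed in polynomial time, and we reject when its size exceeds $k+1$. We thus fix a maximum clique $K$ with $|K| \leq k+1$ and write $V(G) = K \cup I$ as a split partition. Maximality of $K$ forces $N(v) \subsetneq K$, and hence $|N(v)| \leq k$, for every $v \in I$. Next, we classify the vertices of $I$ by neighborhood: for each $N \subsetneq K$, let $I_N = \{v \in I : N_G(v) = N\}$, which yields at most $2^{k+1}$ non-empty groups.

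The main reduction rule to be proved is that there is a threshold $\beta(k) = \mathcal{O}(k)$ such that whenever some $|I_N|$ exceeds $\beta(k)$, removing an arbitrary vertex of $I_N$ produces an equivalent instance. Iterating this rule until no $I_N$ exceeds $\beta(k)$ leaves $|I| \leq 2^{k+1}\cdot \beta(k) = \mathcal{O}(2^k k)$, hence at most $\mathcal{O}(2^k k)$ vertices overall. Combined with $|K| \leq k+1$ and the fact that each vertex of $I$ has degree at most $k$, the total edge count is at most $\binom{k+1}{2} + |I|\cdot k = \mathcal{O}(2^k k^2)$, matching the stated bound.

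The technical heart of the argument, and the main obstacle, is the safety of the reduction rule. The forward direction is immediate: any proper $k$-orientation of $G$ restricts to one of $G-v$. For the converse, starting from a proper $k$-orientation $D$ of $G-v$, we must extend it to $G$ by orienting the edges between $v$ and $N(v)$ so that (i) $d^-(v)$ avoids the indegrees of its neighbors in $N(v)$, and (ii) any induced increase in the indegree of a vertex $u \in N(v)$ does not exceed $k$ or collide with the indegree of a neighbor of $u$. The plan is to exploit the many equivalent copies in $I_N\setminus\{v\}$: by pigeonhole on the $k+1$ possible indegree values, several of these vertices share an indegree, and because they all have the same neighborhood $N$ one can reroute a matching on $[N, I_N\setminus\{v\}]$ that compensates for the new arcs from $v$, leaving the indegree of every $u\in N$ unchanged. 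Taking $\beta(k)$ large enough so that this rerouting is always realisable while simultaneously enforcing the constraint on $d^-(v)$ is what determines the $\mathcal{O}(k)$ slack per type and closes the proof.
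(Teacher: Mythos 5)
Your overall strategy (reject when $\omega(G)\geq k+2$, split off a maximum clique $K$, group the independent part into at most $2^{k+1}$ twin classes by neighbourhood, and shrink each large class) is the same as the paper's, but the safety argument for the reduction rule has two genuine gaps. First, the forward direction is \emph{not} immediate: the proper orientation number is not monotone under vertex deletion (the paper explicitly remarks on this). If the deleted vertex $v\in I_N$ has out-arcs into $K$ in a proper $k$-orientation $D$ of $G$, then removing $v$ decreases the indegrees of those heads and can create new conflicts inside $K$ or between $K$ and $I$. The paper avoids this by a counting argument: at most $M=k\omega-\binom{\omega}{2}$ arcs can enter $K$ from the independent set in total (since $\max_{u\in K}d^-(u)\leq k$ forces $\binom{\omega}{2}+\ell\leq k\omega$), so a class of size at least $M+2$ contains at least two sinks, and deleting sinks leaves all other indegrees untouched. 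You must delete a vertex that is guaranteed to be a sink in \emph{some} optimal orientation (or swap $v$ with a sink twin first), not an arbitrary one.

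Second, your threshold $\beta(k)=\mathcal{O}(k)$ per class is not justified. The only bound available on the number of non-sink twins in a class comes from the same counting argument and is $M=\mathcal{O}(k^2)$, not $\mathcal{O}(k)$; your pigeonhole on indegrees plus the proposed rerouting of a matching on $[N,I_N\setminus\{v\}]$ is not developed enough to do better, and flipping arcs at twins changes those twins' indegrees, which can in turn collide with vertices of $N$. The paper instead keeps $M+1$ copies per class, which guarantees (backward direction) that some surviving twin $s^*$ is a sink in any proper $k$-orientation of $G'$, hence no vertex of $N$ has indegree $d_G(s^*)=|N|$, and the deleted vertices can be re-inserted as sinks with no rerouting at all. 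Note that this weaker, $\mathcal{O}(k^2)$-per-class threshold still yields $(k+1)+2^{k+1}(M+1)=\mathcal{O}(2^kk^2)$ vertices, so the stated kernel size is unaffected; it is only your route to it, via restriction-monotonicity and an $\mathcal{O}(k)$ threshold, that does not go through as written.
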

\begin{proof}
Let $(G,k)$ be an instance of \pname{Parameterized Proper Orientation} such that $G$ is split. Let $\{K,S\}$ be a partition of $V(G)$ such that $K$ is a maximum clique of $G$ and $S$ is a stable set. By \cref{prop:disj_union}, we can suppose that $G$ is connected. Write $\omega = \omega(G)$, and let $M = k\omega - \binom{\omega}{2}$. 
We construct, in polynomial time, an equivalent instance $(G',k)$ such that $|G'|\leq (k+1)+2^{k+1} (M+1) = \mathcal{O}(2^kk^2)$. 

If $|K|=\omega\geq k+2$, then the answer is ``NO'' (and thus one can just output $G' = K_{k+2}$ in this case). Now, we assume that $\omega\leq k+1$. Two vertices $u$ and $v$ in a graph $G$ are \emph{twins} if $N_G(u)=N_G(v)$. We partition the vertices of~$S$ into at most $2^\omega-1$ subsets $S_i$ of twins, for $i\in\{1,\ldots, 2^\omega\}$.

\begin{claim}
\label{claim:reduction_rule_split}
If there exists $i\in\{1,\ldots, 2^\omega\}$ such that $|S_i|\geq M+2$, then removing $(|S_i|-(M+1))$ vertices from $S_i$ builds a smaller graph $G'$ such that $\po(G)\leq k$ if, and only if, $\po(G')\leq k$.
\end{claim}
\begin{claimproof}
Note that if $D$ is a proper $k$-orientation of $G$ such that exactly $\ell$ edges with one endpoint in $K$ and the other in $S$ are oriented towards $K$, then:
\[k\geq \max_{v\in K} d_D^-(v) \geq \frac{\sum_{v\in K} d_D^-(v)}{\omega} = \frac{\binom{\omega}{2}+\ell}{\omega} \Rightarrow \ell\leq k\omega-\binom{\omega}{2} = M.\]
Since $|S_i|\geq M+2$, note that at most $M$ vertices of $S_i$ are tails of arcs pointing towards their (common) neighbors in $K$, and thus all the other $|S_i|-M\geq 2$ vertices of $S_i$ are sinks in $D$. Thus, they all have the same indegree in $D$ (equal to their degree) and the same neighborhood (since they are twins). Thus, the removal of $(|S_i|-(M+1))$ of these vertices leads to $G'$ and note that the orientation provided by $G$ is still proper in $G'$. Recall that the problem is not monotone and thus this step is indeed necessary.

On the other hand, if $D'$ is a proper $k$-orientation of $G'$, since we left $M+1$ vertices in $S_i$, at least one of these vertices, say $s_i^*$ must be a sink in $D'$, by the previous analysis. Consequently, no vertex in $K\cap N(s_i^*)$ may have indegree equal to $d_{D'}^-(s_i^*) = d_G(s_i^*)$. Thus, we may extend such orientation of $D'$ into an orientation $D$ of $G$ by just letting the removed vertices be sinks as well in $D$.
\end{claimproof}

By applying Claim~\ref{claim:reduction_rule_split} to each feasible $S_i$, $i\in\{1,\ldots,2^\omega\}$, we build a graph $G'$ as required.
\end{proof}

\begin{proposition}
\pname{Parameterized Proper Orientation} admits a linear kernel for cobipartite graphs.
\end{proposition}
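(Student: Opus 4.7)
The plan is to exploit the single structural fact that a cobipartite graph $G$ admits a partition $V(G) = K_1 \cup K_2$ into two cliques, and combine it with the elementary lower bound $\po(G) \ge \omega(G)-1$ already recorded in the paper. Together, these two observations will force $|V(G)|$ to be $\mathcal{O}(k)$ once we apply one trivial reduction rule, so no intricate twin-elimination argument (like the one used for split graphs) is needed.

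First, I would compute in polynomial time a partition of $V(G)$ into two cliques $K_1$ and $K_2$; equivalently, this is a $2$-coloring of the complement $\overline{G}$, which exists because $G$ is cobipartite by hypothesis, and which can be produced in linear time by a BFS on $\overline{G}$. Next, I would compute $\omega(G)$ in polynomial time, which is possible because cobipartite graphs are perfect (alternatively, a maximum clique of $G$ corresponds to a maximum independent set of $\overline{G}$, solvable in polynomial time on bipartite graphs via König's theorem).

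The reduction rule is then a single dichotomy. If $\omega(G) \ge k+2$, then $\po(G) \ge \omega(G)-1 \ge k+1 > k$, so the instance is a NO-instance and we output a trivial constant-size NO-instance, for example $(K_{k+2}, k)$. Otherwise $\omega(G) \le k+1$, and since both $K_1$ and $K_2$ are cliques in $G$ we get $|K_i| \le \omega(G) \le k+1$ for each $i \in \{1,2\}$. Consequently
\[
|V(G)| \;=\; |K_1| + |K_2| \;\le\; 2(k+1),
\]
so $(G,k)$ itself already has size linear in $k$ and may be returned as the kernel.

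There is essentially no obstacle here: the main observation is that, in contrast to the split case where the stable side can be arbitrarily large and must be pruned via the twins reduction, in the cobipartite case the entire vertex set is already covered by two cliques each bounded by $\omega(G)$, so the crude lower bound $\po(G) \ge \omega(G)-1$ by itself suffices to force a linear bound on the number of vertices.
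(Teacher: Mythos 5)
Your proposal is correct and is essentially identical to the paper's (one-line) proof: both use the dichotomy that either $\omega(G)\geq k+2$, in which case one outputs the trivial NO-instance $(K_{k+2},k)$, or else the two cliques covering $V(G)$ each have at most $k+1$ vertices, giving $|V(G)|\leq 2(k+1)$. The extra details you supply about computing the partition and $\omega(G)$ are fine but not needed for the bound itself.
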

\begin{proof}
The proof is straightforward as either $\omega(G)\geq k+2$, in which case the answer is ``NO'' (and one can output the equivalent instance $(K_{k+2},k)$), or we have that $|V(G)|\leq 2(k+1)$.
\end{proof}

\subsection{Linear-time algorithm on quasi-threshold graphs}
\label{sec:quasi_threshold}

The \emph{join} of $G$ and $H$ is the graph $G \wedge H$, such that $V(G \wedge H) = V(G) \cup V(H)$ and $E(G \wedge H) = E(G) \cup E(H) \cup \{uv \mid u \in V(G),~v \in V(H)\}$.
With a slight abuse of notation, we refer to $v\wedge H$ to be the join of the trivial graph $(\{v\},\emptyset)$ and~$H$. A \emph{quasi-threshold graph}~\cite{JINGHO96} is isomorphic to a~$K_1$, or can be formed by a finite sequence of two operations: the disjoint union of two quasi-threshold graphs or the join of a new vertex and a quasi-threshold graph. This decomposition theorem can be applied to easily prove the following result.


\begin{theorem}
\label{thm:triviallyPG}
If $G$ is a quasi-threshold graph, then $\po(G) = \omega(G)-1$ and $\po(G)$ can be computed in linear time.
\end{theorem}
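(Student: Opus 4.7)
The plan is to use the recursive definition of quasi-threshold graphs directly and prove, by induction on the construction, that $\po(G)\leq \omega(G)-1$. The reverse inequality $\po(G)\geq \omega(G)-1$ is already recorded in the chain of inequalities in the introduction (since $\omega(G)-1\leq \po(G)$ for any graph $G$), so the whole work lies in the upper bound.

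For the induction, the base case $G=K_1$ is trivial ($\po=0=\omega-1$). The disjoint-union case is immediate from \cref{prop:disj_union}: if $G=G_1\cup G_2$, then
\[
\po(G)=\max\{\po(G_1),\po(G_2)\}\leq \max\{\omega(G_1),\omega(G_2)\}-1=\omega(G)-1.
\]
The interesting case is the join $G=v\wedge H$, where $H$ is quasi-threshold and $\omega(G)=\omega(H)+1$. By induction, $H$ admits a proper $(\omega(H)-1)$-orientation $D_H$. I would extend $D_H$ to an orientation $D$ of $G$ by orienting every new edge $vu$ away from $v$, so that $v$ becomes a source. Then $d_D^-(v)=0$, while each $u\in V(H)$ has $d_D^-(u)=d_{D_H}^-(u)+1\in\{1,\ldots,\omega(H)\}$, giving maximum indegree $\omega(H)=\omega(G)-1$. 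The orientation remains proper: along any edge of $H$, both endpoints receive the same shift $+1$, so their indegrees still differ; and along any edge $vu$, the indegree of $v$ is $0$ while that of $u$ is at least $1$. This closes the induction and yields $\po(G)=\omega(G)-1$.

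For the linear-time claim, it suffices to compute $\omega(G)$. Because quasi-threshold graphs are chordal, $\omega(G)$ can be found in linear time by a Lex-BFS based algorithm \cite{rose1976algorithmic}, as noted in the introduction; the value $\omega(G)-1$ is then returned. I do not foresee a real obstacle: the only step to double-check is that the ``orient every new edge out of $v$'' construction preserves properness, which boils down to the uniform $+1$ shift observation above. The main conceptual point is simply recognizing that the recursive definition of quasi-threshold graphs lets one construct an optimal orientation greedily along the construction sequence, without any backtracking or re-orientation of previously placed arcs.
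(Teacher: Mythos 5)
Your proof is correct and follows essentially the same route as the paper: induction on the quasi-threshold construction, handling disjoint unions via \cref{prop:disj_union} and the join $v\wedge H$ by making $v$ a source so that every indegree in $H$ shifts uniformly by $+1$. The linear-time claim is likewise justified exactly as in the paper, via chordality and Lex-BFS.
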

\begin{proof}
It suffices to prove that $\po(G) = \omega(G)-1$, since the second part follows by the fact that $G$ is a chordal graph. By induction on the number~$n$ of vertices of $G$. 
It is trivial that~$\po(G)=0$ for~$n=1$.
If~$G=(H_1 \cup H_2)$, then, by \cref{prop:disj_union}, $\po(G) = \max\{\po(H_1), \ \po(H_2)\}=\max\{\omega(H_1)-1,\ \omega(H_2)-1\} = \max\{\omega(H_1),\ \omega(H_2)\}-1 = \omega(G)-1$. 
If~$G=(v \wedge H)$, then~$\omega(G) = \omega(H)+1$, which implies that~$\po(G)\geq\omega(G)-1 = \omega(H) = \po(H)+1$.
Let $D$ be an optimal proper orientation of~$H$.
Let~$D'$ be an extension of~$D$ to~$G$ by orienting all edges~$vu$ towards~$u$.
Then all indegrees in~$D$ are increased by exactly~1 and the indegree of~$v$ is~0 in~$D'$.
Hence~$\po(G)\leq \po(H)+1=\omega(G)-1$ and the theorem follows.
\end{proof}

\section{Bounds}

\subsection{Split graphs}
\label{sec:split}

In this section, we prove that~$\po(G)$ is linearly upper bounded by~$\omega(G)$, when $G$ is split and we present a tight example.

\begin{theorem}
\label{thm:upperboundsplit}
Let $G$ be a split graph.
Then, $\po(G) \le 2\omega(G)-2$.
\end{theorem}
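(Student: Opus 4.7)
The plan is to construct an explicit proper orientation of $G$ with maximum indegree at most $2\omega-2$. Let $K=\{v_0,\dots,v_{\omega-1}\}$ be a maximum clique of $G$ and $S=V(G)\setminus K$ the independent set of remaining vertices; write $m_i=|N(v_i)\cap S|$. Any proper orientation forces the clique to have pairwise distinct indegrees, so $K$ must be oriented transitively, and after relabelling, $v_i$ has clique-indegree $i$. I would choose the labelling so that $m_0\ge m_1\ge\cdots\ge m_{\omega-1}$, placing the clique vertices with the most $S$-neighbors at the smallest clique-indegrees so that they retain the most room to absorb $S$-edges.

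The second step is to fix target final indegrees. Each $v_i$ has final indegree $c_i=i+a_i$ where $a_i\in\{0,\dots,m_i\}$ counts its $S$-neighbors oriented into $v_i$; hence $c_i\in[i,\min(i+m_i,2\omega-2)]$. Each such interval contains the integer $i$, so for any subfamily $I$ of clique vertices the union of the corresponding intervals already contains the $|I|$ distinct values $\{i:v_i\in I\}$, satisfying Hall's condition. Hall's marriage theorem therefore produces pairwise distinct targets $c_i\le 2\omega-2$.

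The third step is to realize these targets by orienting the $K$–$S$ edges: for each $v_i$ we absorb exactly $a_i=c_i-i$ of its $S$-neighbors. Each $s\in S$ then ends with indegree $b_s\in\{0,\dots,d(s)\}\subseteq\{0,\dots,\omega\}\subseteq\{0,\dots,2\omega-2\}$ (the case $\omega=1$ being trivial), so the bound on indegrees at $S$ is automatic. Properness within $K$ is built in; the remaining constraint is $b_s\ne c_i$ for every edge $sv_i$. Since the forbidden set $\{c_i:v_i\in N(s)\}$ has $d(s)$ elements while $b_s$ ranges over $d(s)+1$ candidates, at least one feasible value exists at every $s$.

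The main obstacle is the coherent realization of step three: for each $v_i$ we still choose which specific $a_i$ of its $S$-neighbors to absorb, and these choices must be made simultaneously so that every $s$ ends up with an indegree in its allowed list. The plan is to settle this by a bipartite flow (or Hall-type) feasibility argument applied to the orientation problem with prescribed in-degrees on the $K$-side and list-constrained in-degrees on the $S$-side, exploiting that each list is nonempty and that the total in-degree demands on the two sides match by construction. A matching tight example will then answer Problem~9(a) of~\cite{ACD+15}.
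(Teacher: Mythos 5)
Your steps 1 and 2 are fine as far as they go, but the gap you flag in step 3 is genuine and the proposed repair does not close it: the claim that ``the total in-degree demands on the two sides match by construction'' is false. The targets $c_i$ fix $\sum_i a_i$, the number of $K$--$S$ edges that must be oriented into the clique, while the list constraints on the $b_s$ independently restrict the achievable values of $\sum_s\bigl(d(s)-b_s\bigr)$; nothing forces these to be compatible. Concretely, take $\omega=3$ with clique $\{u_0,u_1,u_2\}$ and $S=\{s_1,s_2\}$, where each $s_j$ is adjacent to $u_0$ and $u_1$, and $u_2$ has no $S$-neighbours (so $\omega(G)=3$ and the bound is $4$). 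Your labelling gives clique-indegrees $0,1,2$ to $u_0,u_1,u_2$ and intervals $[0,2]$, $[1,3]$, $\{2\}$; the system of distinct representatives $(c_{u_0},c_{u_1},c_{u_2})=(1,3,2)$ is a perfectly legal output of Hall's theorem, yet it is unrealizable: $a_{u_1}=2$ forces $u_1$ to absorb both $s_1$ and $s_2$, $a_{u_0}=1$ forces $u_0$ to absorb exactly one of them, so the other ends with indegree $1=c_{u_0}$ while adjacent to $u_0$. (Equivalently, the allowed values of $b_{s_j}$ are $\{0,2\}$, so $\sum_s(d(s)-b_s)$ is even, but $\sum_i a_i=3$.) The ``trivial'' representative $c_i=i$ also fails in general, since then $b_s=d(s)$, which may coincide with the clique-indegree of a neighbour of $s$. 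So the choice of targets must be coordinated with the adjacency structure between $K$ and $S$; Hall's theorem applied to the intervals alone cannot see that, and a correct version of your plan would require both a degree-constrained-orientation feasibility theorem (exact demands on the $K$ side, lists on the $S$ side) and a proof that a \emph{realizable} system of distinct targets always exists.

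The paper sidesteps all of this with a two-tier separation rather than a prescribed-degree argument. It isolates $K_h$, the clique vertices of degree at least $2\omega-2$, and pushes each of them to a distinct indegree in $\{2\omega-1-h,\dots,2\omega-2\}$ using $\omega-1$ chosen edges from the independent set, all edges from $\mathcal{K}\setminus K_h$, and a transitive order inside $K_h$. It then orients the rest greedily, repeatedly choosing a vertex of maximum remaining degree in the unoriented subgraph and orienting all its unoriented edges into it; this keeps every remaining indegree at most $2\omega-2-h$, strictly below the $K_h$ range, and the greedy invariant guarantees properness among the low vertices. That separation is what removes the bipartite realization problem your approach runs into.
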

\begin{proof}
Let $G=(V,E)$ be a split graph partitioned into a maximal clique $\mathcal{K}$ and an independent set $\mathcal{I}$.
Note that~$\omega(G) = |\mathcal{K}|$.
From now on, we will use~$\omega = \omega(G)$.

Let $K_h \subseteq \mathcal{K}$ be the subset of $\mathcal{K}$ of size $h$ such that $K_h = \{v \in \mathcal{K} \mid d(v) \geq 2\omega-2\}$.
Note that each vertex in $K_h$ has $\omega-1$ neighbors in $\mathcal{K}$ and at least $\omega-1$ neighbors in $\mathcal{I}$.
Let $(v_1, \dotsc, v_h)$ be any ordering of $K_h$ and let $E_i$ be an arbitrary set of $\omega-1$ edges connecting $v_i$ to vertices of $\mathcal{I}$, for each $v_i \in K_h$.
Let us build a proper $(2\omega-2)$-orientation of $G$. 

First, orient all the edges in $[\mathcal{K}\setminus K_h,K_h]$ towards $K_h$ and orient all edges in $E_i\cup\{v_jv_i\mid 1\le j < i\le h\}$ towards~$v_i$, for each $i\in \{2,\dotsc,h\}$.
If $h<\omega$, orient the edges in $E_1$ towards $v_1$ and we explain later how to complete such orientation.
Otherwise, orient all the remaining non-oriented edges towards $\mathcal{I}$.
Note that, in both cases, we have that $d^-(v_i) = 2\omega - 2 - (h-i)$, for each $i\in \{2,\dotsc,h\}$.

If $h=\omega$, the orientation is already completely described.
Observe that, in this case, $d^-(v_1)=0$, $\omega\le d^-(v_i)\le 2\omega-2$ for every $i\in \{2,\dotsc,h\}$, and $d^-(u)\le d(u)\le \omega-1$ for every $u\in \mathcal{I}$.
Thus, we have the desired proper $(2\omega-2)$-orientation in this case.

Otherwise, if $h<\omega$, then $d^-(v_1)=2\omega-1-h$, while every vertex of $\mathcal{I}$ has indegree 0 so far.
This means that we have a partial proper $(2\omega-2)$-orientation, which we extend as follows.
Let $G'$ be the graph obtained by removing the edges that have already been oriented and let $(u_1,\dotsc,u_q)$ be an ordering over the vertices in $V(G)\setminus K_h$ such that, for every $i\in [q]$, we have $d_{G'_i}(u_i)=\Delta(G'_i)$, where $G'_i = G'\setminus \{u_{i+1},\dotsc,u_q\}$. To complete such partial orientation, orient all the edges of $G'$ from $u_i$ to $u_j$, for every $1\le i< j\le q$, and all the remaining non-oriented edges in $[K_h,\mathcal{I}]$ towards $\mathcal{I}$.
We prove that the obtained orientation is a proper $(2\omega-2)$-orientation of $G$. For this, we prove that for every $i\in [q]$:
\begin{enumerate}
    \item $d^-(u_i)\neq d^-(u_j)$, for every $u_j\in N(u_i)$; and
    \item $d^-(u_i)\le2\omega-2-h$.
\end{enumerate}  Observe that, because $2\omega-1-h\le d^-(v_i)\le 2\omega-2$ for every $i\in [h]$, it follows that the orientation is a proper $(2\omega-2)$-orientation. 

To see that (1) holds, suppose, without loss of generality that $i<j$, which means that $u_iu_j$ is oriented towards~$u_j$.
We then have that $d^-(u_i)\le d_{G'_j}(u_i)-1< d_{G'_j}(u_j) = d^-(u_j)$.
To see that (2) holds, observe that if~$u_i\in \mathcal{I}$, then $d^-(u_i)\le d_G(u_i) \le \omega-1\le 2\omega-2-h$ since $h\le \omega-1$.
On the other hand, if $u_i\in \mathcal{K}\setminus K_h$, then $d^-(u_i)\le d_{G'}(u_i) = d_G(u_i)-h< 2\omega-2-h$ by the construction of $K_h$.
\end{proof}

Next, we prove that the upper bound on \cref{thm:upperboundsplit} is tight.
The tight example is obtained from a clique $K$ of size $\omega$ by adding several independent sets having as neighbors the same subset of vertices of~$K$. 
The idea is to increase the average indegree of $K$ in any proper orientation, which leads to the lower bound.  



\begin{theorem}
\label{thm:tightexamplesplit}
For every positive integer $\omega$, there exists a split graph $G$ with $\omega(G) = \omega$ such that~$\po(G) = 2\omega-2.$
\end{theorem}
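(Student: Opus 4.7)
The plan is to exhibit an explicit family of split graphs witnessing tightness, and then verify both inequalities $\po(G_\omega) \leq 2\omega - 2$ and $\po(G_\omega) \geq 2\omega - 2$.

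\emph{Construction.} For a positive integer $\omega$, let $G_\omega$ be the split graph on vertex set $K \cup \mathcal{I}$, where $K = \{v_1, \ldots, v_\omega\}$ forms a clique and, for every nonempty proper subset $S \subsetneq K$, we add to $\mathcal{I}$ a set $\mathcal{I}_S$ of $N := (\omega-1)^2$ twin vertices, each with neighborhood exactly $S$. Since no vertex of $\mathcal{I}$ is adjacent to every vertex of $K$, we have $\omega(G_\omega) = \omega$, and the upper bound $\po(G_\omega) \leq 2\omega - 2$ follows immediately from \cref{thm:upperboundsplit}.

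\emph{Lower bound, setup.} Assume for contradiction that $G_\omega$ admits a proper $(2\omega - 3)$-orientation $D$, and set $a_i = d_D^-(v_i)$. Because $K$ is a clique, $A := \{a_1, \ldots, a_\omega\}$ is a set of $\omega$ distinct values in $\{0, 1, \ldots, 2\omega - 3\}$. Counting the $\binom{\omega}{2}$ arcs internal to $K$ plus the arcs coming in from $\mathcal{I}$ yields the identity
\[
\sum_{i=1}^\omega a_i \;=\; \binom{\omega}{2} + E_{\text{in}},
\]
where $E_{\text{in}}$ denotes the number of arcs oriented from $\mathcal{I}$ into $K$. Since $\sum A \leq \sum_{j=\omega-2}^{2\omega-3} j = \omega(3\omega-5)/2$, this gives $E_{\text{in}} \leq \omega(\omega - 2)$.

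\emph{Contradiction via a forced contribution.} For any $u \in \mathcal{I}_S$, properness forces $d_D^-(u) \in \{0, 1, \ldots, |S|\} \setminus \{a_i : v_i \in S\}$, and $u$ contributes $|S| - d_D^-(u)$ arcs to $E_{\text{in}}$. Because $|A| = \omega$ while $\bigl|\{0\} \cup \{\omega, \omega+1, \ldots, 2\omega - 3\}\bigr| = \omega - 1$, by pigeonhole there exists $s \in A \cap \{1, \ldots, \omega - 1\}$. Let $v_{j_0}$ be the unique vertex with $a_{j_0} = s$, and choose any $S^\star \subset K$ with $|S^\star| = s$ and $v_{j_0} \in S^\star$; this is possible because $s - 1 \leq \omega - 2$, and $\mathcal{I}_{S^\star}$ is part of the construction. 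For every $u \in \mathcal{I}_{S^\star}$, the value $s = |S^\star|$ is forbidden (since $s = a_{j_0}$ with $v_{j_0} \in S^\star$), so $d_D^-(u) \leq s - 1$, and $u$ contributes at least one arc to $E_{\text{in}}$. Summing over the $N$ twins in $\mathcal{I}_{S^\star}$,
\[
E_{\text{in}} \;\geq\; N \;=\; (\omega - 1)^2 \;>\; \omega(\omega - 2),
\]
contradicting $E_{\text{in}} \leq \omega(\omega - 2)$. Thus $\po(G_\omega) \geq 2\omega - 2$, and equality holds.

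The subtle step is the pigeonhole guaranteeing $A \cap \{1, \ldots, \omega - 1\} \neq \emptyset$, after which the remainder is a clean double-counting using the many twins in each $\mathcal{I}_S$; the only care needed is to choose $N$ strictly larger than the maximum possible value of $\sum A - \binom{\omega}{2}$, which is exactly $\omega(\omega - 2)$, so that $N = (\omega - 1)^2$ is safely sufficient.
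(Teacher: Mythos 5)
Your proof is correct and follows essentially the same strategy as the paper's: the construction (many independent-set vertices attached to each proper subset of the clique) is the same up to the exact multiplicities, and the lower bound combines the same two ingredients, namely a pigeonhole argument forcing some clique vertex to have indegree in $\{1,\ldots,\omega-1\}$ and a double count of arcs entering the clique against the maximum possible sum of $\omega$ distinct indegrees bounded by $2\omega-3$. Your write-up is slightly more streamlined (a single contradiction rather than a two-case analysis), but the underlying argument is the paper's.
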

\begin{proof}
The proof is done by constructing a split graph $G$, with $V(G) = (\mathcal{K}, \mathcal{I})$, from a complete graph~$\mathcal{K} = K_{\omega}$ and taking some independent set that forces a vertex to have high indegree.

Given an integer $k \in [w]$, let $\mathcal{I}_k$ be an independent set having $\binom{\omega}{k}$ vertices, each connected to a different set of~$k$ vertices in $\mathcal{K}$.
In a proper orientation of our graph, either every arc leaves $\mathcal{K}$ and goes into $\mathcal{I}_k$ or at least one arc from $\mathcal{I}_k$ goes into $\mathcal{K}$.
If every arc goes into $\mathcal{I}_k$, then the indegree of every vertex in $\mathcal{I}_k$ is $k$, therefore no vertex in~$\mathcal{K}$ could have indegree $k$.
Otherwise, there is at least one arc coming from $\mathcal{I}_k$ and into~$\mathcal{K}$.

To construct our split graph $G$, let $\mathcal{I}$ be a set composed of $\omega(\omega-1)$ copies of $\mathcal{I}_k$, for every $k \in [\omega-1]$.
It holds that $\mathcal{I}$ is an independent set, since the neighborhood of each~$\mathcal{I}_k$ is $\mathcal{K}$, and that $\omega(G) = \omega$, since no vertex in $\mathcal{I}$ is adjacent to every vertex in~$\mathcal{K}$.
Consider a proper orientation of the graph $G$.

If no vertex in $\mathcal{K}$ has indegree in the set $[\omega-1]$, then, for every vertex $v \in \mathcal{K}$, it holds that $d^{-}(v) \in \{0\} \cup \{ \omega, \omega+1, \omega+2, \dotsc\}$.
Since $\mathcal{K}$ is complete, then every indegree in $\mathcal{K}$ must be unique, which, by the Pigeonhole Principle, means that there is at least one vertex $u$ such that $d^{-}(u) \ge 2\omega - 2$.

Otherwise, let $k \in [\omega-1]$ be the indegree of some vertex in $\mathcal{K}$.
Then, by the construction of $G$, there exists at least one edge oriented towards $\mathcal{K}$ from each copy of $\mathcal{I}_k$, which means that:
\[ \sum_{v \in \mathcal{K}} d^{-}(v) \ge |E(\mathcal{K})| + \omega(\omega-1) = \dfrac{3}{2}\omega(\omega -1). \]

Notice that if $\mathcal{K}$ did not have any vertex with indegree higher than $2\omega - 3$, then:
\[ \sum_{v \in \mathcal{K}} d^{-}(v) \le \sum_{j = 0}^{\omega-1} (\omega-2 + j) = \dfrac{(3\omega-5)\omega}{2}. \]

Comparing these two expressions, we conclude that there must exist a vertex $u \in \mathcal{K}$ such that $d^{-}(u) \ge 2\omega - 2$.

Therefore, in a proper orientation of $G$, we have $\overrightarrow{\chi}(G) \ge 2\omega - 2$.
By \cref{thm:upperboundsplit}, we deduce that~$\po(G) = 2\omega -2$.
\end{proof}

\subsection{Uniform block graphs}
\label{sec:block}

Let us first prove a general result that we need in the sequel.

\begin{proposition}
\label{prop:extending_partial_orientations}
Given a graph $G$, a subset $S\subseteq V(G)$ and a proper $k$-orientation $D_S$ of $G[S]$ such that, for each $v\in V(G)\setminus S$ and $u\in N(v)\cap S$, we have $|N(v)\cap S| > d_{D_S}^-(u)$, then $D_S$ can be extended into a proper $\Delta(G)$-orientation $D$ of $G$ such that:
\begin{itemize}
    \item if $v\in S$, then $d_D^-(v) = d_{D_S}^-(v)$; and
    \item if $v\in V(G)\setminus S$, then $d_D^-(v)\leq d_G(v)$. 
\end{itemize}
\end{proposition}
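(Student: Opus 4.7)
The plan is to proceed by induction on $|V(G)\setminus S|$. In the base case $V(G)=S$, the orientation $D=D_S$ is already proper and satisfies both bullets trivially, with maximum indegree bounded by $\Delta(G[S])\le\Delta(G)$.

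For the inductive step I would pick a vertex $v\in V(G)\setminus S$ of maximum degree in $G$ among all vertices of $V(G)\setminus S$, and apply the induction hypothesis to $(G\setminus v,\,S,\,D_S)$. Since $v\notin S$, one has $N_{G\setminus v}(w)\cap S=N_G(w)\cap S$ for every $w\in V(G\setminus v)\setminus S$, so the hypothesis $|N(w)\cap S|>d_{D_S}^-(u)$ transfers verbatim, and the induction produces a proper $\Delta(G\setminus v)$-orientation $D'$ of $G\setminus v$ extending $D_S$ and satisfying both conclusions. I would then extend $D'$ to an orientation $D$ of $G$ by orienting every edge of $G$ incident to $v$ towards $v$, so that $d_D^-(v)=d_G(v)\le\Delta(G)$ while $d_D^-(x)=d_{D'}^-(x)$ for every $x\neq v$. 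The two bulleted indegree conditions then follow immediately from the inductive statement.

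All that remains is to verify properness of $D$. Edges with both endpoints in $V(G)\setminus\{v\}$ are handled by induction. For an edge $uv$ with $u\in S$, the main hypothesis gives $d_D^-(u)=d_{D_S}^-(u)<|N(v)\cap S|\le d_G(v)=d_D^-(v)$. The only delicate case is an edge $uv$ with $u\in V(G)\setminus S$: here $d_D^-(u)\le d_{G\setminus v}(u)=d_G(u)-1$, while $d_D^-(v)=d_G(v)\ge d_G(u)$ by our choice of $v$, so $d_D^-(v)>d_D^-(u)$. This last step is the one subtle point in the argument, and it is precisely what forces $v$ to have maximum degree in $G$ itself, not merely in $G[V(G)\setminus S]$: otherwise the extra pre-indegree $|N(u)\cap S|$ carried by a neighbor $u$ could disrupt the strict separation between $d_D^-(v)$ and $d_D^-(u)$.
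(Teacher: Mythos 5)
Your proof is correct and is essentially the paper's argument in recursive form: the paper orients all edges of $[S,V(G)\setminus S]$ into $V(G)\setminus S$ and then iteratively makes the vertex of maximum \emph{potential indegree} (which at each step equals its degree in the residual graph) a sink of all its still-unoriented edges, and this produces exactly the orientation your induction builds by repeatedly peeling off a maximum-degree vertex of $V(G)\setminus S$ and orienting everything towards it. Your three-case verification of properness (using the hypothesis $|N(v)\cap S|>d_{D_S}^-(u)$ for the $S$-neighbors and the maximality of $d_G(v)$ for the neighbors outside $S$) is the same strict-inequality argument the paper carries out with its potential function.
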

\begin{proof}
Let $\overline{S} = V(G)\setminus S$. 
We start constructing a proper orientation $D$ of $G$ by orienting all edges in $G[S]$ as in the orientation $D_S$ and by orienting all edges in $[S,\overline{S}]$ toward their endpoints in $\overline{S}$. Let $D_0$ be such partial orientation of $G$.

Note that no two adjacent vertices $u,v\in S$ may have equal indegrees in $D$, since $D$ is an extension of $D_0$ and all their incident edges have already been oriented, we have that $d_{D}^-(u)=d_{D_0}^-(u)=d_{D_S}^-(u)\neq d_{D_S}^-(v) = d_{D_0}^-(v)=d_{D}^-(v)$. By hypothesis, we cannot have two vertices $u\in S$ and $v\in \overline{S}$ with equal indegree in $D$ either as $d_{D}^-(u)=d_{D_0}^-(u)=d_{D_S}^-(u) < |N(v)\cap S|= d_{D_0}^-(v)\leq d_{D}^-(v).$ 

Thus, we just need to care about orienting the edges in $G[\overline{S}]$ while ensuring that no two vertices $u,v\in \overline{S}$ have the same indegree. In fact, notice that if a vertex $u\in\overline{S}$ is such that all edges incident to $u$ are already oriented in $D_0$, then $u$ has no neighbors in $\overline{S}$. Thus, we just need to worry about vertices of $\overline{S}$ having at least one neighbor in $\overline{S}$. 

To do this, for each $i\in\{1,\ldots,|\overline{S}|\}$, we choose a vertex $v_i \in\overline{S}$ incident to some non-oriented edge in $D_{i-1}$, and we orient all the non-oriented edges incident to $v_i$ toward $v_i$, thus obtaining  a partial orientation $D_i$. We then let $D = D_{|S|}$. Let us define in which order to choose such vertices.

For each vertex $v\in \overline{S}$, we define $m_i(v)$ to be the number of edges incident to $v$ that are non-oriented in the partial orientation $D_{i-1}$ and we define its \emph{potential indegree at step $i$} as $\pot_i(v) = d_{D_0}^-(v) + m_i(v)$. We take $v_i$ to be such that $m_i(v_i)\geq 1$ and $\pot(v_i) = \max \{\pot(u)\mid u\in \overline{S}\wedge m_i(u)\geq 1\}$. As previously said, orientation $D_i$ is then obtained from $D_{i-1}$ by orienting all the $m_i(v_i)$ non-oriented edges incident to $v_i$ at $D_{i-1}$ toward $v_i$. 

We claim that $D=D_{|\overline{S}|}$ is a proper orientation of $G$. By contradiction, suppose that there are two adjacent vertices $u$ and $v$ such that $d_D^-(u)=d_D^-(v)$. As we argued, this is only possible if $u,v\in \overline{S}$ and $u$ and $v$ had incident non-oriented edges in $D_0$. Without loss of generality, suppose that $u$ is chosen before $v$ at step $i$ to have its incident non-oriented edges to be oriented towards $u$. In particular, the edge $uv$ is oriented towards $u$. Thus, observe that $d_D^-(u) = \pot_i(u)\geq \pot_i(v)>\pot_{i+1}(v)\geq d_D^-(v)$, a contradiction.
\end{proof}

As an easy corollary, we get the following.

\begin{corollary}
\label{cor:source_surplus}
For any graph $G$ and $u\in V(G)$, there exists a proper $\Delta(G)$-orientation $D$ of $G$ such that $d^-_D(u)=0$.
\end{corollary}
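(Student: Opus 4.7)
The plan is to apply Proposition~\ref{prop:extending_partial_orientations} with $S=\{u\}$ and $D_S$ defined as the empty orientation on the single vertex $u$, so that $d_{D_S}^-(u)=0$ trivially and $D_S$ is vacuously a proper $k$-orientation of $G[S]$ for any $k$ (in particular, for $k=0$).

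The only thing to verify before invoking the proposition is the side-condition that, for each $v\in V(G)\setminus S$ and each $w\in N(v)\cap S$, we have $|N(v)\cap S|>d_{D_S}^-(w)$. Since $S=\{u\}$, any such $v$ is a neighbor of $u$ and the only candidate is $w=u$; then $|N(v)\cap S|=1$ and $d_{D_S}^-(u)=0$, so the inequality $1>0$ holds. Thus the proposition applies and yields a proper $\Delta(G)$-orientation $D$ extending $D_S$ with $d_D^-(u)=d_{D_S}^-(u)=0$, which is exactly the conclusion.

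I do not anticipate any real obstacle: the content is a one-line specialization of Proposition~\ref{prop:extending_partial_orientations}. The only care needed is in stating that taking a single-vertex $S$ makes the hypothesis of the proposition trivially satisfiable, and that the resulting extension is still a proper $\Delta(G)$-orientation (which is guaranteed directly by the statement of the proposition).
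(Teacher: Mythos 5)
Your proposal is correct and is exactly the paper's own argument: the paper proves this corollary by applying Proposition~\ref{prop:extending_partial_orientations} with $S=\{u\}$. Your verification of the hypothesis ($1 = |N(v)\cap S| > 0 = d_{D_S}^-(u)$) is the only detail the paper leaves implicit, and you handle it correctly.
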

\begin{proof}
It suffices to apply Proposition~\ref{prop:extending_partial_orientations} when $S=\{u\}$.
\end{proof}

A \emph{block graph} is a graph whose 2-connected components are complete graphs. 
A block graph $G$ is \emph{$k$-uniform} if every block of $G$ has size $k= \omega(G)$.

In the next definition there is a slight abuse of notation as we consider that the first color of a proper coloring of a graph $G$ might be zero. Let $G$ be any graph, $u\in V(G)$, and $c,d$ be positive integers. We say that an orientation $D$ of $G$ is a \emph{proper orientation compensated by $(c,d,u)$} if $d^-_D(u)=d$ and the following is a proper coloring of $G$:

\[\fc{D}{u}{c}(v) = \left\{\begin{array}{ll}
d^-_D(v) & \text{, if } v\neq u\\
c      & \text{, otherwise}
\end{array}\right.\]

A block is called a \emph{leaf block} if it contains exactly one cut-vertex of $G$. In what follows, given a $k$-uniform block graph $G$, we show how to obtain a proper $(3k-2)$-orientation of $G$ by removing blocks in the vicinity of the leaf blocks, orienting the remaining graph, then extending the orientation to the removed blocks. For this, we treat smaller cases first. 

We say that a block graph $G$ is a \emph{path block graph} if its block cut-point graph is a path. Note that it thus consists of a sequence of cliques $C_1,\dotsc,C_q$ such that $C_i\cap C_{i+1}\neq C_{i+1}\cap C_{i+2}$ for every $i\in [q-2]$. Denote by $u_i$ the vertex in $C_i\cap C_{i+1}$. 

\begin{lemma}
Given $k\geq 3$, let $G$ be a $k$-uniform path block graph with clique sequence $C_1, \dotsc, C_q$, let $u\in C_q$ be a non-cut-vertex, and $c,d$ be positive integers such that 
either $c > k - 1\geq d$ or $c = d = k-1$. Then, there exists a proper $\ell$-orientation of $G$ compensated by $(c,d,u)$, where  $\ell=\max\{c,2k-2\}$.\label{lem:pathblock}

\end{lemma}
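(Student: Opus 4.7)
We argue by induction on $q$. For $q=1$, $G$ is the single $k$-clique $C_1$; orient it transitively so that $u$ has indegree $d$, giving the remaining vertices the pairwise distinct indegrees $\{0,\dotsc,k-1\}\setminus\{d\}$, all at most $k-1$. In Case~1 ($c>k-1$) these values are strictly below $c$; in Case~2 ($c=d=k-1$) they are at most $k-2$. So the compensated coloring is proper and the maximum indegree $k-1$ is below $\ell\ge 2k-2$.

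For $q\ge 2$, let $u_{q-1}$ be the cut-vertex between $C_{q-1}$ and $C_q$, and let $G'=C_1\cup\dotsb\cup C_{q-1}$. Orient $C_q$ transitively so that $u$ has indegree $d$ and $u_{q-1}$ has a chosen indegree $i_{q-1}\in\{0,\dotsc,k-1\}\setminus\{d\}$; the remaining $k-2$ non-cut, non-$u$ vertices of $C_q$ (the \emph{pendants}) then take the distinct indegrees $\{0,\dotsc,k-1\}\setminus\{d,i_{q-1}\}$. Let $d'$ denote the indegree $u_{q-1}$ will receive from $G'$-edges, so its total $G$-indegree becomes $d_{q-1}=i_{q-1}+d'$. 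We obtain the orientation of $G'$ by invoking the inductive hypothesis with parameters $(c',d',u_{q-1}):=(d_{q-1},d',u_{q-1})$: the compensated condition at $u_{q-1}$ in $G'$ (using color $d_{q-1}$ in place of the $G'$-indegree $d'$) is exactly what makes the glued orientation proper at the $G'$-edges incident to $u_{q-1}$. For the combined orientation to be compensated by $(c,d,u)$ and have maximum indegree at most $\ell$, we additionally need (i)~$d_{q-1}\neq c$ (edge $u_{q-1}u$), (ii)~$d_{q-1}$ distinct from every pendant indegree, (iii)~$(c',d')$ fits the hypothesis, and (iv)~$d_{q-1}\le\ell$.

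We realize the choice via two strategies. \emph{Strategy~A} picks $i_{q-1}$ and $d'$ so that $d_{q-1}\ge k$: then (ii) is automatic (pendants have indegrees $\le k-1$) and the first branch of the hypothesis applies ($c'=d_{q-1}>k-1$, $d'\le k-1$). Concretely, when $d<k-1$ set $i_{q-1}=k-1$ and $d'\in[1,k-1]\setminus\{c-k+1\}$; when $d=k-1$ set $i_{q-1}=k-2$ and $d'\in[2,k-1]\setminus\{c-k+2\}$ (in Case~2 any $d'\in[2,k-1]$ works since $c=k-1$ is not in the resulting $d_{q-1}$-range). \emph{Strategy~B}, applicable only when $d=k-1$ and $c\neq k-1$, sets $i_{q-1}=0$ and $d'=k-1$, so $d_{q-1}=k-1$, pendants get indegrees in $\{1,\dotsc,k-2\}$, and the second branch of the hypothesis applies. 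A short arithmetic check shows the valid $d'$-set under Strategy~A is non-empty for every $k\ge 3$ except the single corner case $(k,d,c)=(3,2,3)$ of Case~1, which is then resolved by Strategy~B. In every configuration $d_{q-1}\le 2k-2\le\ell$, and the inductive bound $\ell'=\max\{d_{q-1},2k-2\}\le\ell$ ensures (iv) holds throughout. The main obstacle is isolating this corner case and confirming that Strategy~B fills the gap; the rest is a careful bookkeeping of (i)--(iv) across the sub-cases of the hypothesis on $(c,d)$.
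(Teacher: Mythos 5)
Your proof is correct; I checked the arithmetic in both strategies and the corner case $(k,d,c)=(3,2,3)$ is indeed the only configuration where Strategy~A fails, and Strategy~B does cover it. The overall skeleton (induction on $q$, peeling off $C_q$, gluing a transitive orientation of $C_q$ to a compensated orientation of $G'$ at $u_{q-1}$) matches the paper, but the inductive step is organized quite differently. The paper splits on the value of $d$: when $d\ge 1$ it does not recurse at all --- it takes \emph{any} proper $(2k-2)$-orientation of $G'$ in which $u_{q-1}$ is a source (Corollary~\ref{cor:source_surplus}) and makes $u_{q-1}$ the source of $C_q$ as well, so $u_{q-1}$ keeps indegree $0$ and all bookkeeping disappears; only when $d=0$ does it invoke the inductive hypothesis, pushing $u_{q-1}$ to indegree $2k-2$ (or $2k-3$ when $c=2k-2$). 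You instead recurse uniformly and carry the target indegree $d_{q-1}=i_{q-1}+d'$ of $u_{q-1}$ through conditions (i)--(iv), which forces the case split on $d<k-1$ versus $d=k-1$ and the isolated $k=3$ exception. The paper's route is shorter because the source trick removes most constraints at once; yours is more uniform and self-contained (it never needs Corollary~\ref{cor:source_surplus}), at the cost of the extra arithmetic. One cosmetic remark: like the paper's own proof, your argument actually works for $d=0$ (needed later in Lemma~\ref{lem:cleanPaths}), even though the statement nominally requires $d$ positive; it would be worth saying so explicitly.
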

\begin{proof}
By induction on $q$. If $q=1$, then let $D$ be any transitive orientation of $G$ corresponding to a total ordering $(v_1,\dotsc, v_n)$ of $V(G)$ such that $v_{d+1} = u$. Thus, $d^-_D(v_i) = i-1$ for each $i\in [n]$, and in particular $d^-_D(u)=d$ as required. Consequently, $\fc{D}{u}{c}(v_i) = i-1$, for every $i\in [n]\setminus \{d+1\}$. Note that, regardless $c > k - 1\geq d$ or $c = d = k-1$, we have that $D$ is a proper $c$-orientation of $G$ compensated by $(c,d,u)$.

Now suppose $q\ge 2$ and let $G'$ be obtained from $G$ by removing the vertices in $C_q\setminus \{u_{q-1}\}$. 
In case $d\geq 1$, let $D'$ be a proper $(2k-2)$-orientation of $G'$ in which $u_{q-1}$ has indegree zero (it exists, by Corollary~\ref{cor:source_surplus}). Let $D$ be an orientation of $G$ obtained from $D'$ by taking a transitive orientation of $C_q$ in such a way that $u_{q-1}$ is the source and $u$ has in-degree exactly $d$. Note that $D$ is a proper $\ell$-orientation of $G$ compensated by $(c,d,u)$.

Assume then that $d=0$ (which implies that $c>k-1\geq 1$, by hypothesis). In case $c\neq 2k-2$, let $D'$ be a proper $\ell$-orientation of $G'$ compensated by $(2k-2,k-1,u_{q-1})$. Let $D$ be an orientation of $G$ obtained from $D'$ by taking a transitive orientation of $C_q$ in such a way that $u_{q-1}$ is the sink and $u$ is the source. Again, since $c\neq 2k-2$ and since $D'$ is a proper $\ell$-orientation of $G'$ compensated by $(2k-2,k-1,u_{q-1})$, note that $D$ is a proper $\ell$-orientation of $G$ compensated by $(c,d,u)$. Finally, suppose that $c=2k-2$. Analogously, let $D'$ be a proper $\ell$-orientation of $G'$ compensated by $(2k-3,k-1,u_{q-1})$. Let $D$ be an orientation of $G$ obtained from $D'$ by taking a transitive orientation of $C_q$ in such a way that $u_{q-1}$ has indegree $k-2$ and $u$ is the source (recall that $k\geq 3$). Once more, one may deduce that $D$ is a proper $\ell$-orientation of $G$ compensated by $(c,d,u)$.
\end{proof}


Let us now introduce some notations that we use in the next key lemmas. Consider that $G$ is a connected $k$-uniform block graph, for $k\geq 3$. Recall that its block-cutpoint graph is a tree $T$. With a slight abuse of notation, when we refer to a block $B$, we also mean the vertex in $T$ corresponding to $B$.
Thus, we consider that $G$ is rooted at some block $R$ (it means rooting $T$ at the vertex corresponding to $R$). Given a cut-vertex $u$, let $B_u$ be the parent block of $u$ (similarly, the vertex corresponding to $B_u$ is the parent of $u$ in the block-cutpoint tree rooted at $R$), and $B^1_u,\dotsc, B^q_u$ be its children blocks (defined analogously). We denote by $H^i_u$ the component of $G-u$ containing $B^i_u\setminus\{u\}$, and by $G^i_u$ the subgraph of $G$ induced by $V(H^i_u)\cup \{u\}$. We say that $u$ is a \emph{path connector} if $H^i_u$ is a path block graph for every $i\in [q]$. We also call each $B_u^i$ a \emph{path block}, i.e. the vertices in $B_u^i$ together with the vertices in its descendant blocks induce a path block subgraph.

As we prove ou main result by contradiction, we say that a $k$-uniform block graph $G$ for $k\geq 3$ is a \emph{minimum counter-example} (MCE for short) if $G$ is a connected $k$-uniform block graph such $\po(G)>3k-2$ having $p$ blocks, but any connected $k$-uniform block graph $H$ with less than $p$ blocks satisfy $\po(G)\leq 3k-2$.

\begin{lemma}\label{lem:cleanPaths}
If $G$ is a rooted MCE, then each path connector $u$ of $G$ has at most two children blocks.
\end{lemma}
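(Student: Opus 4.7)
The plan is to argue by contradiction. Suppose $G$ is an MCE and $u$ is a path connector of $G$ having $q \geq 3$ children blocks $B^1_u,\ldots, B^q_u$. I will construct a proper $(3k-2)$-orientation of $G$, contradicting the MCE hypothesis, by first orienting a smaller subgraph using minimality and then extending across each $G^i_u$ via the path block lemma stated just above.

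First, I would let $G' = G - \bigcup_{i=1}^q V(H^i_u)$, i.e., remove from $G$ everything strictly below $u$ while keeping $u$ itself. Then $G'$ is a connected $k$-uniform block graph with strictly fewer blocks than $G$, since we deleted at least the $q\geq 3$ blocks $B^i_u$ (and their descendants). By the MCE property, there exists a proper $(3k-2)$-orientation $D'$ of $G'$; set $d^* = d_{D'}^-(u)$ and note that $d^* \leq k-1$, as every neighbor of $u$ in $G'$ lies in $B_u$.

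The key step is to choose a target indegree $c^*$ for $u$ in the final orientation and to distribute the extra indegree across the new children blocks. I would pick $c^* \in \{k, k+1, \ldots, 3k-3\}$ with $c^* \neq d_{D'}^-(v)$ for every $v \in B_u \setminus \{u\}$; such a $c^*$ exists by counting, since the interval contains $2k-2$ integers while at most $k-1$ of them are forbidden, leaving at least $k-1 \geq 2$ admissible choices. Then I would choose integers $d_1, \ldots, d_q \in \{0, \ldots, k-1\}$ with $\sum_{i=1}^q d_i = c^* - d^*$, which is feasible because $0 \leq c^* - d^* \leq 3k-3 \leq q(k-1)$.

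Finally, I would apply the preceding path block lemma to each $G^i_u$, viewed as a path block graph in which $B^i_u$ is the last clique of a clique sequence and $u$ is a non-cut-vertex, with parameters $(c^*, d_i, u)$. Since $c^* \geq k > k-1 \geq d_i$, the hypotheses hold, and we obtain a proper orientation $D_i$ of $G^i_u$ compensated by $(c^*, d_i, u)$ whose maximum indegree is at most $\max\{c^*, 2k-2\} \leq 3k-3$. Combining $D'$ on $G'$ with each $D_i$ on $G^i_u$ yields an orientation $D$ of $G$ satisfying $d_D^-(u) = d^* + \sum_i d_i = c^*$. Edges between $u$ and $B_u \setminus \{u\}$ are handled by the choice of $c^*$, edges between $u$ and $B^i_u \setminus \{u\}$ by the compensated property of $D_i$, and every remaining adjacency lies entirely inside $G'$ or inside a single $G^i_u$, where $D'$ or $D_i$ is already proper. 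Thus $D$ is a proper $(3k-2)$-orientation of $G$, contradicting $G$ being an MCE. The main obstacle I expect is the simultaneous feasibility of $c^*$ and the $d_i$ while keeping $c^*\geq k$: the counting bounds are tight, but they work out precisely because $q \geq 3$ and $k \geq 3$.
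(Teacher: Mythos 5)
Your proof is correct and follows essentially the same strategy as the paper's: delete the components hanging below the path connector $u$, invoke minimality of the counter-example on the remainder, pick a target indegree $c$ for $u$ that avoids the indegrees of $B_u\setminus\{u\}$, and distribute the deficit $c-d^-_{D'}(u)$ over the $q\ge 3$ children blocks via Lemma~\ref{lem:pathblock}. The only (harmless) differences are that you draw $c$ from $\{k,\dotsc,3k-3\}$ rather than $\{2k-2,\dotsc,3k-2\}$ and thereby fold the paper's separate case $d^-_{D'}(u)=0$ (handled there via Corollary~\ref{cor:source_surplus}) into the general argument.
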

\begin{proof}
By contradiction, assume that $u$ is a path connector of $G$ having $q\geq 3$ children blocks. Because $G$ is $k$-uniform, note that $d_G(u)\geq 4k-4>3k-3$, as $k\geq 3$.

Let $G' = G - \bigcup_{i=1}^q V(H^i_u)$. Since $G$ is an MCE, let $D'$ be a proper $(3k-2)$-orientation of $G'$. In the sequel, we show how to extend $D'$ to a proper $(3k-2)$-orientation $D$ of $G$, contradicting that $\po(G)>3k-2$. 

If $d^-_{D'}(u) = 0$, then to each $G_u^i$ we apply Corollary~\ref{cor:source_surplus} and obtain an a proper orientation $D^i$ of $G_u^i$ in which $u$ is a source, for each $i\in [q]$. Since $\Delta(G-V(G'))\le 2k-2$, each $D^i$ is a proper $(2k-2)$-orientation of $G_u^i$. Thus, if we extend $D'$ by using $D^i$ in each $G_u^i$ we obtain a proper $(3k-2)$-orientation of $G$, a contradiction.

Thus, assume that $d^-_{D'}(u) \geq 1$. Let $F =\{d^-_{D'}(v)\mid v\in B_u\setminus \{u\}\}$ be the set of forbidden indegrees to $u$. Let $c\in \{2k-2,\dotsc,3k-2\}\setminus F$. Note that $c$ exists since $|F|\le k-1$. We want to ensure that we can extend $D'$ to a proper $(3k-2)$-orientation of $G$ in such a way that $u$ has indegree $c$. 

Define $d'= c-d^-_{D'}(u)$ as the \emph{missing indegree for $u$}. Observe that $k-1\le d' \le 3k-3$ (recall that $1\le d^-_{D'}(u)\le k-1)$. Let $d_1 = k-1$, $d_2=\min\{k-1,d'-(k-1)\}$, $d_3 = \max\{0,d'-(2k-2)\}$ and $d_i=0$, for each $i\in\{4,\ldots, q\}$. By Lemma~\ref{lem:pathblock}, let $D^i$ be a proper $(3k-2)$-orientation of $G_u^i$ compensated by $(c,d_i,u)$, for each $i\in[q]$. Again, if we extend $D'$ by using $D^i$ in each $G_u^i$ we obtain a proper $(3k-2)$-orientation of $G$, a contradiction.
\end{proof}

Let us add some notation. If $G$ is a rooted $k$-uniform block graph, $k\geq 3$, formed by a root block $B$ with cut-vertices $u_1,\dotsc,u_q$ and such that each $u_i$ is a path connector, then we say that $G$ is a \emph{crossroad block graph}, and we call block $B$ the \emph{cross-point block}. Now, if $u$ is a cut-vertex in a rooted $k$-uniform block graph $G$ that is not a path connector and such that every children block of $u$ is either a path block or a cross-point block, then we say that $u$ is a \emph{cluster cut-vertex}. 

\begin{lemma}\label{lem:cleanClusters}
If $G$ is a rooted MCE, then $G$ has no cluster cut-vertex.
\end{lemma}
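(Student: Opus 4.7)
The plan is to argue by contradiction in the same spirit as \cref{lem:cleanPaths}. Suppose $G$ is a rooted MCE containing a cluster cut-vertex $u$, with children blocks $B^1_u,\dots,B^q_u$; since $u$ is not a path connector, at least one of these, say $B^1_u$, is a cross-point block. Let $G' = G - \bigcup_{i=1}^q V(H^i_u)$. Then $G'$ is a $k$-uniform block graph with strictly fewer blocks, so by minimality it admits a proper $(3k-2)$-orientation $D'$. The goal is to extend $D'$ to a proper $(3k-2)$-orientation $D$ of $G$, which contradicts the MCE hypothesis.

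I would follow the bookkeeping of \cref{lem:cleanPaths} closely: compute the forbidden set $F = \{d^-_{D'}(v) \mid v \in B_u \setminus \{u\}\}$ (of size at most $k-1$), choose a target color $c \in \{2k-2, \dots, 3k-2\}\setminus F$ for $u$ (non-empty since the range has size $k$), and split the missing indegree $d' = c - d^-_{D'}(u)$ into contributions $d_1,\dots,d_q$ to be realized inside each $G^i_u$. Path block children are handled directly by \cref{lem:pathblock}. The case $d^-_{D'}(u) = 0$ is handled separately via source-type extensions: for each $G^i_u$ make $u$ a source by \cref{cor:source_surplus} (for path block children) or by the analogous statement for crossroad block graphs below.

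The principal obstacle, and the only genuinely new ingredient compared with \cref{lem:cleanPaths}, is handling cross-point block children. For these, I would establish as an auxiliary claim inside this proof a compensation statement for crossroad block graphs analogous to \cref{lem:pathblock}: given a crossroad block graph $H$ rooted at its cross-point block $B$, a non-cut-vertex $u \in B$, and a feasible pair $(c,d)$ with $c\in \{2k-2,\dots,3k-2\}$ and $d \le 3k-3$, there is a proper $(3k-2)$-orientation of $H$ compensated by $(c,d,u)$. The construction would first orient the clique $B$ so that $u$ attains the required contribution to its indegree from within $B$ while leaving each path connector $v_j \in B$ with an intermediate indegree that, together with the contributions delivered by its path block arms, can be completed by \cref{lem:pathblock}. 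Here \cref{lem:cleanPaths} is crucial: each $v_j$ has at most two path block children, so the total indegree that has to be routed through each $v_j$ stays inside the compensation regime of \cref{lem:pathblock}.

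Once this auxiliary crossroad compensation is in hand, assembling the extension block-by-block produces the desired proper $(3k-2)$-orientation of $G$, yielding a contradiction. The most delicate verification will be to check that the missing indegree $d'$ can always be distributed across the children $G^1_u,\dots,G^q_u$ so that each individual $d_i$ lies in the admissible $(c,d_i,u)$ compensation regime provided by either \cref{lem:pathblock} or the auxiliary crossroad compensation; the edge case where $u$ has only one (necessarily cross-point) child block, and hence all the missing indegree must be realized inside that child, is where care is needed, and it is handled by choosing $c$ and the sub-orientation of the cross-point block simultaneously so that $d' \le k-1$ becomes a feasible target for the single child.
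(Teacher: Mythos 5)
Your overall architecture (minimal counterexample, delete the subgraphs hanging below the cluster cut-vertex $u$, extend a proper $(3k-2)$-orientation of the remainder by compensation, with a ``crossroad'' analogue of Lemma~\ref{lem:pathblock} for cross-point children) is the right one and matches the paper's proof in spirit. The gap is in the quantitative core of your extension step: you propose to always lift $u$ to a target indegree $c\in\{2k-2,\dots,3k-2\}\setminus F$, but when $u$ has only one or two children blocks this is impossible for lack of incident edges. With a single (cross-point) child, $d_G(u)=2k-2$, so the largest indegree $u$ can ever attain is $d^-_{D'}(u)+(k-1)\le 2k-2$; the only candidate is then $c=2k-2$, which requires $d^-_{D'}(u)=k-1$ and may moreover lie in $F$. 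Your proposed fix for this edge case (``choose $c$ so that $d'\le k-1$'') does not escape the problem: $c\ge 2k-2$ together with $d'\le k-1$ still forces $d^-_{D'}(u)=k-1$ and $c=2k-2$. With two children blocks and $d^-_{D'}(u)$ small, the achievable window $\{2k-2,\dots,d^-_{D'}(u)+2k-2\}$ can have fewer than $k$ elements and be swallowed entirely by $F$, which has size $k-1$.

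What is actually needed for one or two children --- and what the paper does --- is the opposite strategy: leave $u$'s indegree essentially as it is in $D'$ (so at most $k-1$, automatically compatible with its parent block), make $u$ a source or near-source of the cross-point child, and instead push the path connectors $u_1,\dots,u_q$ of that child up to pairwise distinct indegrees in $\{k,\dots,2k-2\}$ via Lemma~\ref{lem:pathblock}, which separates them from $u$ and from the non-cut vertices of the block. The two-children subcase in which the sibling is a path block is the most delicate: there one must un-orient and re-orient the already-oriented path-block child so that $u$ can absorb $r\ge 1$ arcs from the cross-point block, and choose an injective assignment of target indegrees in $\{k,\dots,2k-2\}\setminus\{d^-(u)\}$ to the path connectors. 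Your sketch does not address this case, and without it (and the corrected small-degree strategy) the argument does not go through. Your high-$c$ strategy is sound only when $u$ has at least three children blocks, where $d_G(u)\ge 4k-4$ guarantees both enough incident edges and $\{2k-1,\dots,3k-2\}\setminus F\neq\emptyset$; that is exactly the one case the paper handles your way. A small further point: in your auxiliary crossroad lemma the compensated vertex is a non-cut vertex of the cross-point block, so it has degree $k-1$ there and the hypothesis $d\le 3k-3$ should read $d\le k-1$.
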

\begin{proof}
By contradiction, let $u$ be a cluster cut-vertex of $G$ having $p\geq 1$ children blocks $B_u^1,\ldots, B_u^p$ and parent block $B_u$. Since $u$ is a cluster cut-vertex, at least one of its children blocks must be a cross-point block. Without loss of generality, assume that $B_u^1$ is a cross-point block and let $u_1,\ldots,u_q$ be the cut-vertices of $B_u^1$ distinct from $u$. Note that $1\leq q\leq k-1$. Moreover, since each $u_i$ is a path connector, by Lemma~\ref{lem:cleanPaths} we get that $u_i$ has at most two children blocks $B_{u_i}^1$ and $B_{u_i}^2$ and thus $d_G(u_i)\leq 3k-3$.

\medskip

\noindent{\bf Case $p=1$:} let $G^1 = G - V(H^1_u)$. Since $G$ is an MCE, let $D^1$ be a proper $(3k-2)$-orientation of $G^1$. Let us show how to extend $D^1$ into a proper $(3k-2)$-orientation $D$ of $G$, contradicting the hypothesis that $\po(G)>3k-2$. 

Note that $d_{D^1}^-(u)\leq k-1 = d_{G^1}(u)$.
In order to obtain $D$ from $D^1$, first take a transitive orientation $D_B$ of $B_u^1$ such that $u$ is the source and $u_i$ satisfies $d_{D_B}^-(u_i) = i$, for each $i\in\{1,\ldots, q\}$. Then, let $D_i^{1}$ be a proper $(2k-2)$-orientation of $G_{u_i}^{1}$ compensated by $(k-1+i,k-1,u_i)$; and, if $B_{u_i}^2$ exists, let $D_i^{2}$ be a proper $(2k-2)$-orientation of $G_{u_i}^{2}$ compensated by $(k-1+i,0,u_i)$, otherwise let $D_i^{2}=\emptyset$.

Let $D$ be the orientation of $G$ obtained by the union of $D^1$, $D_B$ and $D_i^{j}$, for every $i\in\{1,\ldots, q\}$ and each $j\in\{1,2\}$. Note that $D$ is a proper $(3k-2)$-orientation of $G$ as $D_B$ is transitive, $d_D^-(u) = d_{D^1}^-(u) \leq k-1$, $k\leq d_D^-(u_i) = k-1+i\leq 2k-2$ and Lemma~\ref{lem:pathblock} holds.
\medskip

\noindent{\bf Case $p = 2$:} In case $B_u^1$ and $B_u^2$ are both cross-point blocks, then we can use the same idea as in case $p=1$ twice as we may extend a proper $(3k-2)$-orientation of $G-V(H_u^1)-V(H_u^2)$ to $G_u^1$ and then to $G_u^2$.

Suppose then that $B_u^2$ is a path block. Let $D^2$ be a proper $(3k-2)$-orientation of $G^2=G-V(H_u^1)$. In case $d_{D^2}^-(u)\notin\{k,\ldots, 2k-2\}$, then again we can use the same idea as in case $p=1$ to extend $D^2$ into a proper $(3k-2)$-orientation $D$ of $G$.

Otherwise, note that since $d_{G^2}(u)=2k-2$, then $u$ is the tail of $r\geq 1$ arcs whose heads are in $B_u^2$. In order to obtain $D$, we first remove the orientation of all edges in $B_u^2$. Then, we take a transitive orientation $D_B$ of $B_u^1$ such that $u_1$ is a source and $d_{D_B}^-(u)=r\geq 1$. Let $S=\{k,\ldots,2k-2\}\setminus\{d_{D^2}^-(u)\}$. Note that $S$ has $k-2$ values that we now use to set the indegrees of $u_2,\ldots, u_q$ ($u_1$ will have indegree $k-1$). Let $g:\{u_2,\ldots,u_q\}\to S$ be any injective function. Now, using Lemma~\ref{lem:pathblock}, we take a proper $(2k-2)$-orientations:

\begin{itemize}
    \item $D_{u_1}^1$ of $G_{u_1}^1$ compensated by $(k-1,k-1,u_1)$;
    \item $D_{u_1}^2=\emptyset$ if $G_{u_1}^2$ is empty, or compensated by $(k-1,0,u_1)$ otherwise;
    \item $D_{u_i}^1$ of $G_{u_i}^1$ compensated by $(g(u_i), g(u_i)-d_{D_B}^-(u_i),u_i)$, for each $i\in\{2,\ldots, q\}$;
    \item $D_{u_i}^2=\emptyset$ if $G_{u_i}^2$ is empty, or compensated by $(g(u_i), 0,u_i)$, for each $i\in\{2,\ldots, q\}$.
\end{itemize}
Finally, take any proper $(2k-2)$-orientation $D_u^2$ of $G_u^2$ compensated by $(d_{D^2}^-(u), 0, u)$. Note that the orientation $D$ obtained by the union of all such orientations is a proper $(3k-2)$-orientation of $G$ because $d_D^-(u) = d_{D^2}^-(u)$, $D_B$ is transitive, $g$ is injective, $d_D^-(u)$ does not belong to the image of $g$ and Lemma~\ref{lem:pathblock} holds.

\medskip
\noindent{\bf Case $p\geq 3$:} let $G^3 = G - \bigcup_{i=1}^3 V(H^i_u)$. Since $G$ is an MCE, let $D^3$ be a proper $(3k-2)$-orientation of $G^3$. Again, we extend $D^3$ into a proper $(3k-2)$-orientation $D$ of $G$, contradicting the hypothesis that $\po(G)>3k-2$.

In case $d_{D^3}^-(u)=0$, then we claim that we can extend $D^3$ to $G$ by using Corollary~\ref{cor:source_surplus}. In fact, the subgraph $G^*$ of $G$ induced by $V(G_u^1)\cup V(G_u^2)\cup V(G_u^3)$ has maximum degree $3k-3$ thanks to Lemma~\ref{lem:cleanPaths}. Thus, by Corollary~\ref{cor:source_surplus} we can obtain a proper $(3k-3)$-orientation $D^*$ of $G^*$ in which $u$ is a source. By joining $D^*$ to $D^3$, we obtain a proper $(3k-2)$-orientation of $G$.

Thus, assume that $d_{D^3}^-(u)\geq 1$. We first increase the indegree of $u$ to ease the process of extending $D^3$. Let $F=\{d_{D^3}(w)\mid (u\neq w)\text{ and }w\in B_u\}$ be the forbidden indegrees to $u$ by its neighbors in $B_u$. Note that $|F|=k-1$ and thus there exists $c\in \{2k-1,\ldots, 3k-2\}\setminus F$. Let $b_1,b_2,b_3$ be integers in $\{0,\cdots,k-1\}$ such that $(d_{D^3}^-(u)+b_1+b_2+b_3)$ is equal to $c$.

To obtain $D$, we first take transitive orientations $D_u^i$ of $B_u^i$ in such a way that $u$ has indegree $b_i$, for each $i\in\{1,2,3\}$. Thus, we set the indegree of $u$ to be $c = d_{D}^-(u)+b_1+b_2+b_3\geq 2k-1$. Now, since the indegree of $u$ is greater than $2k-2$, one can easily apply Lemma~\ref{lem:pathblock} to find orientations to the path blocks that are descendant of each $B_u^i$, for each $i\in\{1,2,3\}$.
\end{proof}

We are finally ready to prove the main result of this section.

\begin{theorem}
Let $G$ be a $k$-uniform block graph. Then $\po(G)\le 3k-2$.
\end{theorem}
\begin{proof}
Since $\po(T)\leq 4$ for any tree $T$~\cite{ACD+15,Knoxetal2017}, and $\po(G)\leq 7$ for any cactus graph $G$~\cite{AHLS16}, we assume that $k\geq 4$. 

By contradiction, let $G$ be an MCE having $q$ blocks. Note that $q\geq 4$ as otherwise $\po(G)\leq\Delta(G)\leq 3k-3$. Furthermore, $G$ cannot be a path block graph as otherwise $\po(G)\leq \Delta(G)\le 2k-2$.

Consider $G$ to be rooted in some block $R$. Since $G$ is not a path block graph, there must be a cut-vertex $u$ of $G$  whose children blocks are either cross-point blocks or path blocks.
However, this contradicts Lemma~\ref{lem:cleanClusters} as $u$ is a cluster cut-vertex.
\end{proof}

Now we deal with the subclass of connected $k$-uniform block graphs in which each block contains at most two cut-vertices, $k\ge 3$. 

\begin{proposition}
\label{prop:particular_k_uniform_block}
If $G$ is a connected $k$-uniform block graph such that $k\geq 3$ and each block contains at most two cut-vertices, then $\po(G)\leq k+1$.
\end{proposition}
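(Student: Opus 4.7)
My plan is to construct a proper $(k+1)$-orientation by designing a target indegree assignment and then realizing it via a careful orientation of each block.

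First I would introduce the \emph{cut-vertex adjacency graph} $H$ of $G$: its vertices are the cut-vertices of $G$, and $uv$ is an edge whenever $u$ and $v$ share a block. Because every block of $G$ contains at most two cut-vertices, and because two cut-vertices share at most one common block (blocks in a block graph meet in at most one vertex), $H$ is a simple graph. I would show that $H$ is a forest: any cycle of $H$ would lift to a cycle through distinct blocks in the block-cutpoint tree of $G$, contradicting that the block-cutpoint tree is a tree. Moreover, if $G$ is connected with at least two cut-vertices, then $H$ has no isolated vertex, since otherwise some cut-vertex's blocks would all be leaf blocks, disconnecting $G$ from any other cut-vertex. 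The two degenerate cases, $G = K_k$ and $G$ being a ``star of blocks'' at a single cut-vertex, can be oriented directly with $\po(G)\leq k-1$.

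I would then properly 2-color $H$ with the colors $k$ and $k+1$ (possible since $H$ is bipartite) and set $c(u)\in\{k,k+1\}$ accordingly for each cut-vertex $u$. For each non-cut-vertex $w$ in a block $B$, I would assign $c(w)\in\{0,\ldots,k-1\}$ so that the $k$ vertices of $B$ all receive pairwise distinct colors in $\{0,\ldots,k+1\}$; this is possible because each block has at most two cut-vertices (using at most two of $\{k,k+1\}$) and at most $k-1$ non-cut-vertices, while $k$ values remain available in $\{0,\ldots,k-1\}$.

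The final step is to orient $G$ so that $d^-_D(v)=c(v)$ for every $v$. Within each block $B$, this prescribes $d^{in}_B(w)=c(w)$ for the non-cut-vertices, while the cut-vertex in-arc counts $d^{in}_B(u)$ must both sum (within $B$) to $\binom{k}{2}-\sum_{w} c(w)$ and, globally over all blocks containing $u$, to $c(u)$. In a block with two cut-vertices there are two ``missing'' non-cut-vertex colors giving a free pair of parameters, and in a leaf block there is one, providing enough flexibility to set the in-arc counts consistently by traversing $H$ in a root-first order. The main obstacle will be to simultaneously verify (a) tournament-realizability in each block via Landau's criterion (satisfied since each prescribed in-arc count lies in $\{0,\ldots,k-1\}$ with sum $\binom{k}{2}$), and (b) the global sum constraints for every cut-vertex; for the latter I would exploit that each cut-vertex lies in at least two blocks, so $c(u)\in\{k,k+1\}$ is within the attainable range (since $2(k-1)\geq k+1$ for $k\geq 3$), using leaf blocks as ``buffers'' that absorb any discrepancy as the choices propagate along the forest $H$.
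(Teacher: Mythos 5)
Your structural observations are fine (the cut-vertex adjacency graph $H$ is indeed a forest, and the degenerate cases are handled correctly), but the plan breaks at its core step: restricting every cut-vertex's target indegree to $\{k,k+1\}$ via a proper $2$-colouring of $H$ is not always realizable. Take $k=3$ and let $G$ be a path of five triangles $B_1,\dots,B_5$ with $B_i\cap B_{i+1}=\{u_i\}$, so that $H$ is the path $u_1u_2u_3u_4$ and each $u_i$ has degree $2k-2=4$. Whichever colour class receives the value $k+1=4$, those cut-vertices must be global sinks, since their target equals their degree. Say $u_2$ and $u_4$ are sinks. In the triangle $B_4=\{u_3,u_4,w\}$ both edges at $u_4$ point to $u_4$, so $u_3$ receives at most one in-arc from $B_4$; to reach its target $c(u_3)=3$ it must then receive both of its $B_3$-arcs, i.e.\ be a sink of $B_3$ --- but $u_2$ is already a sink of $B_3$, and a triangle cannot have two sinks (the edge $u_2u_3$ would have to be oriented both ways). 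The symmetric colouring fails identically, so no orientation realizes your target assignment, regardless of how the non-cut-vertices are coloured and regardless of realizability inside each block. A secondary but also genuine error is the parenthetical claim that Landau's criterion is ``satisfied since each prescribed in-arc count lies in $\{0,\ldots,k-1\}$ with sum $\binom{k}{2}$'': Landau requires all partial-sum inequalities, and these can fail --- for instance, assigning within-block in-arc count $0$ to a cut-vertex of a block that already contains a non-cut-vertex of colour $0$ puts two zeros in the score sequence, which no tournament realizes.

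The missing idea is a third safe value for cut-vertices, namely $0$: a vertex of indegree $0$ is a source and hence automatically conflict-free with all of its neighbours, and this extra value is exactly what breaks the parity obstruction above. This is how the paper proceeds: it roots the block-cutpoint tree, uses only transitive orientations of the blocks (so realizability never arises), and maintains the invariant that each cut-vertex receives $0$, $1$ or $2$ in-arcs from its parent block; a designated ``leftmost'' child block then either keeps it a source (total indegree $0$) or makes it a sink there (total indegree $k$ or $k+1$), while the second cut-vertex of each child block is given a complementary small in-arc count ($0$, $1$ or $2$) ensuring that adjacent cut-vertices never both land on $k$ or both on $k+1$. If you replace your two-valued labelling of $H$ by this three-valued, tree-propagated scheme and drop the appeal to Landau in favour of transitive tournaments, your outline can be repaired.
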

\begin{proof}
Let us consider the block cut-point tree $T$ of $G$. Recall that $V(T)$ has a vertex $v_C$ corresponding to each maximal clique $C$ of $G$ and a vertex $u_x$ corresponding to each cut-vertex $x$ of $G$. Also, a vertex $v_C$ is adjacent to $u_x$ in $T$ if, and only if, $x\in C$. We further consider that the descendants of each vertex of $T$ are ordered (thus, we refer to the leftmost descendant as being the first in such ordering).

The idea is to build a proper $(k+1)$-orientation $D$ of $G$ by just using transitive orientations of the maximal cliques of $G$, while ensuring that the cut-vertices $x$ have in-degree $d_D^-(x)\in\{0,k,k+1\}$. We claim that it is enough to prove that there are no two adjacent cut-vertices with in-degree $d\in\{k,k-1\}$. Indeed, this is true since: vertices of in-degree zero are sources (implying that their neighbors have in-degree at least one); the orientation of each maximal clique is transitive; and vertices that are not cut-vertices have degree in $G$, and thus in-degree in $D$, at most $k-1$.
%

If $G$ has no cut-vertices, then $G$ is complete and $\po(G)\leq \Delta(G)=\omega(G)-1$. Otherwise, we root $T$ at a leaf node $v_R$, and let $r$ be the cut-vertex of $R$. We start building $D$ by orienting $R$ with any transitive orientation in which $r$ is a source.

Now, we consider a more general setup in which we have a partial orientation $D$ of $G$, we have to choose a transitive orientation of a clique $C'$ and we know that the clique $C$ in the path between $C'$ and $R$ is already oriented. 
Let $x$ be the common cut-vertex of $C$ and $C'$. During the construction of $D$, we will ensure $x$ has at most two arcs of $C$ oriented towards $x$. Thus, we choose the orientation of $C'$ according to the following case analysis.

If $x$ is the unique cut-vertex of $C'$, i.e. $C'$ is a leaf block of $G$, then we orient $C'$ with any transitive orientation satisfying that $x$ is a source in $C'$. Otherwise, let $x'$ be the other cut-vertex of $C'$ (recall that, by hypothesis, there are at most two of them).

If $x$ is a source of $C$, then orient $C'$ in such a way that $x$ is also a source of $C'$ and $x'$ has in-degree one in $C'$. Note that $x$ will have in-degree zero in $D$ in this case.

Consider now the case in which $x$ has in-degree one in $C$. If $v_{C'}$ is the leftmost child of $u_x$ in $T$, then we choose any transitive orientation of $C'$ such that $x$ is a sink in $C'$ and $x'$ is a source in $C'$. Otherwise, take any transitive orientation of $C'$ in which $x$ is a source in $C'$ and $x'$ has in-degree two in $C'$ (note that this is possible since $k\geq 3$). Observe that, since the leftmost child of $u_x$ is always oriented in a way that $x$ is a sink, we get that $x$ will have in-degree $k$ in $D$.

Finally, consider now the case in which $x$ has in-degree two in $C$. We proceed analogously to the previous case. If $v_{C'}$ is the leftmost child of $u_x$ in $T$, then we choose any transitive orientation of $C'$ such that $x$ is a sink in $C'$ and $x'$ is a source in $C'$. Otherwise, take any transitive orientation of $C'$ in which $x$ is a source in $C'$ and $x'$ has in-degree one in $C'$. Similarly as before, observe that in this case $x$ will have in-degree $k+1$ in $D$.

By construction, note that we cannot have two adjacent cut-vertices with the same in-degree $d\in \{k,k+1\}$.
\end{proof}

Let us now provide a tight example for Proposition~\ref{prop:particular_k_uniform_block}. Note that, unlike Proposition~\ref{prop:particular_k_uniform_block}, this example also holds for trees. Actually, Proposition~\ref{prop:tightexample_block2cutpoints} generalizes the example  of a tree having proper orientation number 3 that appeared in~\cite{ACD+15}. 

\begin{proposition}
\label{prop:tightexample_block2cutpoints}
For every $k \geq 2$, there exists a $k$-uniform block graph such that each block contains at most two cut-vertices and $\po(G)\geq k+1$.
\end{proposition}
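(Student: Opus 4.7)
The plan is to construct explicitly, for each $k\geq 2$, a $k$-uniform block graph $G_k$ in which every block contains at most two cut-vertices, and then argue by contradiction that $\po(G_k)\geq k+1$. For $k=2$ every block is an edge, so the structural constraint is vacuous, and we simply take $G_2$ to be the tree from~\cite{ACD+15} with proper orientation number exactly $3$. For $k\geq 3$ we generalise that example by taking the same tree $T$ as a skeleton and blowing up each edge $uv$ of $T$ into a $k$-clique on $\{u,v\}\cup Z_{uv}$, where $Z_{uv}$ is a set of $k-2$ fresh vertices used only in that clique. The blocks of $G_k$ are precisely these cliques; each contains at most the two endpoints of the corresponding edge of $T$ as cut-vertices of $G_k$, so the structural assumption is satisfied, and $G_k$ is clearly $k$-uniform.

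Suppose for contradiction that $D$ is a proper $k$-orientation of $G_k$. The key local observation is that for every block $K$ coming from an edge $uv$ of $T$, the restriction $D|_K$ has pairwise distinct in-degrees and is therefore a transitive tournament, so the $K$-in-degrees of the $k$ vertices of $K$ are exactly $\{0,1,\dots,k-1\}$. In particular, the $k-2$ vertices of $Z_{uv}$ have total in-degree equal to their $K$-in-degree and realise in $D$ precisely the set of values $\{0,\dots,k-1\}\setminus\{d^-_K(u),d^-_K(v)\}$. Since each of them is adjacent to $u$, the total in-degree $d^-_D(u)$ must avoid these $k-2$ values, which yields the constraint
\begin{equation*}
d^-_D(u)\ \in\ \{\,d^-_K(u),\ d^-_K(v),\ k\,\}
\end{equation*}
for every cut-vertex $u$ of $G_k$ and every block $K$ containing $u$ (and an analogous constraint when $v$ is itself a cut-vertex of $T$).

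Combining this constraint with the identity $d^-_D(u)=\sum_{K\ni u}d^-_K(u)$, an averaging/pigeonhole argument in the spirit of \cref{lem:cleanPaths} forces a severe dichotomy at each cut-vertex $u$ of $G_k$: either $d^-_D(u)=0$ and $u$ is a source in every block containing it, or $d^-_D(u)=k$ and the distribution of $u$'s local in-degrees among its incident blocks is rigidly constrained by the equation $\sum_{K\ni u}d^-_K(u)=k$ together with the fact that each summand must lie in $\{0,d^-_D(u),d^-_K(v_K)\}$. One then interprets this "$0$-or-$k$" behaviour, read off on the cut-vertices of $G_k$ and projected onto the vertices of $T$, as a proper orientation of $T$ with maximum in-degree at most $2$. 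Since the skeleton $T$ was chosen so that $\po(T)\geq 3$, no such orientation of $T$ exists, contradicting the existence of $D$ and completing the proof.

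The main obstacle is making the reduction from $D$ to a proper $2$-orientation of $T$ fully rigorous: the case $d^-_D(u)=k$ does not correspond literally to an in-degree in $\{0,1,2\}$ on $T$, and handling it requires a careful case analysis at every branching cut-vertex of $T$ using the local identity $d^-_D(u)=\sum_{K\ni u}d^-_K(u)$ and the per-block constraint derived above. Exploiting the specific branching pattern of the tree $T$ from \cite{ACD+15}, which is chosen precisely so that no proper $2$-orientation exists, to close this final case is the heart of the argument.
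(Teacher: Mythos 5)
Your construction (blow up each edge of a tree $T$ with $\po(T)\geq 3$ into a $k$-clique) is a legitimate candidate and does satisfy the structural requirements, but the proof attached to it has two genuine gaps. First, the ``key local observation'' is false as stated: in a block $K$ arising from an edge $uv$, only the \emph{total} indegrees $d^-_D$ of the vertices of $K$ must be pairwise distinct. The $K$-indegrees $d^-_K$ of the cut-vertices $u$ and $v$ need not be distinct from each other or from those of the fresh vertices, because $u$ and $v$ accumulate indegree from other blocks. Hence $D|_K$ need not be a transitive tournament, and the fresh vertices of $Z_{uv}$ need not realise exactly $\{0,\dots,k-1\}\setminus\{d^-_K(u),d^-_K(v)\}$ (for $k=4$ the $K$-indegree sequence $(1,1,1,3)$ with $u,v$ both at $K$-indegree $1$ is realisable, so the two values missed by $Z_{uv}$ are $\{0,2\}$ while $\{d^-_K(u),d^-_K(v)\}=\{1,1\}$). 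What survives is only the weaker statement that $d^-_D(u)$ must avoid the $k-2$ values taken by $Z_{uv}$, hence lies in a $3$-element set depending on the block. Second, and more seriously, the entire second half --- the ``$0$-or-$k$ dichotomy'' and the projection of $D$ onto a proper $2$-orientation of $T$ --- is asserted rather than proved; you acknowledge yourself that closing this reduction ``is the heart of the argument.'' As it stands there is no proof that the blown-up graph inherits a large proper orientation number from $\po(T)\geq 3$, so the proposal does not establish the proposition.

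For comparison, the paper avoids any reduction to trees. It builds one explicit graph: a central $k$-clique with two distinguished vertices $u,v$, attaches $k+1$ pendant $k$-cliques $K_0(w),\dots,K_k(w)$ to each $w\in\{u,v\}$, and hangs one further $k$-clique off a second vertex of each $K_i(w)$ with $i\geq 1$. Assuming a proper $k$-orientation with $uv$ oriented towards $u$, the clique $K_0(u)$ (all of whose non-$u$ vertices have degree $k-1$) forces $d^-(u)=k$, so $u$ is a source into some $K_i(u)$; the same argument applied at the corresponding vertex $u_i$ forces $d^-(u_i)=k$, contradicting $uu_i\in E(G)$. This is a short, self-contained counting argument; if you want to salvage your approach you would need to supply the missing case analysis at the branching vertices of $T$, which is precisely the work the paper's direct construction sidesteps.
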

\begin{proof}
For a given positive integer $k \ge 2$, let us build $G(k)$ as follows. We start with a clique and pick two distinct vertices, $u$ and $v$. On each, we add $k$-cliques, denoted by $K_0(u), \ldots, K_{k}(u)$ whose common vertex is only $u$ and $K_0(v), \ldots, K_{k}(v)$, whose common vertex is only $v$. For each clique $K_i(w)$, $w\in \{u,v\}$, $i \ge 1$, pick another vertex $w_i \ne w$ to which we attach another $k$-clique. We claim that $\po(G(k)) \ge k+1$.

Suppose, by contradiction, that $\po(G) \le k$ and let $D$ be a proper $k$-orientation of $G(k)$. By symmetry, we assume that the edge $uv$ is oriented towards $u$.

Let us first prove that $d_D^{-}(u) = k$. Suppose that $d_D^{-}(u) = \ell \le k-1$. Notice that every vertex $w \in K_0 \setminus \{u\}$ satisfy $d_{G(k)}(w) = k-1$. We conclude that $\{d_D^{-}(w) \mid w \in K_0\} = \{0, 1, \ldots, k-1\}$, which means that $u$ must receive all its $\ell$ arcs from $K_0$. It is a contradiction, since $(v,u) \in A(D)$.

Thus, $d_D^{-}(u) = k$. Since $vu \in D$, then for at least one of the $k$ cliques $K_1(u), \ldots K_k(u)$, let us say $K_i(u)$, $u$ acts like a source, meaning that $uu_i \in D$. Using an analogous proof as the one for $u$, we conclude that $d_D^{-}(u_i) = k$, which is a contradiction, since $uu_i \in E(G(k))$. Therefore, $\po(G(k)) \ge k+1$.
\end{proof}

Thanks to Proposition~\ref{prop:kernel_chordal} 
and to the fact that trees have proper orientation number at most 4~\cite{ACD+15}, we know that there is a polynomial-time algorithm to compute $\po(T)$, for a tree $T$. However, the structure of trees satisfying $\po(T)=i$, for $i\in\{2,3,4\}$, is not known. In~\cite{ACD+15}, the authors presents two classes of trees $T$ satisfying $\po(T)\leq i$, for $i=2$ and $i=3$. Next, we present a result that reveals another class of trees satisfying such inequalities thanks to Proposition~\ref{prop:extending_partial_orientations}.

\begin{corollary}
\label{cor:no_adjacent_high_degree}
Let $G$ be a graph and $c$ be any positive integer. If $G$ has no two adjacent vertices of degree at least $c+1$, then $\po(G)\leq c$.
\end{corollary}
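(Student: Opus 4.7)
The plan is to derive the corollary as a direct application of Proposition~\ref{prop:extending_partial_orientations}, taking the seed set $S$ to be precisely the set of ``high-degree'' vertices, i.e.
\[ S = \{v \in V(G) \mid d_G(v) \geq c+1\}.\]
By the hypothesis, no two vertices of $S$ are adjacent, so $S$ is an independent set and $G[S]$ has no edges. Consequently, the empty orientation $D_S$ of $G[S]$ is a proper orientation in which $d^-_{D_S}(u) = 0$ for every $u \in S$.

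Next, I would verify the technical hypothesis needed to invoke Proposition~\ref{prop:extending_partial_orientations}: for every $v \in V(G)\setminus S$ and every $u \in N(v)\cap S$, we must have $|N(v)\cap S| > d^-_{D_S}(u)$. Since $d^-_{D_S}(u) = 0$ and $u \in N(v)\cap S$ forces $|N(v)\cap S| \geq 1$, this inequality holds trivially.

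Applying Proposition~\ref{prop:extending_partial_orientations} then yields an orientation $D$ of $G$ that extends $D_S$ and satisfies $d^-_D(v) = d^-_{D_S}(v) = 0$ for every $v \in S$, and $d^-_D(v) \leq d_G(v)$ for every $v \in V(G)\setminus S$. By the very definition of $S$, vertices outside $S$ have degree at most $c$, so $d^-_D(v) \leq c$ for all $v \in V(G)\setminus S$ as well. Thus $D$ is a proper $c$-orientation of $G$, and $\po(G)\leq c$ follows.

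There is no substantive obstacle here: the whole argument is a short bookkeeping check that the independence of $S$ (guaranteed by the hypothesis) lets us take $D_S$ to be empty, after which Proposition~\ref{prop:extending_partial_orientations} does all the work. The only mild subtlety is recognizing that the bound $d^-_D(v)\leq d_G(v)$ is what we need on $V(G)\setminus S$, combined with the degree threshold built into the choice of $S$.
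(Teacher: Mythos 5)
Your proof is correct and follows exactly the paper's argument: take $S$ to be the set of vertices of degree at least $c+1$, observe it is independent so the empty orientation works as $D_S$, and invoke Proposition~\ref{prop:extending_partial_orientations}. Your write-up is just a more explicit version of the paper's one-line proof, spelling out the (trivially satisfied) hypothesis and the final degree bound on $V(G)\setminus S$.
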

\begin{proof}
Let $S$ be the set of vertices of degree at least $c+1$. Since $S$ is an independent set, we have that $\po(G[S])=0$ and we can apply Proposition~\ref{prop:extending_partial_orientations} to obtain the desired proper $c$-orientation of $G$.
\end{proof}

Corollary~\ref{cor:no_adjacent_high_degree} indicates new families of trees and (outer)planar graphs having bounded proper orientation number. It also implies in particular that:

\begin{corollary}
\label{cor:upperbound_trees}
If $T$ is a tree having no two adjacent vertices of degree at least three, then $\po(T)\leq 2$. If $T$ is a tree having two no two adjacent vertices of degree at least four, then $\po(T)\leq 3$.
\end{corollary}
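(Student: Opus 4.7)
The proof plan is immediate: this statement is just the specialization of \cref{cor:no_adjacent_high_degree} to trees, applied with two particular values of the parameter $c$.

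For the first assertion, I would set $c = 2$. The hypothesis says that $T$ has no two adjacent vertices of degree at least $3 = c+1$, which matches the hypothesis of \cref{cor:no_adjacent_high_degree} exactly. Invoking that corollary yields $\po(T) \leq 2$. For the second assertion, I would set $c = 3$ and argue identically: the absence of adjacent vertices of degree at least $4 = c+1$ is precisely what \cref{cor:no_adjacent_high_degree} requires, and the corollary delivers $\po(T) \leq 3$.

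Since \cref{cor:no_adjacent_high_degree} is already proved in the excerpt (by taking $S$ to be the set of vertices of degree at least $c+1$, observing $S$ is independent so $\po(G[S]) = 0$ trivially, and applying \cref{prop:extending_partial_orientations}), there is nothing further to verify. There is no real obstacle here — the only thing to check is that the numerology lines up, namely that ``degree at least $3$'' corresponds to $c+1$ with $c = 2$, and ``degree at least $4$'' to $c+1$ with $c = 3$, both of which are immediate.
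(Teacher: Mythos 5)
Your proposal is correct and is exactly the paper's argument: the statement is presented there as an immediate specialization of \cref{cor:no_adjacent_high_degree} with $c=2$ and $c=3$, which is precisely what you do.
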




\subsection{Maximal outerplanar graphs}
\label{sec:maximal_outerplanar}

A simple graph $G$ is \emph{maximal outerplanar} if it is outerplanar and no edge can be added to $G$ so that the obtained graph is still outerplanar. A \emph{maximal outerplane} graph $G$ is a planar embedding of a maximal outerplanar graph. Note that, by definition, the outerface of a maximal outerplane graph $G$ is the only face that might not be of degree three and thus these graphs are chordal.

\begin{figure}[!ht]
    \centering
    
    \begin{tikzpicture}
    	
    	    \tikzset{node/.style={draw, fill=white, circle, thick, fill = white, auto=left, inner sep=2pt}}
            \tikzset{rectangle/.append style={draw,rounded corners,dashed, minimum height=0.4cm}}
            \tikzset{>={Latex[width=2mm,length=2mm]}}
            
            \node[node] (v) at (3,-1) [fill=gray, label=below:$v$] {};
            
    	    \node[node] (v0) at (-1,0) [fill=gray, label=above:$v_0$] {};
    	    
    	    \node[node] (v1) at (0,0) [fill=gray, label=above:$v_1$] {};
    	    
    	    \node[node] (v2) at (1,0) [fill=gray, label=above:$v_2$] {};
    	    
    	    \node[node] (v3) at (2,0) [fill=gray, label=above:$v_3$] {};
    	    
    	    \node[node] (vdm2) at (4,0) [fill=gray, label=above:$v_{\ell-2}$] {};
    	    
    	    \node[node] (vdm1) at (5,0) [fill=gray, label=above:$v_{\ell-1}$] {};
    	    
    	    \node[node] (vd) at (6,0) [fill=gray, label=above:$v_{\ell}$] {};
    	    
    	    \node[node] (vdp1) at (7,0) [fill=gray, label=above:$v_{\ell+1}$] {};
    	    
    	    \node (dots) at (3,0) {$\dotsc$};

		    \path[-] (v1) edge node {} (v2)
            (v2) edge node {} (v3)
            (vdm2) edge node {} (vdm1)
            (vdm1) edge node {} (vd)
            (v3) edge node {} (2.5,0)
            (vdm2) edge node {} (3.5,0);
            
            \path[->] 
            (v) edge node {} (v1)
            (v) edge node {} (v2)
            (v) edge node {} (v3)
            (v) edge node {} (vdm2)
            (v) edge node {} (vdm1)
            (v) edge node {} (vd)
            (vdp1) edge node {} (vd);
            
            \draw [->] (v1) to [bend right] (v0);
            \draw [->] (v0) to [bend right] (v1);
            
	    \end{tikzpicture}
    \caption{Partial orientations to be extended by \cref{lem:extending_to_path}.}
    \label{fig:lemma_outerplanar}
\end{figure}
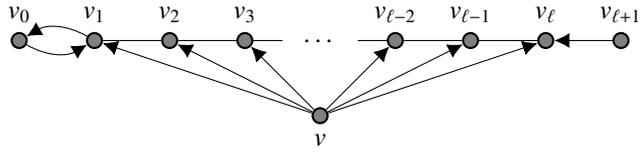

\begin{lemma}
\label{lem:extending_to_path}
Given an integer $\ell\geq 6$ and a graph $G$, let $v, v_0,\ldots, v_{\ell+1}\in V(G)$ such that $P = (v_1,\ldots, v_\ell)$ is a path in $G$ and $N_G(v_i)=\{v_{i-1},v,v_{i+1}\}$, for each $i\in[\ell]$.

Let $D_1$ and $D_2$ be proper $k$-orientations of $G-E(P)$, for some $k\geq 3$ such that $(v,v_i)\in A(D_j)$, $d^-_{D_1}(v_1)= 2$, $d^-_{D_2}(v_1)= 1$,  $d^-_{D_j}(v_\ell)=2$ and $d^-_{D_j}(v)\geq 4$, for each $i\in[\ell]$ and for each $j\in\{1,2\}$, (see \cref{fig:lemma_outerplanar}). Then, $D_1$ and $D_2$ can be extended into proper $k$-orientations of $G$.
\end{lemma}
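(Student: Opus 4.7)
The plan is to orient the $\ell - 1$ path edges $e_i := v_iv_{i+1}$ while keeping $v_1$ and $v_\ell$ at the indegrees they already have in $D_j$, so that the (unchanged) indegrees of $v_0$ and $v_{\ell+1}$ cannot create new conflicts. Because $v$'s indegree stays at least $4$ while each $v_i$ will receive a new indegree in $\{1,2,3\}$, vertex $v$ is automatically compatible with every $v_i$. Preserving $d^-(v_1) = d^-_{D_j}(v_1)$ and $d^-(v_\ell)=2$ pins the two extreme path edges: $e_1$ must be oriented toward $v_2$ and $e_{\ell-1}$ toward $v_{\ell-1}$.

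Writing $e_i = +$ if $e_i$ points toward $v_{i+1}$ and $-$ otherwise, an internal $v_i$ ends up with indegree equal to $1$ (from $vv_i$) plus $1$ if $e_{i-1}=+$ plus $1$ if $e_i=-$, hence a value in $\{1,2,3\}$. The main idea is to use the alternating assignment $e_i = +$ for odd $i$ and $e_i = -$ for even $i$, which makes internal vertices alternate between ``peaks'' of indegree $3$ (positions with $e_{i-1}=+,e_i=-$) and ``valleys'' of indegree $1$. When $\ell$ is odd, this pattern is already consistent with the pinned boundary orientations, and the indegree sequence along $P$ becomes $(d^-_{D_j}(v_1),3,1,3,1,\ldots,3,2)$; since $d^-_{D_j}(v_1)\in\{1,2\}\neq 3$, all consecutive entries differ.

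When $\ell$ is even, the pure alternating pattern would force $e_{\ell-1}=+$, contradicting the pinning. I would correct this by keeping the alternating pattern on $e_1,\ldots,e_{\ell-3}$ and overriding $e_{\ell-2}=+$, $e_{\ell-1}=-$. A direct computation shows that the indegrees at the last four vertices $v_{\ell-3},v_{\ell-2},v_{\ell-1},v_\ell$ then become $1,2,3,2$, while the earlier indegrees stay $3,1,3,1,\ldots$ from the alternating prefix. Thus no two consecutive entries coincide, and the indegree of $v_{\ell-3}$ (namely $1$) also differs from the indegree of $v_{\ell-4}$ in the alternating part (namely $3$). The hypothesis $\ell\ge 6$ is used precisely here: it guarantees that the modified edges $e_{\ell-2},e_{\ell-1}$ sit strictly past position $3$, so the patch cannot interact with the pinned value at $v_1$ or with $x_2=3$.

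The main obstacle is the even-$\ell$ case, where the two pinned boundary orientations have the wrong relative parity for a simple alternating extension; a localized repair near $v_\ell$ is needed, and one must verify both that the patched indegrees avoid equalities among themselves and that the junction between the alternating prefix and the patched segment remains proper. Finally, all resulting indegrees belong to $\{1,2,3\}\subseteq[k]$ since $k\geq 3$, confirming that the extension is indeed a proper $k$-orientation of $G$.
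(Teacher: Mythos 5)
Your proof is correct, and it takes a genuinely cleaner route than the paper's. The paper extends $D_j$ using the four ``fully alternating'' orientations of $P$ (in which $v_1$ and $v_\ell$ are sources or sinks of the path), which raises the indegrees of $v_1$ and $v_\ell$ to $3$ in some configurations; as a result it must branch on the values of $d^-_{D_j}(v_0)$ and $d^-_{D_j}(v_{\ell+1})$ and perform a local repair near one end in each exceptional subcase, separately for $D_1$ and $D_2$ and for each parity of $\ell$. Your key observation --- that since $v_0$ and $v_{\ell+1}$ touch no edge of $P$, pinning $e_1$ toward $v_2$ and $e_{\ell-1}$ toward $v_{\ell-1}$ freezes $d^-(v_1)$ and $d^-(v_\ell)$ at their $D_j$-values and hence inherits properness at both ends for free --- eliminates that entire case analysis. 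What remains is exactly the computation you carry out: the alternating assignment gives the internal sequence $3,1,3,1,\ldots$ ending compatibly with the pinned $e_{\ell-1}$ when $\ell$ is odd, and when $\ell$ is even the local override $e_{\ell-2}=+$, $e_{\ell-1}=-$ produces the tail $\ldots,3,1,2,3,2$, with $\ell\ge 6$ guaranteeing the patch does not reach back to $v_1$ or $v_2$. I verified the indegree bookkeeping in both parities (including $\ell=6$, where the patched vertices are $v_4,v_5$ and the sequence is $d^-(v_1),3,1,2,3,2$), and the comparisons with $v$ (indegree $\ge 4$ versus path indegrees in $\{1,2,3\}$) and with the maximum indegree bound $k\ge 3$ all go through. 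The only blemish is notational: you write ``$x_2=3$'' without defining $x_2$, where you clearly mean $d^-(v_2)=3$. What the paper's heavier approach buys is nothing for this statement; your argument proves the same lemma with a single uniform construction.
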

\begin{proof}
Note that since $d^-_{D_j}(v)\geq 4$ we cannot have a conflict between $v$ and one of its neighbors $v_i$, for any $i\in[\ell]$, because in any extension $D'_j$ of $D_j$ to $G$, we have $d^-_{D'_j}(v_i)\leq d(v_i)=3<4\leq d^-_{D_j}(v)$, for each $j\in\{1,2\}$. Thus, we just need to consider some case analysis with respect to the possible indegrees of $v_0$ and $v_{\ell+1}$ in $D_j$.

In what follows, we use the possible \emph{alternating} orientations of $P$ according to the parity of $\ell$. If $\ell$ is odd, then such path admits two orientations with no two consecutive arcs in the same direction. One orientation in which $v_1$ and $v_\ell$ are sinks in the path, to which we refer as $D^{sk}(P)$ and another in which both are sources, to which we refer as $D^{sr}(P)$. In case $\ell$ is even, we consider the two possible alternating orientations as well, one in which $v_1$ is a source and $v_\ell$ is a sink, which we name $D^{1\to}(P)$, and the other in which the opposite occurs and we name $D^{\to 1}(P)$. Moreover, for each $1\leq i\leq j\leq \ell$, we denote the subpath $(v_i,\ldots, v_j)$ of $P$ by $P_i^j$.

Let us first argue how to extend the orientation $D_1$ to $G$.
In case $\ell$ is even, we can extend $D_1$ to $G$ by using one of the orientations $D^{1\to}(P)$ or $D^{\to 1}(P)$, unless $d_{D_1}^-(v_0)=d_{D_1}^-(v_{\ell+1})=2$ or $d_{D_1}^-(v_0)=d_{D_1}^-(v_{\ell+1})=3$. If $d_{D_1}^-(v_0)=d_{D_1}^-(v_{\ell+1})=2$, we orient the edge $v_1v_2$ toward $v_1$ and orient the subpath $P_2^{\ell}$ with the orientation $D^{sk}(P_2^{\ell})$.
In case $d_{D_1}^-(v_0)=d_{D_1}^-(v_{\ell+1})=3$, we orient the edges $v_1v_2$ and $v_2v_3$ toward $v_2$, the edges $v_{\ell-2}v_{\ell-1}$ and~$v_{\ell-1}v_{\ell}$ toward~$v_{\ell-1}$ and orient the subpath $P_3^{\ell-2}$ with the orientation $D^{1\to}(P_3^{\ell-2})$ (note that since $\ell$ is even, $P_3^{\ell-2}$ has an even number of vertices and we need this path to contain at least one edge - this is true because $\ell\geq 6$).

Let us consider then the case in which $\ell$ is odd. In this case, one can use one of the orientations $D^{sk}(P)$ or~$D^{sr}(P)$ over the path $P$ to obtain a proper $k$-orientation $D$ of $G$, unless $\{d^-_{D_1}(v_0),d^-_{D_1}(v_{\ell+1})\} = \{2,3\}$. By symmetry, we assume without loss of generality that $d_{D_1}^-(v_0)=2$ and $d^-_{D_1}(v_{\ell+1}) = 3$. We orient $v_1v_2$ towards $v_1$ and complete the orientation $D_1$ with the orientation $D^{\to 1}(P_2^{\ell})$.

Let us consider now how to extend the orientation $D_2$. Note that the only difference is that the edge $v_0v_1$ is oriented towards $v_0$. Consider first that $\ell$ is even.
In case $d^-_{D_2}(v_0)\neq 1$ and $d^-_{D_2}(v_{\ell+1})\neq 3$, we can use the orientation~$D^{1\to}(P)$ to extend $D_2$ to $G$.
In case $d^-_{D_2}(v_0)\neq 2$ and $d^-_{D_2}(v_{\ell+1})\neq 2$, we can use the orientation $D^{\to 1}(P)$ to extend $D_2$ to $G$.
In case $d^-_{D_2}(v_0) = 1$ and $d^-_{D_2}(v_{\ell+1})= 2$, we orient the edge $v_{\ell-1}v_\ell$ towards $v_\ell$ and we orient the path $P_1^{\ell-1}$ with the orientation $D^{sk}(P_1^{\ell-1})$. In case $d^-_{D_2}(v_0) = 2$ and $d^-_{D_2}(v_{\ell+1})= 3$, we orient the edges $v_{\ell-2}v_{\ell-1}$ and~$v_{\ell-1}v_\ell$ towards $v_{\ell-1}$ and we orient $P_1^{\ell-2}$ with the orientation $D^{1\to}(P_1^{\ell-2})$.

Finally, suppose that $\ell$ is odd. Similarly, we can use one of the orientations $D^{sr}(P)$ or $D^{sk}(P)$ to extend $D_2$ to $G$ unless $d^-_{D_2}(v_0) = d^-_{D_2}(v_{\ell+1})= 2$; or $d^-_{D_2}(v_0) = 1$ and $d^-_{D_2}(v_{\ell+1})= 3$. In case~$d^-_{D_2}(v_0)  = d^-_{D_2}(v_{\ell+1})= 2$, then we orient $v_{\ell-1}v_\ell$ towards $v_\ell$ and we orient the path $P_1^{\ell-1}$ with the orientation~$D^{1\to}(P_1^{\ell-1})$. In case $d^-_{D_2}(v_0) = 1$ and $d^-_{D_2}(v_{\ell+1})= 3$, we orient the edges $v_{\ell-2}v_{\ell-1}$ and $v_{\ell-1}v_\ell$ towards $v_{\ell-1}$ and we orient $P_1^{\ell-2}$ with the orientation $D^{sk}(P_1^{\ell-2})$.  
\end{proof}

\begin{theorem}
\label{thm:outerplanar_upbound}
If $G$ is a maximal outerplane graph whose weak-dual is a path, then $\po(G)\leq 13$.
\end{theorem}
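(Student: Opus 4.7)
The plan is to exploit the path structure of the weak dual by decomposing $G$ into maximal \emph{fans}, then orienting $G$ fan by fan using \cref{lem:extending_to_path}.

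First I would set up notation. Since the weak dual is a path, the triangular faces of $G$ can be enumerated $T_1, T_2, \dotsc, T_m$ so that $T_i$ and $T_{i+1}$ share an edge. For each consecutive pair, the shared edge has a \emph{pivot vertex} (the one common to both shared edges at $T_{i-1}T_i$ and $T_iT_{i+1}$). A \emph{maximal fan} $F_j$ is a maximal block of consecutive triangles all containing a common vertex $c_j$; consecutive fans $F_j, F_{j+1}$ overlap in exactly one transition triangle. Within a fan $F_j$ with triangles $T_{a_j}, \dotsc, T_{b_j}$, the non-$c_j$ vertices form a path $v_{j,1}, \dotsc, v_{j,\ell_j}$ on the outer boundary, with boundary neighbors $v_{j,0}, v_{j,\ell_j+1}$ coming from the adjacent fans. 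This is precisely the local structure assumed by \cref{lem:extending_to_path}.

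Next I would build a partial orientation. For each fan $F_j$, orient every spoke edge $c_j v_{j,i}$ from $c_j$ outward toward $v_{j,i}$, so that each path vertex receives one unit of indegree from its fan center, while $c_j$ receives indegree only from its appearances in neighboring fans (where it plays the role of one of the path vertices $v_{j',0}, v_{j',1}, v_{j',\ell_{j'}}, v_{j',\ell_{j'}+1}$ of at most two adjacent fans). A direct bookkeeping then shows that the indegree accumulated at any $c_j$ from these adjacent roles is a small constant, so that $d^-(c_j)\le 13$ in the final orientation. Up to choosing the orientations of the few edges in the two extremal fans at the ends of the path, I would arrange for each internal fan center to satisfy $d^-(c_j)\ge 4$; the endpoint indegree configurations at $v_{j,1}$ and $v_{j,\ell_j}$ before orienting the interior path will always match either the $D_1$ pattern $(2,2)$ or the $D_2$ pattern $(1,2)$ (up to reversing the path), exactly matching the two hypothesis cases of \cref{lem:extending_to_path}.

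I would then invoke \cref{lem:extending_to_path} fan by fan with $k=13$, walking along the weak-dual path and extending the orientation one fan at a time; each application yields a proper orientation of the current fan that respects the already-committed indegrees at $c_j$ and at the boundary vertices $v_{j,0}, v_{j,\ell_j+1}$. Because the path vertices $v_{j,i}$ have degree exactly three in $G$, their indegree is at most $3<4\le d^-(c_j)$, so no spoke conflicts arise, while the constraints along the path are exactly what \cref{lem:extending_to_path} handles.

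The main obstacle will be the edge cases. Short fans (where $\ell_j<6$, so \cref{lem:extending_to_path} does not literally apply), the two extremal fans at either end of the weak-dual path, and transitions where two fan centers $c_j, c_{j+1}$ are themselves adjacent, all require a separate direct argument: for such small pieces the subgraph has bounded size, so one can exhibit a proper orientation by hand (or with a uniform local rule) and check that indegrees stay within $\{0,1,\dotsc,13\}$ and remain compatible with the spokes already chosen in the long fans. Choosing the right boundary indegrees at $v_{j,0}, v_{j,\ell_j+1}$ so that the $D_1/D_2$ dichotomy of \cref{lem:extending_to_path} applies simultaneously on both sides of each transition, while preserving properness at the shared vertices, is where the constant $13$ ultimately originates.
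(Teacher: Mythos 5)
Your decomposition into fans along the weak dual and your choice of \cref{lem:extending_to_path} as the engine are exactly right, but the orientation scheme you build on top of them has a genuine gap. The rule ``orient every spoke $c_jv_{j,i}$ outward'' makes it impossible in general to reach the hypothesis $d^-(c_j)\ge 4$ of \cref{lem:extending_to_path}. Take $G$ to be a single fan with $20$ triangles: the apex $c$ has degree $21$, every other vertex has degree $2$ or $3$, and the only neighbors of $c$ that are not interior path vertices are the two degree-$2$ endpoints of the boundary path. With all spokes pointing outward, $d^-(c)\le 2$, and since all of $c$'s neighbors have indegree in $\{0,1,2,3\}$ this cannot be completed to a proper orientation; the same problem occurs at the two extremal fans and, more generally, at any apex with few non-interior neighbors. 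The fix is unavoidable: you must orient some spokes \emph{toward} the apex, and then you must choose which ones so that the apex's resulting indegree avoids the already-committed indegrees of the boundary vertices $v_1,v_2,v_{\ell-1},v_\ell$ (in the paper's notation, $v_1,v_2,v_{\Delta-1},v_\Delta$). That conflict-avoidance step — orienting the four extreme spokes inward and then greedily at most four more of $v_3v,\dotsc,v_6v$, which is why $\Delta\ge 14$ and the bound $13$ appear — is the actual heart of the argument, and it is the part your proposal defers to ``direct bookkeeping.''

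The second gap is the sequential left-to-right pass. The case analysis inside \cref{lem:extending_to_path} chooses among the alternating orientations of $P$ according to the \emph{already determined} values $d^-(v_0)$ and $d^-(v_{\ell+1})$; but in your walk along the weak dual, $v_{j,\ell_j+1}$ lives in the fan you have not yet oriented, so one of the two boundary indegrees is unknown when you need it, and the transition triangles couple the requirements of consecutive fans. The paper sidesteps both issues by inducting on $|V(G)|$: if $\Delta(G)\le 13$ there is nothing to prove; otherwise a vertex $v$ of degree $\Delta\ge 14$ is a long-fan apex, deleting $v$ and its degree-three neighbors $v_3,\dotsc,v_{\Delta-2}$ leaves two smaller maximal outerplane graphs with path weak duals, and the inductive orientations of those two pieces fix all boundary indegrees \emph{before} the lemma is invoked on the interior path. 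If you want to keep your global fan-by-fan formulation, you would need to replace the outward-spoke rule by the paper's inward-spoke adjustment at every apex of degree at least $14$ and process the fans in an order (or with a two-pass argument) that resolves the transition dependencies; as written, the proposal does not establish the bound.
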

\begin{proof}
We prove the theorem by induction in the number of vertices of $G$.
If $\Delta(G)\leq 13$, then we have nothing to prove as $\po(G)\leq \Delta(G)\leq 13$.

Thus, let $v$ be a vertex of degree $\Delta = \Delta(G)\geq 14$. Let the neighbors of $v$ in clockwise order be $v_1,\ldots, v_{\Delta}$ so that the edge $vv_1$, as well as $vv_\Delta$, lie in the outerface (see \cref{fig:outerplanar}). Note that since the weak-dual of $G$ is a path, then the vertices $v_3,\ldots, v_{\Delta-2}$ have degree three.

Let $G_1$ and $G_2$ be the maximal outerplane graphs obtained from $G$ by removing the vertices $v, v_3,\ldots, v_{\Delta-2}$ (note that $G_1$ or $G_2$ might be composed of a single edge). By induction hypothesis, $G_1$ and $G_2$ admit proper 13-orientations $D_1$ and $D_2$, respectively.

We construct a proper 13-orientation $D$ of $G$ from $D_1$ and $D_2$ 
by keeping the vertices~$v_1$, $v_2$, $v_{\Delta-1}$, $v_{\Delta}$ with the same indegrees provided by $D_1$ and $D_2$. For this, we orient all edges~$v_1v$, $v_2v$, $v_{\Delta-1}v$ and $v_{\Delta}v$ toward $v$ and we orient $v_2v_3$ and $v_{\Delta-1}v_{\Delta-2}$ toward $v_3$ and $v_{\Delta-2}$, respectively, as depicted in \cref{fig:outerplanar}. Thus, in the sequel we show how to orient the remaining edges incident to $v_3,\ldots, v_{\Delta-2}$ in order to obtain a proper 13-orientation of $G$.

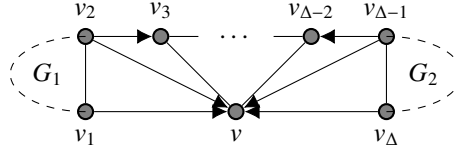
\begin{figure}[!ht]
    \centering
    
    \begin{tikzpicture}
    	
    	    \tikzset{node/.style={draw, fill=white, circle, thick, fill = white, auto=left, inner sep=2pt}}
            \tikzset{rectangle/.append style={draw,rounded corners,dashed, minimum height=0.4cm}}
            \tikzset{>={Latex[width=2mm,length=2mm]}}
            
    	    \node[node] (v) at (0,0) [fill=gray, label=below:$v$] {};
    	    \node[node] (v1) at (-2,0) [fill=gray, label=below:$v_1$] {};
    	    
    	    \node[node] (v2) at (-2,1) [fill=gray, label=above:$v_2$] {};
    	    
    	    \node[node] (v3) at (-1,1) [fill=gray, label=above:$v_3$] {};
    	    
    	    \node[node] (vdm2) at (1,1) [fill=gray, label=above:$v_{\Delta-2}$] {};
    	    
    	    \node[node] (vdm1) at (2,1) [fill=gray, label=above:$v_{\Delta-1}$] {};
    	    
    	    \node[node] (vd) at (2,0) [fill=gray, label=below:$v_{\Delta}$] {};
    	    
    	    \node (dots) at (0,1) {$\dotsc$};
    	    
    	    \node (g1) at (-2.5,0.5) {$G_1$};
    	    \node (g2) at (2.5,0.5) {$G_2$};

		    \path[<-] (v) edge node {} (v1)
            (v) edge node {} (v2)
            (v) edge node {} (vdm1)
            (v) edge node {} (vd)
            (v3) edge node {} (v2)
            (vdm2) edge node {} (vdm1);
            
            \path[-] 
            (v3) edge node {} (v)
            (v3) edge node {} (-0.5,1)
            (vdm2) edge node {} (0.5,1)
            (vdm2) edge node {} (v);
            
            \draw[dashed] (v2) arc(90:270:1 and 0.5);
            \draw[dashed] (vd) arc(-90:90:1 and 0.5);
            
            \draw (v1) -- (v2) (vd) -- (vdm1);
	    \end{tikzpicture}
    \caption{A representation of a maximal outerplanar $G$ and a vertex of maximum degree $v\in V(G)$.}
    \label{fig:outerplanar}
\end{figure}

Because the edges $v_1v$, $v_2v$, $v_{\Delta-1}v$ and $v_{\Delta}v$ are oriented toward $v$ note that $d^-_D(v)\geq 4$.
In order to ensure that $v$ has no conflict with its neighbors in $\{v_1,v_2,v_{\Delta-1},v_{\Delta}\}$, we orient the least number edges in $(v_3v,v_4v,v_5v,v_6v)$, following this order, towards $v$. It means that if $4\notin \{d^-_{D_1}(v_1),d^-_{D_1}(v_2), d^-_{D_2}(v_{\Delta-1}), d^-_{D_2}(v_{\Delta})\} = S$, then none of these edges is oriented towards $v$. In case $4\in S$, but $5\notin S$, then we only orient the edge $v_3v$ towards $v$, and so on.
All edges~$v_iv$, for every $i\in\{7,\ldots, \Delta-1\}$, are oriented towards $v_i$. Let us now argue how to orient the edges of the path $P=(v_3,v_4,\ldots, v_{\Delta-2})$ to complete the orientation $D$ while ensuring that it is proper.

In case $4\notin S$, as we said, we set $d_D^-(v)=4$ by orienting all edges $v_3v,\ldots, v_{\Delta-1}v$ toward $v_3,\ldots, v_{\Delta-1}$, respectively. Thus, we can extend such orientation to the edges of the path $P$, by \cref{lem:extending_to_path} (note that $v_2$ corresponds to~$v_0$ in \cref{lem:extending_to_path}, and in this case we know that the edge $v_2v_3$ is oriented towards $v_3$).

Suppose now that $4\in S$, but $5\notin S$. In this case, we orient $v_3v$ towards $v$ and we orient $v_iv$ toward $v_i$, for each $i\in\{4,\ldots,\Delta-2\}$. Again, it remains to orient the edges of the path $P$. First, we choose an orientation of $v_3v_4$ to extend the partial orientation we have so far in a way that the indegree of $v_3$ is distinct from $v_2$. Then, now it remains to orient the edges of the path $P_4^{\Delta-2}$ and again we may apply \cref{lem:extending_to_path} (note that $v_3$ corresponds to $v_0$ in \cref{lem:extending_to_path}, but now the edge $v_3v_4$ is not necessarily oriented towards $v_4$).

The remaining cases can be solved analogously. Notice that, in the last case, all edges $v_3v$, $v_4v$, $v_5v$ and~$v_6v$ are oriented towards $v$. Then, after extending the partial orientation, step by step, to the vertices $v_3$, $v_4$, $v_5$ and~$v_6$, it remains to orient the path $(v_7,\ldots, v_{\Delta-2})$. Since $\Delta\geq 14$, this path has at least 6 vertices as required by \cref{lem:extending_to_path}.
\end{proof}

We emphasize that we just wanted to prove that there is a constant upper bound to the proper orientation number of the graphs in this class. This upper bound can probably be improved by a more careful case analysis.

\subsection{Claw-free chordal graphs}
\label{sec:claw_free}

Given two graphs $G$ and $H$, we say that $G$ is \emph{$H$-free} if $H$ is not an induced subgraph of $G$. A \emph{claw} is the connected graph on four vertices having three of them with degree one.
The following proposition is quite straightforward. 

\begin{proposition}
\label{prop:bound_chordal_claw_free}
    If $G$ is a chordal claw-free graph, then $\po(G)\leq \Delta(G)\leq 3\omega(G)$.
\end{proposition}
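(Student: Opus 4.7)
The first inequality $\po(G)\leq \Delta(G)$ is the trivial upper bound discussed in the introduction (\cref{sec:intro}), so I would only need to establish $\Delta(G)\leq 3\omega(G)$, and in fact a stronger bound comes out of the natural argument.

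The plan is to analyze the neighborhood of a vertex of maximum degree. Let $v\in V(G)$ be such that $d(v)=\Delta(G)$, and consider the induced subgraph $H=G[N(v)]$. Since $G$ is claw-free, $H$ contains no independent set of size $3$; that is, $\alpha(H)\leq 2$. Since $G$ is chordal, so is the induced subgraph $H$, and therefore $H$ is a perfect graph; its complement $\overline{H}$ is also perfect.

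The key step is then to cover $N(v)$ by two cliques. Because $\alpha(H)\leq 2$, we have $\omega(\overline{H})\leq 2$, and by perfection of $\overline{H}$ we obtain $\chi(\overline{H})=\omega(\overline{H})\leq 2$. A proper $2$-coloring of $\overline{H}$ is exactly a partition of $V(H)=N(v)$ into at most two cliques $K_1$ and $K_2$ of $G$. Each $\{v\}\cup K_i$ is a clique of $G$, so $|K_i|\leq \omega(G)-1$ for $i\in\{1,2\}$. Consequently,
\[\Delta(G)=|N(v)|=|K_1|+|K_2|\leq 2(\omega(G)-1)\leq 3\omega(G),\]
which together with $\po(G)\leq\Delta(G)$ finishes the proof.

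There is no real obstacle here: the whole argument hinges on the perfect-graph fact that a chordal graph with independence number at most $2$ is the union of two cliques, which is immediate once claw-freeness is translated into $\alpha(G[N(v)])\leq 2$. If one wanted a slicker write-up, the cover of $N(v)$ by two cliques could alternatively be derived by taking a perfect elimination ordering of $H$ and greedily forming the two cliques, but invoking perfection of $\overline{H}$ keeps the proof to a single paragraph.
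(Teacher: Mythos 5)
Your proof is correct, and it takes a genuinely different route from the paper's. The paper argues directly and elementarily: it fixes a non-adjacent pair $u,w\in N(v)$, and uses claw-freeness to show that $(N(v)\cap N(u))\setminus N(w)\cup\{u\}$ and $(N(v)\cap N(w))\setminus N(u)\cup\{w\}$ are cliques, and chordality (no induced $C_4$ through $u$ and $w$) to show that the common neighbours form a third clique; this partitions $N(v)$ into \emph{three} cliques, each of size at most $\omega(G)-1$, giving $\Delta(G)\le 3\omega(G)-3\le 3\omega(G)$. You instead translate claw-freeness into $\alpha(G[N(v)])\le 2$ and invoke perfection of chordal graphs together with the weak perfect graph theorem (perfection is closed under complementation) to conclude that the clique cover number of $G[N(v)]$ equals its independence number, hence $N(v)$ is covered by \emph{two} cliques. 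Your argument is sound and actually yields the strictly stronger bound $\Delta(G)\le 2\omega(G)-2$, which is tight (two triangles sharing a vertex give $\Delta=4=2\omega-2$); the trade-off is that it leans on the perfect graph theorem, whereas the paper's three-clique decomposition is self-contained and uses only the definitions of claw-free and chordal. As you note, one can also make your two-clique cover elementary for chordal graphs (e.g., observing that a chordal graph contains no complement of an odd hole, so the complement of $G[N(v)]$ is triangle-free and odd-cycle-free, hence bipartite), which would combine the best of both.
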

\begin{proof}
We prove that the neighborhood of any vertex $v\in V(G)$ can be partitioned into at most three cliques.
If~$N(v)$ is a clique or $|N(v)|\leq 3$ we have nothing to prove. So let $u,w\in N(v)$ be such that $uw\notin E(G)$ and we may consider that there are at least two other vertices in $N(v)$ distinct from $u$ and $w$.

Since $G$ is claw-free, note that any vertex $z\in N(v)\setminus\{u,w\}$ must be adjacent to at least one of the vertices in~$\{u,w\}$. Denote by $N_u^* = (N(v)\cap N(u))\setminus N(w)$. Again, since $G$ is claw-free, observe that $N_u^* \cup\{u\}$ must be a clique, as otherwise one may use $w$ and $v$ to find an induced claw in $G$. Analogously, one may define $N_w^*$ and observe that $N_w^* \cup \{w\}$ must also be a clique. Finally, vertices in $N_{uw}=N(v)\setminus (\{u,w\}\cup N_u^* \cup N_w^*)$ must be adjacent to both $u$ and $w$. Thus, now we use the hypothesis that $G$ is chordal to deduce that $N_{uw}$ must be a clique, as otherwise one may find a chordless cycle on four vertices using $u, w$ and two non-adjacent vertices of $N_{uw}$.
\end{proof}

\subsection{Cographs}
\label{sec:cographs}

It is well known that cographs are exactly the $P_4$-free graphs. Thus, they are $C_k$-free graphs, for each $k\geq 5$, but they may contain chordless $C_4$'s and thus they are not necessarily chordal.

The class of cographs is recursively defined as follows: $K_1$ is a cograph; if $G_1$ and $G_2$ are cographs, then $G = G_1\wedge G_2$ is a cograph; if $G_1$ and $G_2$ are cographs, then $G = G_1\cup G_2$ is a cograph.

If $G$ is a $K_1$, then $\po(G) = 0$. The case of disjoint union is solved by Proposition~\ref{prop:disj_union}. However, computing the proper orientation number of a cograph which is the join of two other cographs $G_1$ and $G_2$ - even if we know optimal solutions for $G_1$ and $G_2$ - seems not trivial. Here, we provide some bounds. Recall that $\Ad(G) = \frac{|E(G)|}{|V(G)|}$.

\begin{proposition}
\label{prop:bound_cographs}
Let $G_1,G_2$ be graphs on $n_1,n_2$ vertices, respectively, and let $G = G_1\wedge G_2$. Then,
\[\min\left\{\Ad(G_1)+\frac{n_2}{2},\Ad(G_2)+\frac{n_1}{2}\right\}\leq \po(G)\leq \min\left\{ \po(G_1)+n(G_2),\po(G_2)+n(G_1)\right\}.\]
\end{proposition}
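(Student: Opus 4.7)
The two inequalities are essentially independent, so I would prove the upper bound and the lower bound separately, each by a short direct argument.

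For the upper bound, I will build an explicit proper orientation $D$ of $G = G_1 \wedge G_2$ from optimal proper orientations $D_1$ of $G_1$ and $D_2$ of $G_2$. To achieve the bound $\po(G_1)+n_2$, I orient every join edge from $V(G_2)$ toward $V(G_1)$, keep $D_1$ inside $V(G_1)$, and keep $D_2$ inside $V(G_2)$. Then each $v \in V(G_1)$ satisfies $d^-_D(v) = d^-_{D_1}(v) + n_2 \in [n_2,\, n_2+\po(G_1)]$, while each $u \in V(G_2)$ satisfies $d^-_D(u) = d^-_{D_2}(u) \le \po(G_2) \le \Delta(G_2) \le n_2-1$. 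So indegrees on the two sides lie in disjoint integer intervals, meaning no join edge can cause a conflict; and $D_1$, $D_2$ proper rules out conflicts inside each side. The maximum indegree in $D$ is $\po(G_1)+n_2$, giving $\po(G) \le \po(G_1)+n_2$. By the symmetric construction (orienting all join edges toward $V(G_2)$ instead) we get $\po(G) \le \po(G_2)+n_1$, hence the minimum of the two.

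For the lower bound, let $D$ be any proper orientation of $G$, and let $e_1$ (resp.\ $e_2$) denote the number of join edges oriented from $V(G_1)$ into $V(G_2)$ (resp.\ from $V(G_2)$ into $V(G_1)$). Since there are $n_1 n_2$ join edges in total, we have $e_1 + e_2 = n_1 n_2$, so at least one of $e_1, e_2$ is at least $n_1 n_2/2$. Summing indegrees on one side of the join yields
\begin{equation*}
\sum_{v \in V(G_1)} d^-_D(v) \;=\; |E(G_1)| + e_2 \qquad \text{and} \qquad \sum_{u \in V(G_2)} d^-_D(u) \;=\; |E(G_2)| + e_1,
\end{equation*}
and dividing by $n_1$ or $n_2$ respectively gives, via the pigeonhole principle, that $\po(G)$ is at least the corresponding average. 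Concretely, if $e_2 \ge n_1 n_2/2$, then $\po(G) \ge |E(G_1)|/n_1 + e_2/n_1 \ge \Ad(G_1) + n_2/2$; otherwise $e_1 \ge n_1 n_2/2$ and symmetrically $\po(G) \ge \Ad(G_2) + n_1/2$. In either case $\po(G)$ is at least the minimum of the two quantities, as required.

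Neither direction looks to contain a genuine obstacle: the upper bound is a one-line construction once one observes that the two indegree intervals are disjoint, and the lower bound is a two-line averaging argument combined with the pigeonhole split $e_1 + e_2 = n_1 n_2$. The only point that deserves care is ensuring that using $\po(G_2)$ (rather than $\Delta(G_2)$) in the upper bound construction is safe, which follows from $\po(G_2) \le \Delta(G_2) \le n_2 - 1 < n_2$.
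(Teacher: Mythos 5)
Your proposal is correct and follows essentially the same route as the paper: the upper bound via orienting all join edges to one side (making the two indegree ranges disjoint), and the lower bound via the averaging/pigeonhole argument on whichever side receives at least half of the $n_1 n_2$ join edges. No issues.
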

\begin{proof}
To prove the upper bound, it suffices to take proper $\po(G_i)$-orientations of $G_i$, for each $i\in\{1,2\}$, and either orient all edges of $[V(G_1),V(G_2)]$ toward their endpoint in $V(G_1)$ or all edges toward their endpoint in $V(G_2)$.
Note that such an orientation $D$ is proper as either $d_ D^-(v_1)\leq \po(G_1)\leq \Delta(G_1)\leq n(G_1)-1$ and $n(G_1)\leq d_ D^-(v_2)\leq n(G_1)+\po(G_2)$, for each $v_1\in V(G_1)$ and each $v_2\in V(G_2)$; or the analogous holds when we orient all edges in $[V(G_1),V(G_2)]$ toward $G_1$.

Now, let $m_i = |E(G_i)|$, for each $i\in\{1,2\}$. To prove the lower bound, since $G$ is the join of two graphs $G_1$ and $G_2$, at least half of the edges in $[V(G_1),V(G_2)]$ must be oriented towards $G_1$ or towards $G_2$. Let $D$ be a proper $\po(G)$-orientation of $G$ and suppose that these edges are oriented towards $G_2$ (the other case is analogous). Then, we have that:
\[\po(G) =\max_{v\in V(G)} d_D^-(v)\geq \max_{v_2\in V(G_2)} d_D^-(v_2) \geq \frac{\sum_{v_2\in V(G_2)} d_D^-(v_2)}{n_2} \geq \left(m_2+\frac{n_1\cdot n_2}{2}\right)\cdot \frac{1}{n_2} = \Ad(G_2)+\frac{n_1}{2}.\]
\end{proof}

Note that the upper bound is tight when $G$ is complete.

\section{Further Research}
\label{sec:further}

We provided upper bounds for split graphs and cographs. As we mentioned, the computational complexity of determining $\po(G)$ when $G$ is split or cobipartite is still open~\cite{ACD+15}. We further propose the following problem:

\begin{problem}
What is the computational complexity of determining $\po(G)$, when $G$ is a cograph?
\end{problem}

As discussed in Section~\ref{sec:chordal}, a $\mathcal{O}(2^{k^2} \cdot k^{3k+1})$-kernel exist to \pname{Parameterized Proper Orientation} when restricted to chordal graphs. One could try to improve such kernel, although, as we commented, no polynomial kernel may exist, unless ~{\NP$ \ \subseteq$ \ \coNP/\poly}.

Concerning bounds, we believe that finding a constant upper bound for (outer)planar graphs, a question posed in~\cite{ACD+15}, is one of the most challenging open problems about this parameter. 

We provided upper bounds to some subclasses of chordal graphs $G$ in function of $\omega(G)$. Our bound for split graphs and our remark on cobipartite graphs answer Problem 9 in~\cite{ACD+15}. We suggest the following related question:

\begin{problem}
Does there exist a constant $c$ such that $\po(G)\leq c \cdot\omega(G)$, for any chordal graph $G$?
\end{problem}

In particular, we show that such constant exist for the case of $k$-uniform block graphs. Since the problem is not monotone, one may wonder whether $\po(G)\leq 3 \omega(G) -2 $ holds for every block graph $G$.



\bibliographystyle{plain}
\bibliography{proper-chordal}

\end{document}